\newcolumntype{P}[1]{>{\centering\arraybackslash}p{#1}}
 \newcommand{\ind}{\perp\!\!\!\!\perp} 
\newcolumntype{P}[1]{>{\centering\arraybackslash}p{#1}}
\theoremstyle{plain}
\newtheorem{thm}{Theorem}
\newtheorem{prop}{Proposition}
\begin{document}

\begin{frontmatter}

\title{Identifying Peer Influence in Therapeutic Communities Adjusting for Latent Homophily}

\runtitle{Identifying Peer Influence}

\begin{aug}

\author[A]{\fnms{Shanjukta}~\snm{Nath}\ead[label=e1]{shanjukta.nath@uga.edu}},
\author[B]{\fnms{Keith}~\snm{Warren}\ead[label=e2]{warren.193@osu.edu}}
\and
\author[C]{\fnms{Subhadeep}~\snm{Paul}\ead[label=e3]{paul.963@osu.edu}}
\address[A]{Department of Agricultural and Applied Economics, University of Georgia, \printead[presep={,\ }]{e1}}

\address[B]{College of Social Work, The Ohio State University \printead[presep={,\ }]{e2}}

\address[C]{Department of Statistics, The Ohio State University \printead[presep={,\ }]{e3}}

\end{aug}

\begin{abstract}
We investigate peer role model influence on successful graduation from Therapeutic Communities (TCs) for substance abuse and criminal behavior. We use data from 3 TCs that kept records of exchanges of affirmations among residents and their precise entry and exit dates, allowing us to form peer networks and define a causal effect of interest. The role model effect measures the difference in the expected outcome of a resident (ego) who can observe one of their peers graduate before the ego's exit vs not graduating. To identify peer influence in the presence of unobserved homophily in observational data, we model the network with a latent variable model. We show that our peer influence estimator is asymptotically unbiased when the unobserved latent positions are estimated from the observed network. We additionally propose a measurement error bias correction method to further reduce bias due to estimating latent positions. Our simulations show the proposed latent homophily adjustment and bias correction perform well in finite samples. We also extend the methodology to the case of binary response with a probit model. Our results indicate a positive effect of peers' graduation on residents' graduation and that it differs based on gender, race, and the definition of the role model effect. A counterfactual exercise quantifies the potential benefits of an intervention directly on the treated resident and indirectly on their peers through network propagation.
\end{abstract}

\begin{keyword}
\kwd{Networks}
\kwd{Peer Influence}
\kwd{Latent Homophily}
\kwd{Causal Inference}
\end{keyword}

\end{frontmatter}

\section{Introduction}

In application problems from social sciences, social work, economics, and public health, a common form of data collected is a network along with node-level responses and attributes. Such node-level attributes might include responses to questions in a survey, behavioral outcomes, economic variables, and health outcomes, among others.  A fundamental scientific problem associated with such network-linked data is \textit{identifying the peer influence} network-connected neighbors exert on each other's outcomes. 

Therapeutic communities (TCs) for substance abuse and criminal behavior are mutual aid-based programs where residents are kept for a fixed amount of time and are expected to graduate successfully at the end of the program. An important question in TCs is whether the propensity of a resident to graduate successfully is causally impacted by the peer influence of successful graduates or role models among their social contacts. This research aims to quantify the causal role model effect in TCs and develop methods to estimate it using individual-level data from 3 TCs in a midwestern city in the United States.

There is more than half a century of research on understanding how the network and the attributes affect each other \citep{lazer2010coevolution,manski1993identification,lazer2001co,shalizi2011homophily,christakis2007spread}. 
For example, researchers have found evidence of social or peer influence on employee productivity, wages, entrepreneurship \citep{chan2014compensation,herbst2015peer,baird2023optimal,zimmerman2019elite}, school and college achievement \citep{sacerdote2011peer,denning2023class}, criminal behavior \citep{stevenson2017breaking,bayer2009building,gaviria2001school,deming2011better}, patterns of exercising \citep{aral2017exercise}, physical and mental health outcomes \citep{kiessling2023long,christakis2007spread,fowler2008dynamic}. On the other hand, biological and social networks have been demonstrated to display homophily or social selection, whereby individuals who are similar in characteristics tend to be linked in a network \citep{mcpherson2001birds,lazer2010coevolution,dean2017friendship,shalizi2011homophily,vanderweele2011sensitivity}.

Estimating \textit{causal} peer influence has been an active topic of research with advancements in methodologies reported in \cite{manski1993identification,shalizi2011homophily,vanderweele2011sensitivity,vanderweele2013social,bramoulle2009identification}. It has been shown that peer effects can be identified avoiding the reflection problem \cite{manski1993identification} if the social network is more general than just a collection of connected peer groups \citep{bramoulle2009identification,goldsmith2013social}. However, a possible confounder is an unobserved variable (latent characteristics) that affects both the response and selection of network neighbors, creating omitted variable bias. Peer influence cannot generally be separated from this latent homophily, using observational social network data without additional methodology \citep{shalizi2011homophily}.

Peer influence is a core principle in substance abuse treatment, forming the basis of mutual aid-based programs, including 12-Step programs \citep{white2008twelve}, recovery housing \citep{jason2022dynamic,mericle2023social} and TCs \citep{gossop2000therapeutic}. However, quantitative analyses have nearly always failed to distinguish between peer influence and homophily.  It is known that people who are successful in recovering from substance abuse are likely to have social networks that include others in recovery \citep{best2019we,roxburgh2023composition}, but these studies have not distinguished between peer influence and homophily.  In the specific case of TCs, there is evidence that program graduates cluster together \citep{warren2020building}, but this analysis also did not distinguish between peer influence and homophily.  In a study that analyzed a longitudinal social network of friendship nominations in a TC using a stochastic actor-oriented model \citep{snijders2017stochastic}, it was found that resident program engagement was correlated with that of peers, but that homophily and not peer influence explained the correlation \citep{kreager2019evaluating}.  The question of whether peers influence each other \textit{causally} in mutual aid-based programs for substance abuse, therefore, remains unresolved. 

Among substance abuse programs that emphasize mutual aid, the most highly structured and professionalized are TCs.  
These are residential treatment programs for substance abuse based on mutual aid between recovering peers. They typically have a maximum time period for treatment (180 days), which, along with the residential nature of the program, ensures a reasonably stable turnover of participants.  The clinical process primarily occurs through social learning within the network of peers who are themselves in recovery \citep{gossop2000therapeutic,yates2017integration}.  Peers who exemplify recovery and prosocial behavior and who are most active in helping others are known as role models \citep{gossop2000therapeutic}.  The professional staff's job is to encourage recovery-supportive interactions between residents while also demonstrating prosocial behavior in line with TC values \citep{gossop2000therapeutic}. 

This paper aims to estimate peer influence, separating it from latent homophily in TCs. The data we analyze comes from three 80-bed TC units. One is a unit for women, and the other two are for men. The units are minimum security correctional units for felony offenders. The offenses of the residents include possession of drugs, robbery, burglary, and domestic violence, among others. All units kept records of each resident's entry and exit dates, along with written affirmations and corrections exchanged that form the basis of our social networks. Affirmations are written records of appreciation for good behavior that one resident gives to another. This could include doing a good job on a routine task or helping with others' recovery. Some examples of affirmations from the dataset used in this paper are having prosocial fun on New Year's Eve, helping another resident on the computer, sharing personal information during phase group, and doing a good job in the kitchen with a great attitude. These networks, along with the precise entry and exit dates, aid us in correctly identifying the role models to estimate the causal parameter of interest.

The learnings from this paper can help TC clinicians design network interventions that might increase the graduation rate. For example, one can assign a successful ``buddy" to the ``at-risk" TC residents if network influence meaningfully impacts graduation status. On the contrary, if latent homophily explains the correlation between the graduation status of network-connected neighbors, then such an intervention is less likely to impact graduation positively. Therefore, it is crucial to separate the two effects.  The same principle applies to a variety of interventions that could be applied in TCs and other mutual aid-based systems that aim at recovery from substance abuse.

\begin{figure}[!htbp] 
  \caption{Peer effects in TC}
    \label{introfig} 

 \begin{minipage}[b]{0.5\linewidth}
       \begin{center}

    \includegraphics[width=\linewidth]{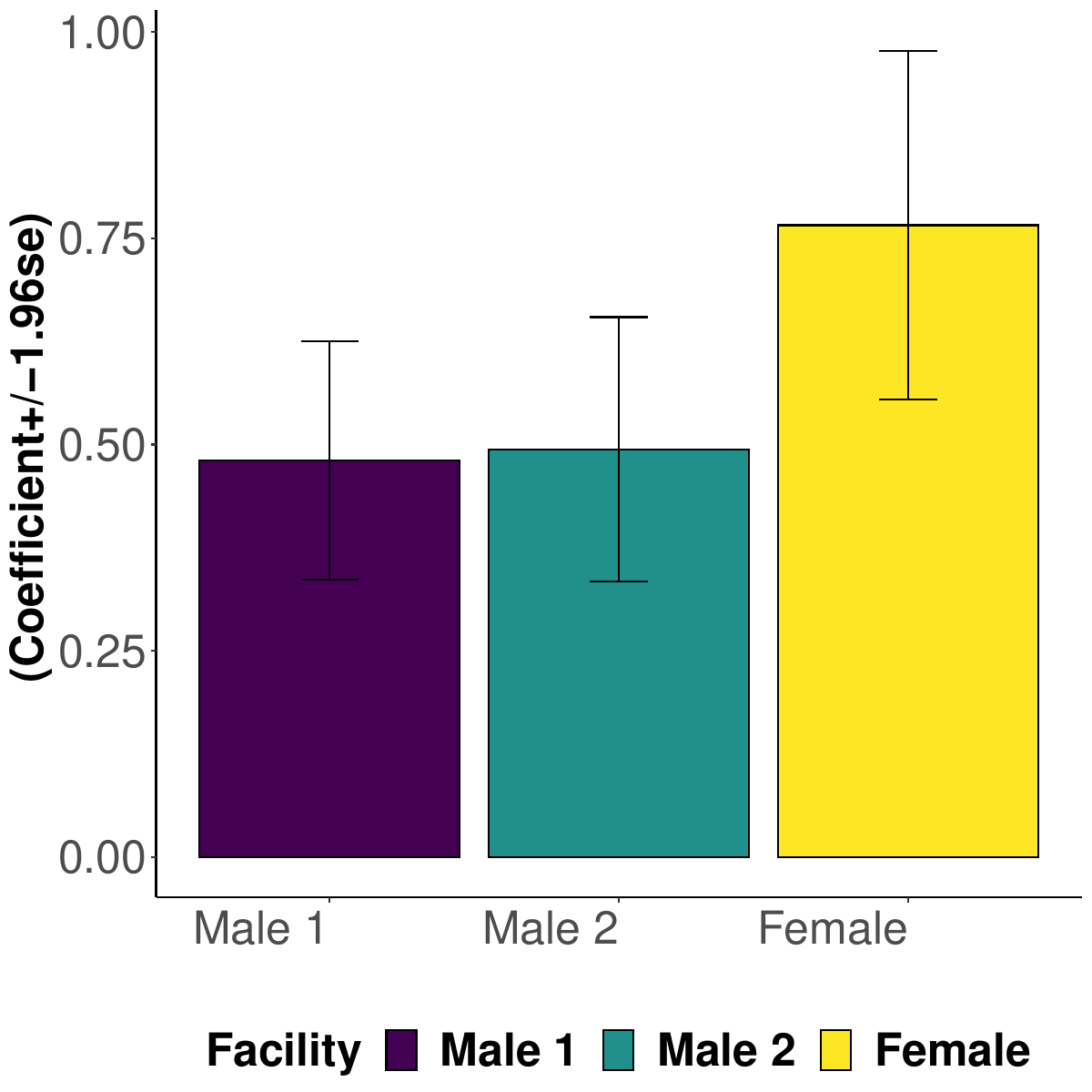}
    \caption*{a. Estimated peer effect} 
        \end{center}

  \end{minipage}
  \begin{minipage}[b]{0.5\linewidth}
       \begin{center}

    \includegraphics[width=\linewidth]{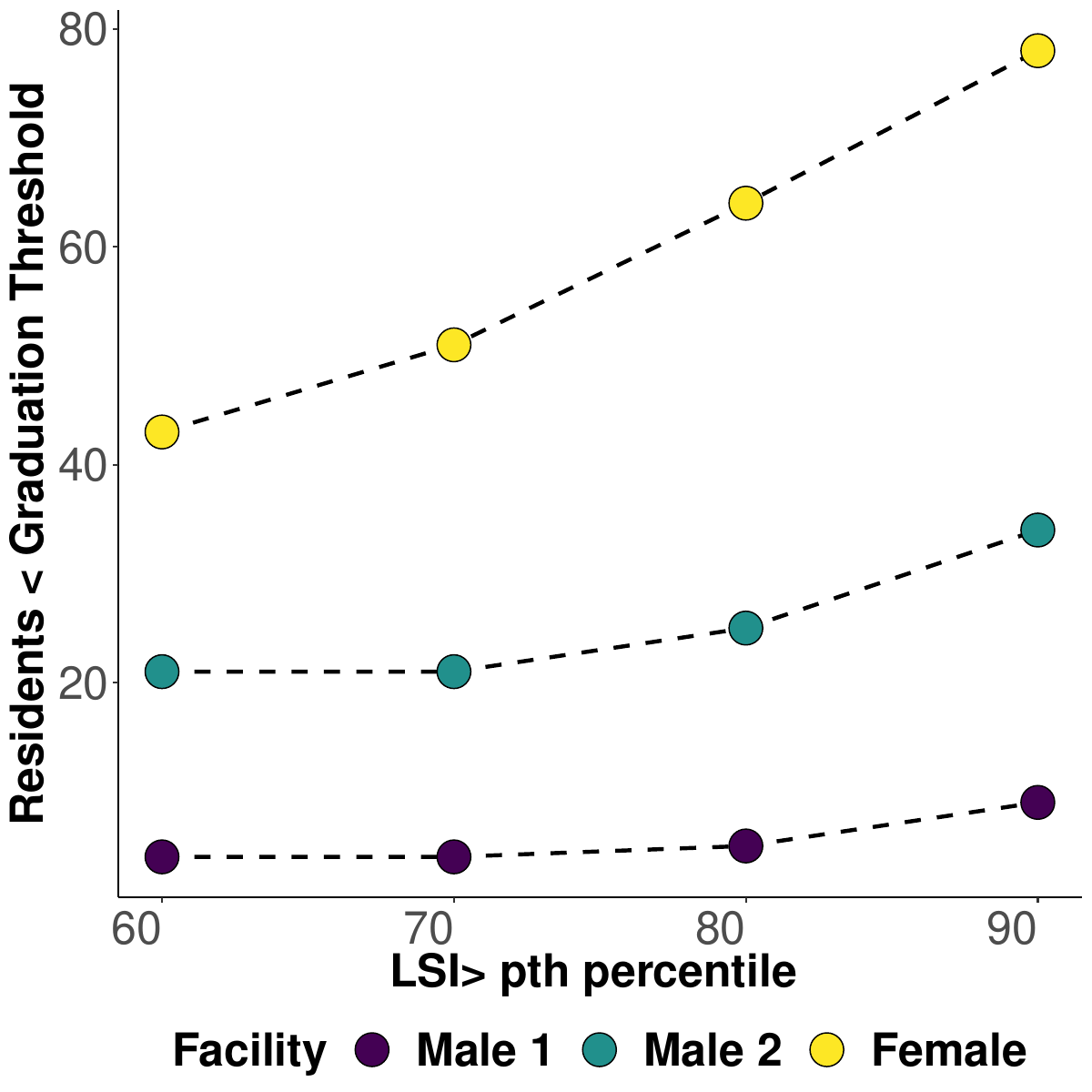}
    \caption*{b. Counterfactual Intervention} 
        \end{center}

  \end{minipage}
  \begin{minipage}{14.0cm}
\footnotesize{
    {Notes: Panel (a) displays the estimated causal peer effect from Table \ref{rhotable} (columns 2, 4, and 6) for the three TCs. Panel (b) shows the remaining residents below the graduation threshold post \textit{counterfactual} interventions for four Level of Service Inventory (LSI) cut-offs. LSI is a covariate collected at the time of entry into the TC, and it is negatively correlated with successful graduation from TC.}}
    \end{minipage}
\end{figure}
Recently, several authors have put forth solutions to the problem of separating peer influence from latent homophily in observational data. The methodology includes using latent communities from a stochastic block model (SBM) \citep{mcfowland2021estimating}, and joint modeling an outcome equation and social network formation model \citep{goldsmith2013social,hsieh2016social}. There is another strand of research on estimating the effects of peer attributes on an agent's outcome with randomized peer groups \cite{li2019randomization,basse2024randomization}, or spillovers in exogenous treatments with randomized treatment assignments \cite{toulis2013estimation}.  

These papers focus on a setup where there is a random group formation, and the authors have developed randomization-based inference frameworks to estimate the spillovers in peer attribute or treatment on the agent's response. 
These papers are motivated by examples such as random 2 or 4 person dorm room assignments  or the random assignment of CEOs to group meetings. The primary difference between the setup and the empirical application discussed in this paper to the above-stated literature is that our residents do not interact in randomly assigned peer groups. Instead, our network is constructed based on whom the residents interact with on their own. Hence we must consider the possibility that the residents choose their peers based on some unobserved (latent) characteristics. The frameworks of \cite{li2019randomization,basse2024randomization} might be relevant in other settings of mutual aid-based substance use treatment, e.g., recovery houses if we assume that residents only interact with their housemates and not with residents of other houses within the same agency. However, the open interaction structure of Therapeutic Communities means that it is not possible to find such limitations on interaction. Therefore, it is not clear how one can create peer groups in order to define peer effect in terms of the composition of those groups. Second, the randomized inference procedures in \cite{li2019randomization,basse2024randomization} crucially depend on the assumption that the peers are randomly allocated and not selected by the students. This assumption is violated in our settings since it is very likely that the residents choose whom they want to interact with.

We develop a new method to estimate peer influence by adjusting for homophily by combining a latent position random graph model with measurement error models. We model the network with a random dot product graph model (RDPG) and estimate the latent homophily factors from a Spectral Embedding of the Adjacency matrix (ASE).  We then include these estimated factors in our linear peer influence outcome model. A measurement error bias correction procedure is employed while estimating the outcome model. The methodology is then extended to the context of binary responses with a probit regression model.

Our approach improves upon the asymptotic unbiasedness in \cite{mcfowland2021estimating} by proposing a method for reducing bias in finite samples. Further, we model the network with the RDPG model, a latent position model that is more general than the SBM. We provide an expression for the upper bound on the bias as a function of $n$ and network sparsity. Our simulation results show the method is effective in providing an estimate of peer influence parameter with low bias both when the network is generated from RDPG and the SBM models.

In the context of TCs, we first define a causal ``role model influence'' parameter with the help of $do-$ interventions \citep{pearl2009causality}. We then illustrate the need for adjusting for homophily with the help of a directed acyclic graph (DAG) \citep{pearl2009causality,hernan2009invited}. We show that the peer influence parameter in our model is equivalent to this role model influence. 

Our results show significant peer influence in all TC units (Panel (a) in Figure \ref{introfig}). However, there are significant differences in peer effect by gender and race \citep{bostwick2022nevertheless,gong2021gender,gershenson2022long}. We see a substantial reduction in the peer influence coefficient ($\approx$ 19\% drop) once we correct for latent homophily and adjust for the bias in the female correctional unit. On the contrary, we see marginal changes in the two male units. Additionally, the strength of peer influence is at least 30\% higher in the female unit than in the male unit. We perform a series of robustness checks to ensure the validity of our results (probit regression, binarizing network, additive and multiplicative latent effects model, an alternative definition of role model effect, and directed sender and receiver networks). We also explore heterogeneous effects by race. We construct separate peer variables for white and non-white peer affirmations and interact these with the white dummy. We find that peer graduation of both races impacts resident's graduation positively.

Finally, we do a \textit{counterfactual} analysis to estimate the direct and the cascading effect of a do-intervention of assigning a successful buddy to ``at-risk" residents in the unit. At the beginning of the data collection process, a variable called Level of Service Inventory (LSI) is measured, which is negatively correlated with graduation status (see Figure \ref{LSIgrad}). We use LSI as a proxy for targeting the intervention and compute the overall impact on residents. Panel (b) in Figure \ref{introfig} suggests that such an intervention can be useful for policymakers to improve graduation from these units as the intervention propagates positively through the network.

 \section{Methods}
\label{methods}

\subsection{Data and Statistical Problem Setup}

In the TCs, the residents dynamically enter and exit the units over time. We use the notation $i \in \{1,2,..,n\}$ to denote a resident in TC. We observe timestamps for entry and exit for every resident in the TC. The timestamps are critical to our identification. The entry and exit dates for the $i$th resident are denoted by $t_{i}^{\text{entry}}\in \mathcal{T}$ and $t_{i}^{\text{exit}} \in \mathcal{T}$ respectively, where $\mathcal{T} = \{1,...,\tau\}$. We define another variable $T_{i}= t_{i}^{\text{exit}} -t_{i}^{\text{entry}} $, which measures the time spent in TC.

\begin{figure}[!htbp]
    \centering
    \caption{Time spent in Therapeutic Community}
    \label{timefig1}
    \begin{center}
    \begin{minipage}{.50\textwidth}
     \begin{tabular}[b]{@{}cc@{}}

\begin{tikzpicture}[scale=1]
\begin{scope}[>=latex] 
\draw [line width=0.7pt][<->](-0.5,0)--(9,0); 
\end{scope}
\draw [pattern={Lines[angle=45,distance={3pt/sqrt(2)}]},pattern color=blue,line width=0.6pt](3,0)--(3,0.5)--(6,0.5)--(6,0)--cycle;
\draw [pattern={Lines[angle=135,distance={3pt/sqrt(2)}]},pattern color=red,line width=0.6pt](1,0)--(1,0.5)--(4.5,0.5)--(4.5,0)--cycle;
\pgftext[base,x=1cm,y=-0.4cm] {\small $t_{i}^{\text{entry}}$};
\pgftext[base,x=3 cm,y=-0.4cm] {\small $t_{j}^{\text{entry}}$};
\pgftext[base,x=6 cm,y=-0.35cm] {\small $t_{j}^{\text{exit}}$};
\pgftext[base,x=4.5cm,y=-0.35cm] {\small $t_{i}^{\text{exit}}$};
\end{tikzpicture}
\\\\
\textbf{(a) Overlap}
\end{tabular}
\vspace{0.5cm}
\end{minipage}
    \begin{minipage}{.50\textwidth}
     \begin{tabular}[b]{@{}cc@{}}

\begin{tikzpicture}[scale=1]
\begin{scope}[>=latex] 
\draw [line width=0.7pt][<->](-0.5,0)--(9,0); 
\end{scope}
\draw [pattern={Lines[angle=45,distance={3pt/sqrt(2)}]},pattern color=blue,line width=0.6pt](1,0)--(1,0.5)--(4.5,0.5)--(4.5,0)--cycle;
\draw [pattern={Lines[angle=135,distance={3pt/sqrt(2)}]},pattern color=red,line width=0.6pt](3,0)--(3,0.5)--(6,0.5)--(6,0)--cycle; 
\pgftext[base,x=1cm,y=-0.4cm] {\small $t_{j}^{\text{entry}}$};
\pgftext[base,x=3 cm,y=-0.4cm] {\small $t_{i}^{\text{entry}}$};
\pgftext[base,x=6 cm,y=-0.35cm] {\small $t_{i}^{\text{exit}}$};
\pgftext[base,x=4.5cm,y=-0.35cm] {\small $t_{j}^{\text{exit}}$};
\end{tikzpicture}
\\\\
\textbf{(b) Overlap}
\end{tabular}
\vspace{0.5cm}
\end{minipage}
   \begin{minipage}{.50\textwidth}
     \begin{tabular}[b]{@{}cc@{}}

\begin{tikzpicture}[scale=1]
\begin{scope}[>=latex] 
\draw [line width=0.7pt][<->](-0.5,0)--(9,0); 
\end{scope}
\draw [pattern={Lines[angle=45,distance={3pt/sqrt(2)}]},pattern color=blue,line width=0.6pt](5,0)--(5,0.5)--(6.3,0.5)--(6.3,0)--cycle;
\draw [pattern={Lines[angle=135,distance={3pt/sqrt(2)}]},pattern color=red,line width=0.6pt](1,0)--(1,0.5)--(4.0,0.5)--(4.0,0)--cycle;
\pgftext[base,x=1cm,y=-0.4cm] {\small $t_{i}^{\text{entry}}$};
\pgftext[base,x=5 cm,y=-0.4cm] {\small $t_{j}^{\text{entry}}$};
\pgftext[base,x=6.3 cm,y=-0.35cm] {\small $t_{j}^{\text{exit}}$};
\pgftext[base,x=4.0cm,y=-0.35cm] {\small $t_{i}^{\text{exit}}$};
\end{tikzpicture}
\\\\
\textbf{(b) No Overlap}
\end{tabular}
\end{minipage}
\end{center}
\end{figure}

The time in TC for the residents is depicted in Figure \ref{timefig1}. The entry and exit dates for residents $i$ and $j$ can vary. In Figure \ref{timefig1} (a) and (b), the time resident $i$ spent in the unit is shaded in red, and the time resident $j$ spent is shaded in blue. The overlap in their time spent in TC is shown by the time between $t_{j}^{\text{entry}}$ and $t_{i}^{\text{exit}}$. Since $i$ and $j$ overlap, they can enter each other's network if they exchange affirmations during their time together in the facility. Figure \ref{timefig1} (a) and (b) are two example scenarios of overlap. The real data can involve more scenarios for overlap between the two residents. However, on the contrary, in Figure \ref{timefig1} (c), there is no overlap between the two residents' time in TC. This would mean that, in this case, residents $i$ and $j$ cannot be connected in the peer networks.

In our empirical setting, the entry dates span a period of three years between 2005 and 2008 (see Figure \ref{time} (a)). A resident must leave the TC within 180 days, either with successful graduation or failure to graduate. Residents may leave early from the program, both with successful and failed graduation. However, most residents stay in the program longer, as shown in the boxplots of Figure \ref{time} (b), with the median time of stay being 149 and 150 in the two male units and 124 in the female unit. In all units, there were a few residents who left very early. We provide more details on the relationship between graduation and time spent in TC in section \ref{dataempiricalsection} along with the reasons for leaving the TCs early.

\begin{figure}[!htbp] 
  \caption{Variation in entry dates and time spent with the peers in TCs}
    \label{time} 

 \begin{minipage}[b]{0.5\linewidth}
       \begin{center}

    \includegraphics[width=\linewidth]{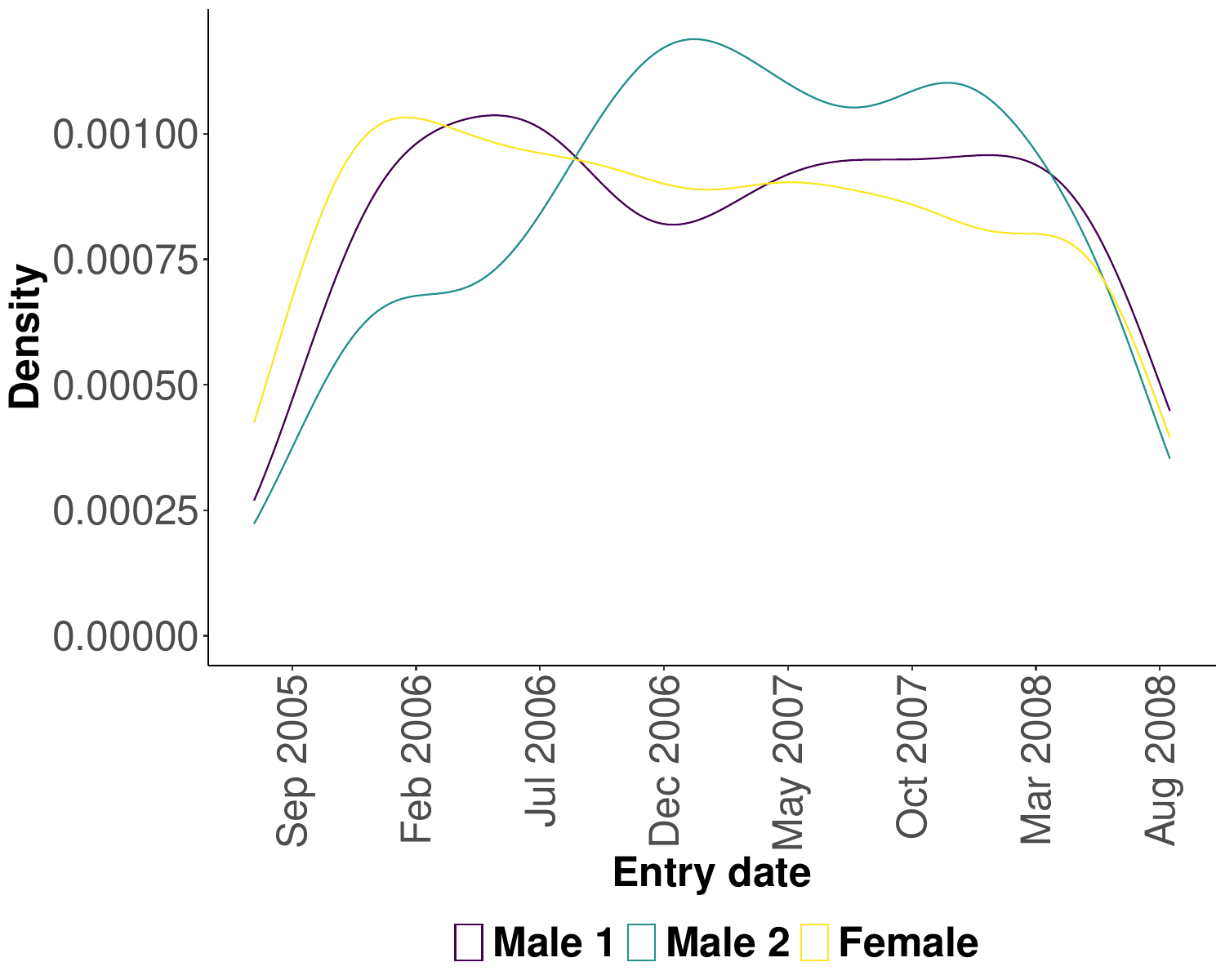}
    \caption*{a. Entry Date} 
        \end{center}

  \end{minipage}
  \begin{minipage}[b]{0.5\linewidth}
       \begin{center}

    \includegraphics[width=\linewidth]{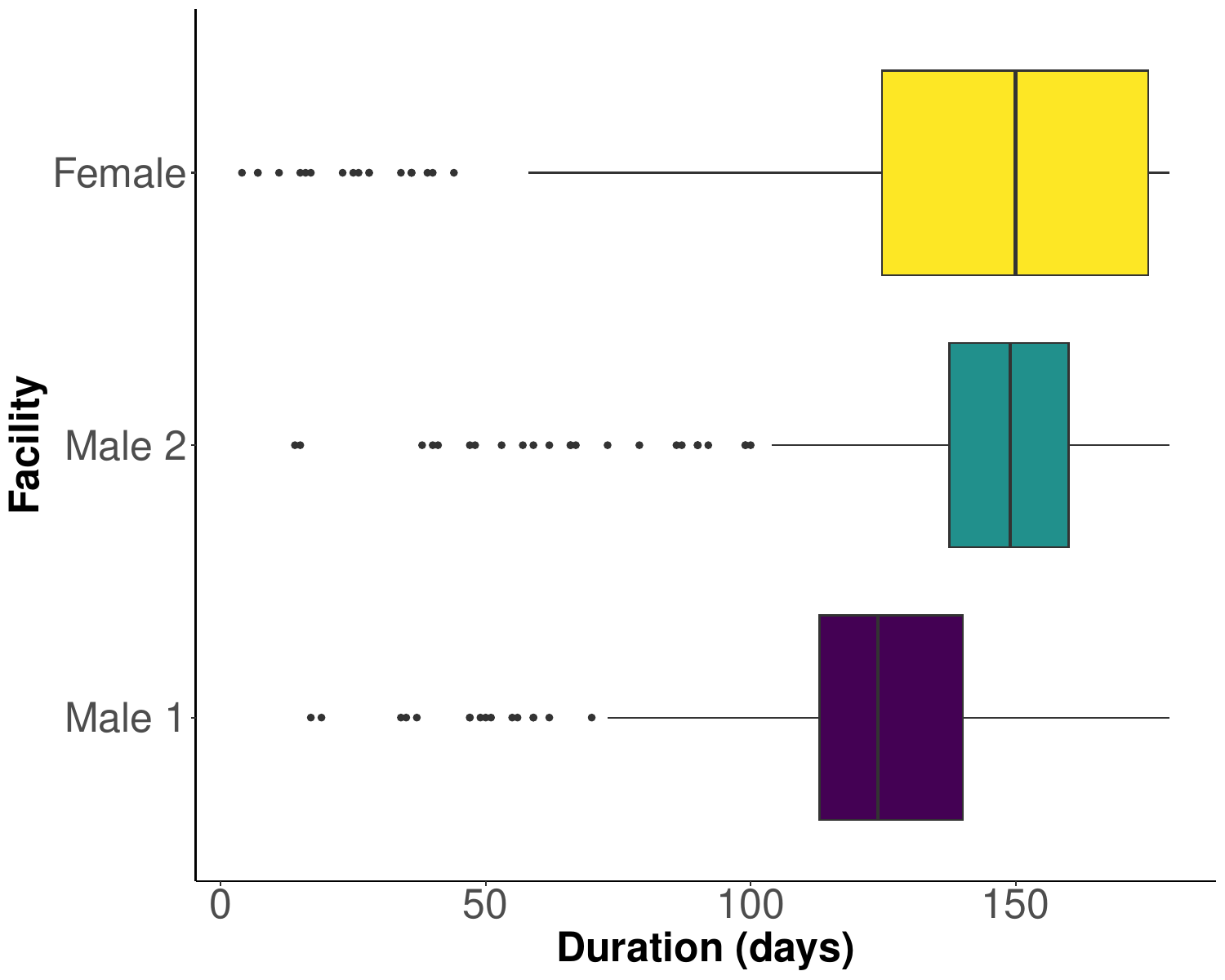}
    \caption*{b. Time in TC} 
        \end{center}

  \end{minipage}
  \begin{minipage}{15.5cm}
\footnotesize{
    {Notes: Panel (a) shows the entry dates of residents in each community across the three-year period. Panel (b) illustrates the distribution of the residents' duration spent in the TCs. }}
    \end{minipage}
\end{figure}
Let the random variable $S_i \in \{0,1\}$ denote the final graduation status of individual $i$ within the study population, with the value 1 denoting successful graduation and 0 a failure to graduate. This study aims to understand the effect of receiving affirmations from peers whom the \textit{resident can observe} to be successful in graduating on their own graduation. We call this the ``role model effect''.

Let $A$ denote the $n \times n $ weighted (assumed undirected for easier exposition) adjacency matrix composed of the counts of affirmations, i.e., $A_{ij}$ records the number of affirmations node $i$ has sent (and/or received) to (from) node $j$. Note that $i$ can send (receive) affirmations to $j$ only during the time they are both part of the TC. Therefore, we use the term that $j$ is a ``peer" of $i$ if $A_{ij}>0$, i.e., there is at least one affirmation exchanged between $i$ and $j$. An entry $A_{ij}$ maybe 0 because $i$ and $j$ are contemporary, and $i$ did not exchange affirmations with $j$ or because they are not contemporary. Note that our framework does not require the network to be undirected and can work similarly with a directed network. In the data analysis section, we report results for both this undirected network and when directed networks are formed only based on sending affirmations or receiving affirmations. 

We define the variable $Y^{(i)}_j$ as the outcome of individual $j$ as observed by $i$ at time $t_i^{exit}$. For a contemporary resident $j$, if $j$ has already left the TC (whether successful or unsuccessful) by time $t_i^{exit}$, then $Y^{(i)}_j = S_j$, and otherwise we set $Y^{(i)}_j = 0$. Formally, $Y^{(i)}_j$ can be defined as follows
\[
Y^{(i)}_j = \begin{cases}
    S_j & \text{ if } t_i^{entry} < t_{j}^{exit} < t_i^{exit} \\
    0 & \text{ if } t_j^{entry}< t_i^{exit} < t_{j}^{exit}\\
    99 & \text{otherwise},
\end{cases}
\]
where 99 is used as some arbitrary number. In the above definition, we set $Y_{j}^{i}$ to be 99 for subjects $j$ who did not overlap with $i$, either because they left the facility before $i$ entered, i.e., $t_j^{exit}<t_i^{entry}$, or because they entered the facility after $i$ left, i.e., $t_i^{exit}<t_j^{entry}$.  This is because  $i$ technically cannot ``observe" the outcome of anyone who left the TC before $i$ entered or entered the TC after $i$ left, and therefore, this value can be set to any arbitrary number. In what follows, for defining the causal estimands of role model effect, we will always condition on $A_{ij}>0$, which restricts to the population who are contemporary to $i$ and interacted with it. Clearly, the subjects who have left the TC even before $i$ entered or entered after $i$ exited cannot interact with $i$.

\subsection{The role model effect estimand}
\label{rolemodeleffect}

We assume that for each unit $i$, there is a $d$ dimensional vector of unobserved latent characteristics $U_i$ that affect both the outcome $S_i$ and the formation of network $A$. The average role model effect cannot be estimated without additional methodology from our observational data due to this unobserved homophily. We explain the problem using the following Directed Acyclic Graph (DAG) \citep{pearl2009causality,mcfowland2021estimating,sridhar2022estimating}. The vertices of the graph represent various random variables (Figure \ref{fig:DAG}). We use the circles in the vertices for unobserved variables and squares for the observed variables.

\usetikzlibrary{positioning}
\tikzset{
    > = stealth,
    every node/.append style = {
        text = black
    },
    every path/.append style = {
        arrows = ->,
        draw = black,
        fill = black
    },
    hidden/.style = {
        draw = black,
        shape = circle,
        inner sep = 1pt
    }
    ,
    hidden1/.style = {
        draw = black,
        shape = rectangle    }
}
\begin{figure}[!htbp]
    \caption{Directed Acyclic Graph for Role Model Effect in TC}
    \label{fig:DAG}

    \begin{center}
    \tikz{
    \node[hidden1] (a) {$Y_{j}^{i}$};
    \node[hidden1] (z) [right = of a] {$S_{i}$};
    \node[hidden1] (b) [above right = of z] {$T_{i}$};
    \node[hidden1] (x) [below left = of z] {$T_{j}$};
    \node[hidden1] (c) [below right = of x] {$A_{ij}$};
        \node[hidden] (d) [left = of x] {$U_{j}$};
                \node[hidden] (e) [right = of b] {$U_{i}$};
    \path (e) edge (c);
    \path (d) edge (a);
        \path (d) edge (x);
    \path (a) edge (z);
    \path (b) edge (z);
    \path (c) edge (z);
    \path (d) edge (c);
       \path (x) edge (a);
              \path (e) edge (b);
       \path (e) edge (z);
}
\end{center}
\end{figure}

Generally speaking, we are interested in the causal effect of $Y_j^i$ on $S_i$ conditioning on $A_{ij}$. In the DAG, the causal path is 
\[
Y_{j}^{i} \rightarrow S_{i}.
\]
However, we can see that there are several non-causal or backdoor paths between $Y_{j}^{i}$ and $S_{i}$. These are paths with an incoming arrow into $Y_j^i$ \citep{pearl2009causality}. We can enumerate them below.
\begin{enumerate}
    \item $Y_{j}^{i} \leftarrow U_{j} \rightarrow A_{ij}\rightarrow S_{i}$
        \item $Y_{j}^{i} \leftarrow T_j \leftarrow  U_{j}\rightarrow A_{ij} \rightarrow S_{i}$
        \item $Y_{j}^{i} \leftarrow U_{j} \rightarrow A_{ij}\leftarrow U_{i} \rightarrow S_{i}$
        \item $Y_{j}^{i} \leftarrow  U_{j} \rightarrow A_{ij}\leftarrow U_{i}\rightarrow T_{i} \rightarrow S_{i}$       
        \item $Y_{j}^{i}  \leftarrow T_{j} \leftarrow U_{j} \rightarrow A_{ij} 
        \leftarrow U_i
        \rightarrow S_{i}$
        \item $Y_{j}^{i} \leftarrow T_{j} \leftarrow U_{j} \rightarrow A_{ij} 
        \leftarrow U_i
        \rightarrow T_i
        \rightarrow S_{i}$
\end{enumerate}

To estimate the effect along the causal path, we would like to close these non-causal backdoor paths. Among these backdoor paths, the last 4 paths would have been blocked due to the presence of the collider variable $A_{ij}$ (two incoming arrows into the variable,  \cite{pearl2009causality}). However, if we were to condition it on a collider, then that would unblock the backdoor path. Therefore, it is our conditioning on $A_{ij}$ that opens up the backdoor paths 3-6. This has also been noted in \cite{mcfowland2021estimating,sridhar2022estimating}. On the other hand, conditioning on $U_{i}, U_{j}$ closes all the six backdoor paths and allows us to identify the causal effect. 
Also, conditioning on $U_{i}, U_{j}$ does not unblock any other back door path as it is not a collider on any back door paths. In summary, our conditioning set comprises of $C=\{U_{i}, U_{j}, A_{ij}\}$ which satisfies the two backdoor criteria i) no node in $C$ contains any descendent of $Y_{j}^{i}$ (no post-treatment bias) and ii) $C$ blocks all the backdoor paths between $Y_{j}^{i}$ and $S_{i}$.

The causal ``role model'' effect that an individual $i$ receives from an individual $j$ (the role model)  can be described as follows. 
\begin{equation} 
\label{eqrolemodel}
\phi_{ij} = \mathbb E[S_i | A_{ij} >0, do(Y^{(i)}_j = 1)] - \mathbb E[S_i | A_{ij} >0, do(Y^{(i)}_j = 0) ],
\end{equation}
Here, the causal effect is defined using the do-operator for $Y^{(i)}_j$ \cite{pearl2009causality}. In this context, we consider the possibility that $Y^{(i)}_j$ can be changed by the experimenter. The $do(Y^{(i)}_j = a)$ can be interpreted as an intervention where the experimenter fixes $Y^{(i)}_j$ at a value $a$ without disturbing any other covariate in the outcome model  \citep{sridhar2022estimating,pearl2009causality}. The intervention $do(Y^{(i)}_j = 1)$ is then interpreted as the intervention that $j$ is made to graduate successfully before $i$, while the intervention $do(Y^{(i)}_j = 0)$ is the intervention that $j$ is either made to fail or made to exit after $i$,  where $j$ and $i$ exchanged affirmations during their stay in the unit. The expectation operator for the first term on the right-hand side is over the outcome distribution if the experimenter makes the peer $j$ graduate successfully before $i$ leaves the TC. The second term's expectation is over the outcome distribution if the experimenter makes $Y^{(i)}_j =0$. Therefore, the quantity $\phi_{ij}$ can be considered the difference between expectations of successful graduation of $i$ caused by the peer $j$ being a successful graduate, i.e., being a role model for $i$. The total role model influence of all $i's$ peers on $i$ can be written as $\phi_i = \sum_j \phi_{ij}.$

We also define a second causal estimand by changing the definition of the intervention. In the second definition, we only consider the peers who have left the TC before $t_i^{exit}$ and define $Y_j^{(i)} = S_j$ if $t_j^{exit} <t_i^{exit}$. Then for each individual, their potential peer network only consists of individuals who have interacted with them but exited the unit before them. Accordingly we redefine the role model effect  $\phi_{ij}^{1}$ as follows:
\begin{equation} 
\label{eqrolemodel1}
 \phi_{ij}^{1}  = \mathbb E[S_i | A_{ij} >0, t_j^{exit}<t_i^{exit}, do(S_j = 1)] - \mathbb E[S_i | A_{ij} >0, t_j^{exit}<t_i^{exit}, do(S_j = 0) ],   
\end{equation}

For a ``peer" $j$ (i.e., $A_{ij}>0$), who has left the TC before the exit date of $i$ (and therefore $i$ can observe their graduation status), the intervention $do(S_j=1)$ is then the $j$th peer is made to graduate successfully, and the intervention $do(S_j=0)$ is that the peer is made to fail.

We will use the adjacency matrix to estimate a proxy for the unobserved homophily $U$. We denote this proxy variable by $U^{*}$. Note that controlling for $U^{*}$ in the regression instead of the true $U$ does not close the backdoor pathways. Therefore, we continue to see bias in the peer influence parameter and cannot capture the causal effect of interest (\cite{mcfowland2021estimating} called this trading a larger omitted variable bias for a smaller measurement error bias). Nevertheless, we will show that using $U^{*}$ estimated through spectral embedding of the observed network leads to an asymptotically unbiased estimation of peer influence. 

Now that we have laid out the core issue about the causal effect, we provide our structural model below. We observe an $n \times p$ matrix $Z$ of measurements of $p$ dimensional covariates at each node. Using the DAG, we then propose the following linear structural model for the outcome.
\begin{align}
   S_i &= \alpha_0 + \gamma^T Z_i + \rho \frac{\sum_j A_{ij} Y_j^{(i)}}{\sum_j A_{ij}} +\beta^T U_i + \epsilon_S.
   \label{dgp}
\end{align}
We note here that this linear peer effects model is similar to the longitudinal model in \cite{mcfowland2021estimating,sridhar2022estimating} and differs from the equilibrium peer effect model of \cite{bramoulle2009identification} where the response is simultaneously present on both sides of the model. For the second formulation of causal effect $\phi_{ij}^{1}$, when we write the outcome equation, even though we have the same variable $S$ on both left and right-hand side of the equation, due to carefully tracking the exit times of the residents, if $S_j$ is in the equation for $S_i$, then $S_i$ does not appear in the equation for $S_j$. This is because $S_{i}$ is observed at time $t_{i}^{\text{exit}}$ and $S_j$ happens before $t_{i}^{\text{exit}}$. We will further model the network $A$ as a low-rank model in the next section. The following proposition, which is similar to Proposition 3.1 in \cite{sridhar2022estimating}, shows that if we can identify the parameter $\rho$, we can identify the total role model effect on vertex $i$ given by $\phi_i = \sum_j \phi_{ij}$.
\begin{prop}
   Assume the data generating model in Equation \ref{dgp} and suppose the $U_i$s are observed. Let $\phi_i$ be the total role model effect on node $i$ defined earlier. Then $\phi_i = \rho$ for all $i$. 
   \label{lemma1}
\end{prop}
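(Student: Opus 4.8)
The plan is to evaluate the two interventional means in the definition of $\phi_{ij}$ by substituting the structural equation \eqref{dgp}, difference them, and sum over $j$, exploiting that the affirmation weights entering the peer term are row-normalized. The argument parallels Proposition~3.1 of \cite{sridhar2022estimating}.

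First I would pin down the meaning of the $do$-operator here. By the truncated-factorization semantics, $do(Y_j^{(i)}=y)$ replaces $Y_j^{(i)}$ by the constant $y$ in the structural equation for $S_i$ while leaving the joint law of every other input to that equation intact --- the covariates $Z_i$, the latent homophily $U_i$, the network $A$, and the remaining observed peer outcomes $\{Y_k^{(i)}:k\ne j\}$ --- because the DAG exhibits none of these as a descendant of $Y_j^{(i)}$. Writing $w_{ij}:=A_{ji}/\sum_k A_{ki}$ for the normalized weight and reading the conditioning as conditioning on the observed network $A$ (so that $\{A_{ji}>0\}$ simply marks $j$ as a peer of $i$, as \eqref{dgp} is used throughout the paper), this yields
\[
E\big[S_i \,\big|\, A,\, do(Y_j^{(i)}=y)\big] = \alpha_0 + \gamma^T E[Z_i\mid A] + \rho\Big(y\,w_{ij} + \sum_{k\ne j} w_{ik}\, E[Y_k^{(i)}\mid A]\Big) + \beta^T E[U_i\mid A].
\]

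Next I would subtract the $y=0$ instance from the $y=1$ instance. Every term not involving $y$ --- the intercept, the covariate and latent terms, and the $k\ne j$ portion of the peer average --- cancels, leaving $\phi_{ij} = \rho\,w_{ij}$; and for a non-peer ($A_{ji}=0$) the two interventions coincide, so $\phi_{ij}=0$ and the sum effectively runs over $i$'s actual peers. Finally, assuming $i$ is not isolated so that $\sum_k A_{ki}>0$,
\[
\phi_i = \sum_j \phi_{ij} = \rho \sum_j \frac{A_{ji}}{\sum_k A_{ki}} = \rho\cdot\frac{\sum_j A_{ji}}{\sum_k A_{ki}} = \rho .
\]

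I expect the only substantive step to be the first one: one must argue explicitly that $do(Y_j^{(i)}=y)$ is a surgical intervention that leaves $Z_i$, $U_i$, $A$, and the other $Y_k^{(i)}$ undisturbed --- precisely the structure asserted by the DAG, which should be invoked rather than assumed silently. The remaining points are bookkeeping that nonetheless has to be gotten right: the weights appearing in \eqref{dgp} must be the row-normalized ones so that $\sum_j w_{ij}=1$; the degenerate case of an isolated node $i$ is excluded; and the conditioning is taken with respect to the observed network, so that passing from the pairwise $\phi_{ij}$ to the aggregate $\phi_i$ is the trivial sum above rather than an awkward average of conditional expectations.
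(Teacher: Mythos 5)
Your proposal is correct and takes essentially the same route as the paper's proof: substitute the structural equation \eqref{dgp} into the two interventional expectations, observe that every term except $\rho A_{ji}/\sum_k A_{ki}$ cancels in the difference so that $\phi_{ij}=\rho\,w_{ij}$, and sum over $j$ using the row-normalization of the weights. The only cosmetic difference is that you condition on the full network $A$ while the paper organizes the same cancellation through the backdoor adjustment formula, conditioning on $U_i$ and marginalizing over $A_{i,(-j)}$ and $Z_i$; the substance is identical.
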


We further remark that both $Y_{j}^{i}$ and $t_{j}^{exit} $ are dependent on $U_j$. For $t_{j}^{exit}$, as we explained earlier, while the time when they enter the TC is unrelated to their latent traits, the decision to stay longer or leave early may be related to latent traits. Therefore it is plausible that $U_j$ impacts $t_{j}^{exit}$. The dependence of  $Y_{j}^{i}$ on $U_j$ can be seen by noting that, for contemporary residents, 
\[
Y_{j}^{i} =S_j \mathbbm{1}\{t_{j}^{exit} < t_i^{exit}\}
\]
where $\mathbbm{1}\{.\}$ denotes the indicator function. Since both the final graduation status $S_j$ and $t_{j}^{exit}$ depends on $U_j$, therefore $Y_{j}^{i}$ also depends on $U_j$.

Lastly, we remark that we assume that $Z_i \ind A_{ij}|(U_i, U_j)$, that is the covariate $Z_i$ only affects the outcome and not the network formation. This assumption has also been made in \cite{mcfowland2021estimating}. Therefore, we do not include covariates in our network model. If, indeed, this assumption is violated and $Z_i$ does affect network formation, then since our network model is misspecified, we expect that our estimate ${U}^*_i$ of $U_i$ will be biased, and this will, in turn, lead to bias in $\rho$. In the empirical data analysis, we have also considered the possibility that $Z$ might also affect the network formation. Fortunately, since $Z$ is observed we can both include it in the network model and condition on it in the outcome model, and therefore avoid bias in estimation of $\rho.$ We provide an updated DAG with this relationship in the Appendix Figure \ref{updateddag}. This updated DAG gives rise to two additional non-causal backdoor paths (in addition to the six identified above). (1) $Y_{j}^{i}  \leftarrow Z_{j} \rightarrow A_{ij} \rightarrow S_{i}$ and (2) $Y_{j}^{i}  \leftarrow Z_{j} \rightarrow A_{ij} \leftarrow Z_{i} \rightarrow S_{i}$. 
Now, given the additional backdoor paths, the conditioning set will be updated to $C=\{U_{i},U_{j},A_{ij},Z_{i},Z_{j}\}$. The updated conditioning set satisfies the two backdoor criteria stated earlier. Therefore our structural model for the outcome does not change. However, we need to update the model for network formation to also include the covariates in $Z$.

\subsection{Estimating homophily from network model}
We develop the methods under a more general setting than our statistical problem such that the methods and the accompanying statistical theory are of independent interest and are applicable more widely to observational data in other contexts of social sciences. We assume we have access to a network encoding relational data among a set of $n$ entities whose adjacency matrix is $A$. Therefore, the element $A_{ij}$ is $1$ if $i$ and $j$ are connected and is $0$ otherwise. The diagonal elements of $A$ are assumed to be $0$. We observe an $n$ dimensional vector $Y$ of univariate responses at the vertices of the network over $t$ time points (with $t \geq 2$, but assumed to be finite). We further observe an $n \times p$ matrix $Z$ of measurements of $p$ dimensional covariates at each node. We assume the following data-generating model on the responses: 
\begin{equation}
Y_{i,t}=\alpha_0 + \alpha_1 Y_{i,(t-1)} + \beta^T U_{i} +  \rho  \frac{\sum_{j} A_{ij}Y_{j,(t-1)}}{\sum_j A_{ij}} + \gamma^T Z_i  + V_{i,t},
\label{narmodel}
\end{equation}
where $V_{i,t}$ are iid random variables with $\mathbbm E[V_{it}]=0$ and $\text{Var}[V_{it}]=\sigma^2$, and $U$ is a $n \times d$ matrix of latent homophily variables such that each row $U_i \in \mathbb{R}^d$ of the matrix represents a vector of latent variable values for a node. The parameter $\rho$ is the network peer influence parameter of interest. For any node $i$, the variable $\frac{\sum_{j} A_{ij}Y_{j,(t-1)}}{\sum_j A_{ij}}$, measures a weighted average of the responses of the network connected neighbors of $i$ in the previous time point $t-1$. Therefore, the above model asserts that the outcome of $i$ is a function of the weighted average of outcomes of its network-connected neighbors in the previous time point, values of the covariates, and a set of latent variables representing unobserved characteristics. Further, since we assume $t \geq 2$, writing the model in Equation \ref{narmodel} replacing $t$ with $t-1$, we can easily see that $Y_{j, (t-1)}$ is dependent on $U_j$, matching our descriptions of dependence of $Y_j^i$ on the latent variables from the previous section.

In the outcome model in Equation \eqref{narmodel}, we distinguish between \textit{observed} covariates $Z$ and \textit{unobserved} latent variables $U$. While they both can create omitted variable bias for estimating network influence,  we can directly control for $Z$ since the researcher observes those variables. The variables in $U$ on the other hand, are unobserved confounders that capture several unobserved characteristics of the individuals.

We assume that the selection of network neighbors happens based on homophily or similarity on those unobserved characteristics. Therefore, it is possible to statistically model and extract the latent homophily information from the observed network. More precisely, we model the network to be generated following the Random Dot Product Graph (RDPG) model \cite{sussman12,athreya2017statistical,athreya2016limit,tang2018limit,rubin2022statistical,xie2023efficient} defined as follows. Let $X_1, \ldots, X_n$ be $d$-dimensional vectors of latent positions such that $\|X_i\| \leq 1$ and $X_iX_j^T \in [0,1]$ for all $i\neq j$, where $\|.\|$ denotes the vector Euclidean norm. Let $\theta_n$ be a scaling factor such that,  
\begin{equation}
A \sim Bernoulli (P), \quad P = \theta_n XX^T.
\label{RDPG}
\end{equation}
We define the scaled latent positions under this model as $U= (\theta_n)^{1/2}X$. 

Clearly, $P=UU^T$, and we assume that this $U$ matrix is the same matrix of latent factors as in the outcome equation. The latent factors $U_i$ are therefore assumed to be obtained from a $d$ dimensional continuous latent space that satisfies the specified constraint. The scaling factor $\theta_n$ controls the sparsity of the resulting network with growing $n$ since the number of edges in the network is $O(n^2\theta_n)$. While $\theta_n =O(1)$ will lead to a dense graph, a typical poly-log degree growth rate $\theta_n = O(\frac{(\log n)^c}{n})$ for some $c>1$ leads to a sparse graph. The RDPG model contains the (positive semidefinite) Stochastic Block Model (SBM), Degree corrected SBM, and mixed membership SBM as special cases \citep{athreya2017statistical,rubin2022statistical}. The SBM and its extensions are random graph models with a latent community structure which have been extensively studied in the literature \citep{hll83,rcy11,lei2015consistency,pc15,athreya2017statistical,paul2020random}.

As stated earlier, we assume that the observed nodal covariates $Z_i$s do not directly affect the network formation and are, therefore, not part of the network generating process in Equation \eqref{RDPG}. However, the unobserved latent variables $U_i$s are correlated with the observed covariates $Z_i$s, the network links $A_{ij}$s, as well as the outcomes of the previous time point $Y_{i,t-1}$. Therefore, controlling for $U_i$s is important in determining both the network influence and the effect of the covariates $Z$.  In the RDPG model, since $P_{ij} = U_iU_j^T$, the probability of a connection between nodes $i$ and $j$ depends on their positions $U_i$ and $U_j$ on the underlying latent space. Nodes that are closer to each other in terms of direction (angular coordinate) are more likely to have a higher dot product and, consequently, higher propensity to form ties. Therefore, the variables $U_i$s can capture the unobserved characteristics of the individuals, which leads to the selection of network ties.

Since we have used the time lag of the response on the right-hand side of the equation, the model can be estimated using Ordinary Least Squares (OLS).  It was argued in  \cite{mcfowland2021estimating} that when the latent factors in \eqref{narmodel} are latent communities from the SBM, then replacing estimated communities in place of the true communities, one can obtain an asymptotically unbiased estimate of $\rho$. Under the more general RDPG model, we provide an explicit asymptotic upper bound on the bias of $\rho$. Further, we propose a method that allows us to obtain a bias-corrected estimator of $\rho$, which we show has some advantages over the estimator in \cite{mcfowland2021estimating} in finite samples.

We estimate the latent factors through a $d$ dimensional Spectral Embedding of the Adjacency matrix (ASE method). The spectral embedding performs a singular value decomposition (SVD) of the adjacency matrix $A$ (since $A$ is symmetric, one can also use eigendecomposition). Let $Q_{n \times d}$ be the singular vectors corresponding to the $d$ largest singular values and $\Sigma$ be the diagonal matrix containing those singular values. Then we estimate the latent factor as $\hat{U} =Q\Sigma^{1/2}$. In the second step, we use these estimated factors as predictors in the outcome model
\begin{equation}
Y_{i,t}=\alpha_0 + \alpha_1 Y_{i,(t-1)} +  \beta^T \hat{U}_{i} +  \rho  \frac{\sum_{j} A_{ij}Y_{j,(t-1)}}{\sum_j A_{ij}} + \gamma^T Z_i + V_{i,t}.
\label{estimatedmodel}
\end{equation}

However, similar to the method considered in \cite{mcfowland2021estimating}, replacing $U$ with estimated $\hat{U}$ will lead to bias in the estimates. 
For example, in the context of linear regression, when regressing $Y$ on just $\hat{U}$, it is well known that due to the presence of estimation error in $\hat{U}$, this estimator is biased in the finite sample and is biased asymptotically unless the estimation error vanishes  \citep{davies1975effect}. The problem is exacerbated in the context of the peer effect model as in \eqref{estimatedmodel} due to various dependencies.
However, the following result provides an asymptotic upper bound on the bias. The notation $g(n)=O(f(n))$ for two functions of $n$ means that $g(n)$ and $f(n)$ are of the same asymptotic order, i.e., the ratio $\frac{g(n)}{f(n)}$ is a constant $c$ that does not depend on $n$.
\begin{thm}
\label{theorem1}
    Assume the network $A $ is generated according to the $d$-dimensional RDPG model with parameter $\theta_n, X$ as described above. Let $\hat{U}$ be the estimated latent factor matrix from the Adjacency Spectral Embedding (ASE) method. Further assume that $n\theta_n = \omega (\log n)^{4c}$ for some $c>1$ and let $c'>0$ be another constant. Then the bias in the estimate $\hat{\rho}_{n}$ is given by 
    \[
    \mathbbm E[\hat{\rho}_{n} -\rho|A] = O\left(\frac{(\log n)^{2c}}{n\theta_n} +   \frac{1}{n^{c'}} +  \frac{(\log n)^{c}}{n^c (n\theta_n)^{1/2}}\right).
    \]
    \label{asympbias}
\end{thm}
Theorem \ref{asympbias} provides an asymptotic upper bound on the bias of the least squares estimator for $\rho$. Clearly, the bias decreases with increasing $n$, and as $n \to \infty$, the bias vanishes. We further note that as the network's density increases, i.e., $\theta_n$ increases, the bias in the OLS estimator decreases. This result can be compared with Theorems 1 and 2 in \cite{mcfowland2021estimating}. Compared to Theorem 1 of \cite{mcfowland2021estimating}, which was for SBM, this result holds for a more general model and allows for sparsity in the network. In comparison to Theorem 2 of \cite{mcfowland2021estimating}, which showed the asymptotic bias with continuous latent space model converges to 0, this result provides an explicit expression for the asymptotic bias as a function of $n$. 

We summarize the assumptions under which Theorem \ref{asympbias} is valid and comment on how they might be violated. 
\begin{enumerate}
    \item \textbf{Network model:} We have assumed that the network is generated from the random dot product graph model, which is a multiplicative latent variable model. While the model is a general model for network, it may not be a good fit to an observed network. The result in Theorem \ref{asympbias} also assumes that the number of dimensions $d$ is known. For the real data analysis, we estimate $d$ using a data-driven cross-validation procedure \cite{li2020network}. We also provide a robustness check using another model for the network - the additive and multiplicative latent variable model \cite{hoff2021additive}. 
    \item \textbf{Node-level covariates:} We have also assumed that the observed node level covariates may affect the outcome but are not needed in the model for network formation when latent positions are used in the model. This could be because the covariates do not explain anything more than the latent positions or because they are independent of the latent positions. As we have already discussed at the end of section \ref{rolemodeleffect}, this assumption may be violated, and then we need to incorporate the observed covariates in the model for the network data. 
    \item \textbf{Outcome model:} We have assumed that the outcome model is a linear function of own response from the previous time point, the average of peer response from the previous time point, observed covariates, and latent homophily variables. Such a linear relationship may not be appropriate for certain datasets, including if the outcome is binary or categorical or if there is other evidence of non-linearity in the relationship. In a subsequent subsection, we extend Theorem \ref{theorem1} to the case when the outcome model is a probit regression model (Theorem \ref{theorem3probit}).
\end{enumerate}

We remark that in our model, while $U\beta$ is identifiable, the parameter $\beta$ cannot be identified separately. This is because the latent variable $U$ can be estimated only up to the ambiguity of a matrix $R$ with orthonormal columns. Clearly, both $U$ and $UR$ will lead to the same $UU^T$ for an orthogonal matrix $R$ since $RR^T=I$. This is a limitation of all multiplicative latent variable models, including the RDPG model \citep{athreya2017statistical}, latent space model \citep{hoff2002latent}, and the additive and multiplicative effects model \citep{hoff2021additive}. Therefore, we will only be interested in identifying $\rho$ and $ \gamma$ parameters correctly.

\subsection{Bias-corrected estimator}
Next, we further construct a bias-corrected estimator where the central idea is to correct for the finite sample bias using the corrected score function methodology from the well-developed theory of measurement error models \citep{stefanski1985effects,stefanski1985covariate,schafer1987covariate,nakamura1990corrected,novick2002corrected}. Let $d_i = \sum_j A_{ij}$ and $D$ denote the diagonal matrix containing $d_i$s as the diagonal elements. Define $L = D^{-1}A$, and define $W=[1_n \quad Y_{t-1} \quad LY_{t-1} \quad Z] $ as the matrix collecting all the predictor variables except for $\hat{U}$. Let $\eta=[\alpha_0, \alpha_1, \rho, \gamma]$. If it is known that $\hat{U}_i = U_i + \xi_i$ and $\text{Var}(\xi_i)=\Delta_i$, i.e., if the error covariance matrices for the different nodes are known, then \cite{nakamura1990corrected} provides the following bias-corrected estimator for the linear regression model parameters. Define the following quantities.
\begin{align*}
    \Omega  = \begin{pmatrix}
\sum_i \Delta_i & 0 \\
0 & 0
    \end{pmatrix}, \, 
    M_{WU} = \begin{pmatrix}
        \hat{U}^T \\ W^T
    \end{pmatrix} \begin{pmatrix}
        \hat{U} & W
    \end{pmatrix}, \, 
    M_Y = \begin{pmatrix}
        \hat{U}^T \\ W^T
    \end{pmatrix} Y.
\end{align*}
Then, the bias-corrected estimators of $\eta, \beta, \sigma^2$ are :
\begin{align*}
\begin{pmatrix}
    \hat{\eta} \\  \hat{\beta}
\end{pmatrix}  = (M_{WU}-\Omega)^{-1}M_Y. 
\quad \hat{\sigma}^2  = \frac{1}{n-d-p}\left(\|Y-[\hat{U} \,\,W] \begin{pmatrix}
    \hat{\eta} \\  \hat{\beta}
\end{pmatrix}\|_2^2 - \hat{\beta}^T(\sum_i \Delta_i) \hat{\beta}\right),
\end{align*}
while the standard errors for the $j$th element of the vector of parameters $\hat{\eta}, \hat{\beta}$ can be obtained as $\sqrt{\hat{\sigma}^2 (M_{WU}-\Omega)^{-1}_{jj}}$.

The main difficulty in applying the above idea in our context is that, in practice, the covariance matrix of the measurement error is unknown. We propose to use the results from \cite{athreya2016limit,tang2018limit,xie2023efficient} on asymptotic convergence of estimated node position for a fixed node $i$ to a multivariate normal distribution. In particular, we use the version of the result in \cite{xie2023efficient}, which is stated without any specific distribution assumption on the latent positions. Recall, $X_1, \ldots, X_n \in \mathcal{X} $ are latent positions. 
Following \cite{xie2023efficient}, we assume that there exists a distribution $F$ on the latent space $\mathcal{X}$, and 
define the second moment matrix of the random variable $X$ with respect to this limiting distribution $F$ as, $\Delta_F = \mathbbm E[XX^T] =  \int_{\mathcal{X}}XX^T F(dX)$. Then define the $d \times d$ matrix-valued function $\Sigma(x)$ of a vector $x$, as
\[
\Sigma(x) = \Delta_F^{-1}[\int_{X \in \mathcal{X}}\{X^Tx(1-\theta X^Tx)\}XX^TF(dX)]\Delta_F^{-1},
\]
where $\theta = \lim_{n \to \infty} \theta_n$. Then \cite{xie2023efficient} shows that if $A \sim RDPG (\theta_n, X_0)$ with $n\theta_n = \omega(\log n)^{4c}$, for some ``true" latent positions $X_0 = [X_{01},\ldots X_{0n}] \in \mathcal{X}^n$, then 
\[\sqrt{n}W_n \theta_n^{1/2}(\hat{X}_i -  X_{0i})  \to  N(0, \Sigma(X_{0i})).
\]
Here $\theta_n^{1/2}(\hat{X}_i -  X_{0i}) =\hat{U}_i$ is the estimated latent position from the ASE method described above, and $W_n$ is a sequence of orthogonal matrices. This result states that the estimated latent positions from ASE, suitably normalized, converge to an asymptotic normal distribution with a finite limiting covariance matrix. Therefore, an estimate of $\Sigma(X_{0i})$ from data will give us an estimate of the covariance matrix of the latent variables.

Recall that if $A \sim RDPG (\theta_n, X_0)$  then the latent variable for the $i$th node is given by $U_i =\theta_n^{1/2}X_{0i}$. To estimate this covariance matrix of the error $\Delta_i= \Sigma(X_{0i})$, we propose to replace the true latent position $X_{0i}$s with its estimate from the ASE, $\hat{X}_i = (\theta_n)^{-1/2}\hat{U}_i$. The resulting estimate of the covariance matrix for the $i$th node, $\hat{\Delta}_i$ is given below:
\begin{equation}
\hat{\Delta}_i = (\frac{1}{n}\sum_i \hat{U}_i\hat{U}_i^T)^{-1}(\sum_j \hat{U}_j^T\hat{U}_i(1-\hat{U}_j^T\hat{U}_i)\hat{U}_j \hat{U}_j^T)(\frac{1}{n}\sum_i \hat{U}_i\hat{U}_i^T)^{-1}.
\label{Deltai}
\end{equation}
Note that this expression does not involve $\theta_n$, and therefore, we have no need to estimate this sparsity parameter from data.

In the special case of SBM, the covariance matrix simplifies further. We assume the number of communities is the same as the number of dimensions of the latent positions, $d$. Let $B_k$ denote the unique row corresponding to the $k$th community and $c_i \in \{1,\ldots, d\}$ denotes the community the node $i$ belongs to. Let $\pi_k$ denote the proportion of nodes that belong to the community $k$. Then to apply corollary 2.3 of  \cite{tang2018limit}, we compute $\Delta_F =\sum_{k=1}^{K} \pi_k B_kB_k^T$ and
$\Sigma(B_q) =\Delta_F^{-1}(\sum_{k=1}^K \pi_k B_kB_k^T (B_q^TB_k- (B_q^TB_k)^2))\Delta_F^{-1}$. We propose to estimate $B_q$ with its natural estimate $\hat{B}_q$, which are the cluster centers, and $\pi_q$ with $\hat{\pi}_q$, which are the cluster size proportions. Therefore a plug-in estimator for the covariance matrix $\hat{\Delta}_q$ for any $\hat{U}_i$ whose true community is $q$, is
 \begin{align}   
 \hat{\Delta}_q = \hat{\Delta}_F^{-1}(\sum_k \hat{\pi}_q(\hat{B}_q^T\hat{B}_k - (\hat{B}_q^T\hat{B}_k)^2)\hat{B}_k\hat{B}_k^T)\hat{\Delta}_F^{-1},
 \label{Deltaq}
 \end{align}
where $\hat{\Delta}_F=\frac{1}{n}\left(\sum_k \hat{\pi}_k\hat{B}_k\hat{B}_k^T\right)$. 

Estimating the node locations themselves involves about $O(d)$ parameters, while the covariances are $O(d^2)$ parameters. Therefore, with dense enough graphs, we can estimate node variances separately using the formula in Equation \ref{Deltai}. However, for sparse graphs when there is a lack of data, we may not be able to estimate separate covariance matrices for each node well, and in that case, the second approach of estimating community-wise node variances using Equation \ref{Deltaq} might be beneficial. We find in our simulations that the community-wise estimate of the covariance matrix using Equation \ref{Deltaq} works well for the bias correction procedure even when the data-generating network model is more general than SBM.

\begin{algorithm}[H]
  \caption{Homophily and Measurement Bias Adjusted Network Influence Estimation}
  \label{alg:estimation}
  
  \begin{algorithmic}
    \STATE {\bfseries Input:} Network Adjacency matrix $A$, Responses $Y_1, \ldots, Y_t$, Observed Covariates $Z$, dimension of latent factors $d$
    \STATE {\bfseries Result:} Model parameters ($\hat{\alpha}_0, \hat{\alpha}_1, \hat{\beta},\hat{\gamma},\hat{\sigma}^2,\hat{\rho}$)
  \end{algorithmic}
  
  \begin{algorithmic}[1]
      \STATE SVD: $A= Q \Sigma Q^T, \quad \hat{U} = Q[1:d] (\Sigma[1:d])^{1/2} $ 
\STATE Estimate $\hat{\Delta}_i$ as,  $\hat{\Delta}_i = \hat{\Delta}_F^{-1}(\sum_j \hat{U}_j^T\hat{U}_i(1-\hat{U}_j^T\hat{U}_i)\hat{U}_j \hat{U}_j^T)\hat{\Delta}_F^{-1}$, with $\hat{\Delta}_F = \frac{1}{n}\sum_i \hat{U}_i\hat{U}_i^T$.
    \STATE $L=D^{-1}A$ where $D_{ii} =  \sum_j A_{ij}$ and $D_{ij}=0$ for $j \neq i$
\STATE $W=[1_n \quad Y_{t-1} \quad LY_{t-1} \quad Z] $
\STATE $
    \Omega  = \begin{pmatrix}
\sum_i \Delta_i & 0 \\
0 & 0
    \end{pmatrix}, \, 
    M_{WU} = \begin{pmatrix}
        \hat{U}^T \\ W^T
    \end{pmatrix} \begin{pmatrix}
        \hat{U} & W
    \end{pmatrix}, \, 
    M_Y = \begin{pmatrix}
        \hat{U}^T \\ W^T
    \end{pmatrix} Y.
$

\STATE $\begin{pmatrix}
   \hat{\alpha}_0 & \hat{\alpha}_1 & \hat{\gamma} & \hat{\rho} &  \hat{\beta}
\end{pmatrix}^T  = (M_{WU}-\Omega)^{-1}M_Y.$  

\STATE $\hat{\sigma}^2  = \frac{1}{n-d-p}\left(\|Y-[\hat{U} \,\,W] \begin{pmatrix}
    \hat{\eta} \\  \hat{\beta}
\end{pmatrix}\|_2^2 - \hat{\beta}^T(\sum_i \Delta_i) \hat{\beta}\right)$
    \STATE \textbf{return} $[   \hat{\alpha}_0, \hat{\alpha}_1, \hat{\gamma}, \hat{\rho},  \hat{\beta}, \hat{\sigma}^2 ]$
  \end{algorithmic}
\end{algorithm}

We also note that the results in \cite{tang2018limit,athreya2016limit,xie2023efficient} hold nodewise and therefore do not hold simultaneously for all $n$ nodes. However, we show in the simulations that our measurement bias correction methods with this covariance matrix estimate provide effective bias corrections, especially in small samples. To motivate the bias correction procedure, suppose we have access to the true latent positions for all nodes except for node $i$, i.e., we observe $U_1, \ldots U_{i-1}, U_{i+1}, \ldots U_n$. We call these latent position vectors together as $U_{-i}$ for ease of notation. For the $i$th node, we use our estimate $\hat{U}_i$ from the ASE. As described earlier, we can write the linear regression model of interest as $Y= \tilde{U}\beta + W\eta + V$, with $\mathbbm E[V]=0$, where $\tilde{U} = U_1, \ldots U_{i-1}, \hat{U}_i, U_{i+1}, \ldots U_n$. 
We have the following theorem.
\begin{thm}
    Consider the RDPG model described above with latent positions $U_1, \ldots, U_n$, and $n\theta_n = \omega(\log n)^{4c}$, and the linear regression model with the latent factors as $Y= \tilde{U}\beta + W\eta + V$, with $\mathbbm E[V]=0$. Here $\tilde{U} = U_1, \ldots U_{i-1}, \hat{U}_i, U_{i+1}, \ldots U_n$, where $U_{-i}$ be known latent positions, and $\hat{U}_i$ is estimated from the ASE method. Then the estimator using $U_{-i}, \hat{U}_i$ and without bias-correction, 
$\begin{pmatrix}
    \hat{\eta} &  \hat{\beta}
\end{pmatrix}^{T,(old)} =(\sum_j W_jW_j^T + \sum_{j\neq i} U_j U_j^T + \hat{U}_i\hat{U}_i^T)^{-1}(\sum_j Y_jW_j + \sum_{j\neq i} Y_j U_j + Y_i\hat{U}_i) $ converges in probability to $(\mathbbm E[W_iW_i^T] + \Delta_F/\theta + \Sigma(U_i)/n)^{-1}(\mathbbm E[W_iW_i^T] +\Delta_F/\theta) \begin{pmatrix}
    \eta_0 &  \beta_0
\end{pmatrix}^T$. However, the bias-corrected estimator $\begin{pmatrix}
    \hat{\eta} &  \hat{\beta}
\end{pmatrix}^{T,(new)} = (\sum_j W_jW_j^T + \sum_{j\neq i} U_j U_j^T + \hat{U}_i\hat{U}_i^T - \Sigma(U_i))^{-1}(\sum_j Y_jW_j + \sum_{j\neq i} Y_j U_j + Y_i\hat{U}_i)$ converges to $\begin{pmatrix}
    \eta_0 &  \beta_0
\end{pmatrix}^T$.
\label{theo2}
\end{thm}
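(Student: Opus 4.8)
The plan is to treat this as a law-of-large-numbers computation for an M-estimator (least squares), where the only nonstandard ingredient is controlling the single perturbed row $\hat U_i$ via the nodewise limit theorem for ASE. Write the stacked design as $\tilde U = [U_{-i}; \hat U_i]$ in the obvious sense and let $G_n = \sum_j W_j W_j^T + \sum_{j\neq i} U_j U_j^T + \hat U_i \hat U_i^T$ be the (un-normalized) Gram matrix appearing in the ``old'' estimator, and $b_n = \sum_j Y_j W_j + \sum_{j \neq i} Y_j U_j + Y_i \hat U_i$ the cross-moment vector. First I would substitute the true generative relation $Y_j = W_j^T \eta_0 + U_j^T \beta_0 + V_j$ into $b_n$, so that $\frac{1}{n} b_n = \frac{1}{n}\bigl(\sum_j W_j W_j^T + \sum_{j\neq i} U_j U_j^T\bigr)\!\begin{pmatrix}\eta_0 \\ \beta_0\end{pmatrix} + \frac{1}{n}(\text{terms involving } \hat U_i) + \frac{1}{n}\sum_j (\cdots) V_j$. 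The noise terms $\frac1n\sum_j W_j V_j$ and $\frac1n\sum_{j\neq i}U_j V_j$ vanish in probability by the weak LLN since $E[V_j]=0$ and $V_j$ is independent of the regressors; the lone term $\frac1n Y_i \hat U_i$ is $O_p(1/n)$ and hence negligible.

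Next I would identify the probability limits of $\frac1n G_n$ and of $\frac1n$ times the matrix multiplying $(\eta_0,\beta_0)^T$ in $b_n$. Here the key point is that replacing one summand $U_iU_i^T$ by $\hat U_i\hat U_i^T$ does \emph{not} change the $\frac1n$-scaled limit to first order: indeed $\frac1n(\sum_{j\neq i}U_jU_j^T + U_iU_i^T) \to E[U_1U_1^T]$ and $\frac1n(\sum_{j\neq i}U_jU_j^T + \hat U_i\hat U_i^T)$ differs from it by $\frac1n(\hat U_i\hat U_i^T - U_iU_i^T) = O_p(1/n)$, which vanishes. So naively both numerator and the $(\eta_0,\beta_0)$-coefficient matrix converge to $\mathrm{blkdiag}(E[W_1W_1^T], E[U_1U_1^T])$ plus the $W$-$U$ cross terms, and one might expect no asymptotic bias at all. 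The resolution --- and the crux of the theorem --- is that this is \emph{not} how $\hat U_i$ enters: in the ``old'' estimator the cross-moment vector $b_n$ contains $Y_i\hat U_i$ with the \emph{true} $U_i$ replaced inside $Y_i$ as well, so expanding $Y_i = W_i^T\eta_0 + U_i^T\beta_0 + V_i$ produces a term $(U_i^T\beta_0)\hat U_i$, whereas the Gram matrix contains $\hat U_i \hat U_i^T$; the mismatch between $\hat U_i\hat U_i^T$ (whose conditional mean given $U_i$ is $U_iU_i^T + \Sigma(U_i)/n$ by the ASE limit of \cite{athreya2016limit,tang2018limit}) and $\hat U_i U_i^T$ is precisely $\Sigma(U_i)/n$ in expectation. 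Scaling the full system by $n$ rather than by something that kills this $1/n$ term, and using that the \emph{rest} of the design contributes $E[\cdots]$ of order $n$, one sees the surviving perturbation is exactly an additive $\Sigma(U_i)$ inside the Gram limit but \emph{not} inside the cross-moment limit. Carrying this through gives
\[
\begin{pmatrix}\hat\eta \\ \hat\beta\end{pmatrix}^{(old)} \xrightarrow{p} \bigl(E[W_1W_1^T] + E[U_1U_1^T] + \Sigma(U_1)\bigr)^{-1}\bigl(E[W_1W_1^T] + E[U_1U_1^T]\bigr)\begin{pmatrix}\eta_0 \\ \beta_0\end{pmatrix},
\]
where I am writing the block matrices compactly; the $W$-only and $U$-only blocks along with their off-diagonal cross-blocks are all inherited unchanged, and only the $U$-block of the Gram matrix acquires the extra $\Sigma(U_1)$. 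For the bias-corrected estimator one subtracts $\Sigma(U_i)$ from $G_n$ before inverting; since $\Sigma(U_i)$ is exactly the discrepancy identified above, the Gram limit becomes $E[W_1W_1^T]+E[U_1U_1^T]$ again, matching the cross-moment limit, and the ratio collapses to $(\eta_0,\beta_0)^T$ by the continuous mapping theorem (invertibility of the limit is guaranteed by the positive-definiteness assumptions implicit in the RDPG/covariate setup).

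The main obstacle I anticipate is making rigorous the claim that the single perturbed index contributes an asymptotically non-negligible $\Sigma(U_i)$ term at all: because $\hat U_i\hat U_i^T - U_iU_i^T = O_p(1/n)$ pointwise, one must be careful about \emph{which} normalization is in force and about the fact that the theorem statement's limits are not $\frac1n$-scaled averages but rather involve $\Sigma(U_i)$ at $O(1)$ scale. The correct reading is that the $n$ cancels between the $O(n)$ bulk of $G_n$ and an implicit $n$ in the problem setup, so the cited nodewise CLT $\sqrt{n}(\hat X_i - X_i) \Rightarrow \mathcal N(0,\Sigma(X_i))$ --- equivalently $\mathrm{Var}(\hat U_i) \approx \Sigma(U_i)/n$ --- is exactly the right order to leave an $O(1)$ residual after multiplying by $n$. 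I would therefore (i) state precisely the nodewise ASE limit from \cite{tang2018limit} in the scaled-$U$ parametrization, (ii) verify $E[\hat U_i \hat U_i^T \mid U_i] = U_iU_i^T + \Sigma(U_i)/n + o(1/n)$ from it, (iii) assemble the LLN for the bulk terms, and (iv) invoke Slutsky and continuous mapping to pass to the ratio; step (ii), reconciling the nodewise (non-uniform) guarantee with the way it is used inside a single-entry perturbation, is where the argument needs the most care, and is also why the authors emphasize in the surrounding text that the result holds nodewise and is validated further by simulation.
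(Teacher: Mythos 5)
Your proposal is correct in substance and lands on the same limits, but it takes a genuinely different route from the paper. The paper's proof runs through the corrected-score-function machinery of \cite{nakamura1990corrected}: it writes the Gaussian log-likelihood with regressors $U_{-i}$ and $\hat U_i$, reads off the score (normal) equations to obtain the ``old'' estimator, computes $E[l(\beta,W,U_{-i},\hat U_i,Y)] = l(\beta,W,U,Y) + \tfrac{1}{2\sigma^2}\beta^T\Sigma(U_i)\beta$, and subtracts that term from the likelihood so that the ``new'' estimator is the root of the corrected score; the two convergence claims are then asserted in a single line from $\mathrm{Cov}(\hat U_i - U_i)=\Sigma(U_i)$. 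You instead take the two estimators as given and analyze their probability limits directly, substituting the generative relation into the cross-moment vector, killing the noise terms by the LLN, and isolating the $\Sigma(U_i)$ discrepancy between $E[\hat U_i\hat U_i^T\mid U_i]$ and $\hat U_iU_i^T$. The paper's route buys an explanation of \emph{why} subtracting $\Sigma(U_i)$ from the Gram matrix is the right correction (it is exactly the term that restores unbiasedness of the score, which is how the procedure is then generalized to all nodes); your route buys an explicit accounting of which terms survive which normalization. On that last point, the tension you flag in your final paragraph is real and is not resolved in the paper: with the nodewise scaling $\mathrm{Cov}(\hat U_i - U_i)=\Sigma(X_i)/n$ used in the main text, a single perturbed summand contributes only $O(1/n)$ to the $n^{-1}$-scaled Gram matrix, whereas the stated limit carries $\Sigma(U_i)$ at $O(1)$; the paper's proof sidesteps this by asserting $\mathrm{Cov}(\hat U_i - U_i)\to\Sigma(U_i)$ without the $1/n$. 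So your step (ii) is precisely where both arguments need the same care, and your explicit treatment of the normalization is a point of added transparency rather than a gap relative to the paper.
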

The above theorem shows that the bias correction procedure converges to the target parameter when only one latent position is taken from the output of the ASE algorithm while the others are known. Our bias correction procedure generalizes this motivation for all nodes. The entire method for estimating peer influence, including homophily estimation and bias adjustment, is given in Algorithm \ref{alg:estimation}. This theorem also holds if, instead of one, a finite number of positions (say $s$ positions) are estimated from the ASE method. In that case $\tilde{U} = U_1, \ldots U_{i-s}, \hat{U}_1,\ldots \hat{U}_s, U_{i+1}, \ldots U_n$. Without the bias correction, the parameter estimate is biased, and the amount of bias can be characterized by the attenuation factor $(\mathbbm E[W_iW_i^T] +\mathbbm E[U_iU_i^T] + \Sigma(U_i)s/n)^{-1}(\mathbbm E[W_iW_i^T] +\mathbbm E[U_iU_i^T])$ when $s$ of the latent positions are estimated instead of being known. From the expression of the attenuation factor, we can see that it depends on the relative magnitude of the covariance of error due to estimated latent position $\Sigma(U_i)$ with respect to the second moment matrix of the true latent positions $\mathbbm E[U_iU_i^T] = \Delta_F/\theta$. The attenuation bias will be higher if $\Sigma(U_i)$ is comparable to $ \Delta_F/\theta$, leading to more error due to estimated latent positions. On the other hand, if the error in estimating latent positions is lower compared to $\Delta_F/\theta$, then the attenuation bias will not be significant.

Here we remark that there are two assumptions that this bias corrected homophily adjusted estimator relies on. First, the additive measurement error is an assumption under which the bias correction procedure has been developed. Such an assumption is commonly made in the literature. The measurement errors are typically assumed to be additive, and the covariances are learned from validation data or repeated measures. Therefore, the covariance of the measurement error is ``estimated". For example \cite{schafer1993likelihood} assumes that in place of true covariates $x_i$, one has proxy measurements $z_{ij}, \quad j=\{1,\ldots,m_i\}$ are available, where the $z_{ij}$s follow a normal distribution with (unknown) mean $x_i$ and (unknown) variance $\sigma^2_z$. The measurement error correction methodology then proceeds by estimating the mean and the covariance from this data. Our setup of estimating the true unknown latent position along with a covariance matrix of uncertainty around the estimate could be considered in a similar vein. Second, the bias correction procedure involves a summation of individual node-specific covariance matrix estimates. While the node-wise covariance estimates are theoretically valid, it is unclear if they hold for all $n$ nodes together.

However, our simulation study setup presents us with a compelling opportunity to validate both of these assumptions. Note our true data generating model, which is used in the simulation, Equation \ref{narmodel} uses the true latent factors or embeddings $U_i$s. The bias-corrected homophily adjusted estimator is derived with the above two assumptions of additive measurement error and validity of covariance matrix estimators on the node-wise embeddings $\hat{U}_i$s. However, these assumptions do not appear in the data-generating process and are purely used for obtaining the bias-corrected homophily adjusted estimator. Yet, we see in the simulation results that this estimator is able to provide an effective bias correction and recover the true peer influence parameter well (Figures \ref{fig:dcsbm} (a), (b), \ref{fig:sbm} (a), (b) in Section \ref{sims}). We, therefore, argue that this shows the estimator derived under our assumptions provides a good solution for the problem and, consequently, validates the assumptions.

\subsection{Result on probit model for binary response}

In this section, we propose a similar adjustment for latent homophily when the response is binary. We can make two distinct choices for the data-generating structural model here. We can use the same linear structural equation as in the case of continuous responses. This model is often referred to as the Linear Probability Model (LPM). The Ordinary Least Squares (OLS) is known to produce unbiased and consistent estimates in LPM \citep{wooldridge2010econometric} and consequently, the theory on latent homophily adjustments and bias correction from the previous sections apply.

However, the LPM may not be considered an appropriate model for binary responses due to the inability of a linear function of predictors to remain within the range $(0,1)$ and the inability of continuous error distributions to model binary responses. The LPM also implies that changing a predictor by unit amount changes the response by the same amount always, and therefore, as the value of covariates increases, will surely drive the estimated probability to outside the $(0,1)$ range \citep{wooldridge2010econometric}. Therefore, a second choice is a probit structural model for the binary response $Y$ as follows.
\begin{align}
Y_{i,t} & = 1 (Y_{i,t}^* >0) \nonumber \\
Y_{i,t}^* & =\alpha_0 + \alpha_1 Y_{i,(t-1)} + \beta^T U_{i} +  \rho  \frac{\sum_{j} A_{ij}Y_{j,(t-1)}}{\sum_j A_{ij}} + \gamma^T Z_i  + V_{i,t} \nonumber\\
V_{i,t} & \overset{i.i.d}{\sim} N(0,1).
\label{probitnar}
\end{align}
Here $Y_{i,t}^*$ denotes a latent variable such that if the variable takes a value greater than 0, then $Y_{i,t}$ is 1, and otherwise, $Y_{i,t}$ is 0. One can readily see that this model implies 
\[
P(Y_{i,t}=1) = \Phi(\alpha_0 + \alpha_1 Y_{i,(t-1)} + \beta^T U_{i} +  \rho  \frac{\sum_{j} A_{ij}Y_{j,(t-1)}}{\sum_j A_{ij}} + \gamma^T Z_i ), 
\]
where $\Phi(\cdot)$ denote the CDF function of the standard normal distribution. We borrow notations from the previous section and denote $W=[1_n \quad Y_{t-1} \quad LY_{t-1} \quad Z] $ as the matrix collecting all the observed predictor variables, and let $\eta=[\alpha_0, \alpha_1, \rho, \gamma]$. Therefore we can rewrite the above equation as $P(Y_{i,t}=1) = \Phi(\eta^TW_i + \beta^TU_i)$.

We will continue to call $\rho$ the peer influence parameter, however, the interpretation of this parameter is different from the linear model setup. For a probit model, it is natural to consider the partial effect of $LY_{t-1}$ defined as \cite{wooldridge2010econometric}
\[
PE_i = \frac{\partial P(Y_{i,t}=1|W_i,U_i)}{\partial (LY_{t-1})} = \rho \phi(\eta^TW_i + \beta^T U_i),
\]
where $\phi(\cdot)$ is the PDF function of the standard normal distribution. The partial effect for peer influence has the interpretation of how much the expected response (or probability of the response being 1) changes for one unit increase in the peer effects $LY_{t-1}$. Therefore the PE would be a comparable quantity to the peer effect from the linear model. However, there is an additional issue that unlike the linear models, the partial effects depend on the values of the covariates and the latent characteristics for the $i$th subject and hence are different for different subjects. Therefore we will be interested in the Average Partial Effects (APE) which averages the PE over all subjects.

Next, we make two additional assumptions on distributions of covariates. We assume that for any node $i$, $U_i$ and $\hat{U}_i$ are normally distributed. Further 
 for any node $i$, the variable $\frac{\sum_{j} A_{ij}Y_{j,(t-1)}}{\sum_j A_{ij}}$ is also normally distributed. The last assumption can be justified by noting that this variable is a weighted average of $n$ binary responses and, for large $n$, should behave similarly to a normal distribution. In probit models, typically, the true parameter $\rho$ associated with the peer influence predictor $LY_{t-1}$ cannot be estimated consistently when there are omitted variables, even when the omitted variable is independent of the covariate. When the omitted variables, in addition, are dependent on the observed covariate, then there are additional sources of bias. 
 We can rewrite $Y^*_{it}$ as follows,
 \[
 Y^*_{i,t} = \eta^TW_i + \beta_1^T\hat{U}_i + \beta^TU_i - \beta_1^T\hat{U}_i + V_{i,t}.
 \]
We let $q=\beta^TU_i - \beta_1^T\hat{U}_i$. Due to our assumptions of normality on $U_i, \hat{U}_i, (LY_{t-1})_i$, we can state that the conditional distribution of $q$ given $LY_{t-1}$ is $ q|(LY_{t-1})_i \sim N(\delta(LY_{t-1})_i, \tau^2)$ for some parameters $\delta, \tau^2$. Consequently, $(V_{it} + q)|LY_{t-1} \sim N(\delta(LY_{t-1})_i, \tau^2 +1 )$. Therefore
\begin{align*}
P(Y_{i,t}=1|W_i, \hat{U}_i) & = P(V_{it} + q < -\eta^TW_i - \beta_1^T\hat{U}_i) \\
&= P\left(\frac{V_{it} + q - \delta(LY_{t-1})_i }{\sqrt{\tau^2 +1}} < \frac{- \eta^TW_i - \beta_1^T\hat{U}-\delta (LY_{t-1})_i}{\sqrt{\tau^2 +1}}\right) \\
& = \Phi (\frac{ \eta^TW_i + \beta_1^T\hat{U}+\delta (LY_{t-1})_i}{\sqrt{\tau^2 +1}}).
\end{align*}
The partial effect is then obtained by 
 \[
 PE =  \frac{\rho+\delta}{\sqrt{\tau^2 +1}}  \phi(\frac{ \eta^TW_i + \beta_1^T\hat{U}+\delta (LY_{t-1})_i}{\sqrt{\tau^2 +1}})].
 \]

Therefore, we need to verify two things. First, we will show that $\text{Cov}(q, (LY_{t-1})_i|A,\hat{U}_i)$ converges to 0, and due to assumptions of normality, this implies that $q$ is independent of $(LY_{t-1})_i$. Consequently, $\delta$ converges to 0. Further, we will show that the conditional variance $\text{Var}(q|A)$ converges to 0 as $n$ goes to infinity. These are summarized in the following theorem, whose proof appears in the Appendix.

\begin{thm}
\label{theorem3probit}
    Assume the network $A $ is generated according to the $d$-dimensional RDPG model with parameter $\theta_n, X$ as described above and let the binary response $Y$ is generated following a probit model as in Equation \ref{probitnar}. 
    Assume that $n\theta_n = \omega (\log n)^{4c}$ for some $c>1$. In addition, assume that the variables $U_i, \hat{U}_i$, and $(LY_{t-1})_i$ are all normally distributed for large $n$. Then, as $n \to \infty$, 
    both the peer influence parameter $\rho$ and the partial effect of peer influence is identified.
    \label{asympbiasprobit}
\end{thm}

\section{Simulation}
\label{sims}

We perform several simulation studies to compare the bias-corrected estimator using the estimated covariance matrix from ASE with (1) the estimator without any latent homophily variable and (2) the estimator with latent homophily vector from ASE but not corrected for measurement bias. 
In all cases, the network $A$ is generated from a specific instance of the RDPG model with 2-dimensional latent homophily variables collected in the matrix $U$. 
Next we generate the response at the first time point as $Y_{i1} =V_{i1}$, at the second time point as $Y_{i2}=\beta^TU_i  + \alpha Y_{i1} +  V_{i2}$ and at the third time point as $Y_{i3}=\alpha Y_{i2} +  \beta^TU_i + \rho \sum_j L_{ij}Y_{j2} + V_{i3}$, with $V_{i1}, V_{i2}, V_{i3}$ being generated i.i.d from $N(0,1)$ distribution. We set the parameters $\alpha=0.6$ and $\rho = 0.3$ and vary different parameters to investigate a variety of simulation setups. 

First, to understand the latent positions generated by various submodels of the RDPG class of models, we plot the simulated true latent positions along with those estimated from the spectral embedding of the adjacency matrix for 4 submodels of RDPG, namely, SBM, DC-SBM, MM-SBM, and DC-MM-SBM. In the case of SBM, the true latent positions are just two blue dots, each corresponding to one of the clusters. This underscores that in SBM, all nodes that belong to the same cluster have the same latent position. Therefore, one would expect the SBM to have limited ability to model an observed network well. The true latent positions for DCSBM are along two straight lines, one for each cluster. This is because the vectors of positions for nodes within a cluster are different only by their lengths, not in their direction. On the other hand, the latent positions in MMSBM can be considered an interpolation between two cluster center dots. Finally, the DC-MMSBM model has the most variation in terms of locations of nodes in latent space as the vectors differ in both direction and length.
\begin{figure}[!htbp]
\caption{True and Estimated Latent Positions for 4 sub-models within the RDPG Class.}
 \label{fig:latentpositions}
\begin{subfigure}{0.23 \textwidth}
     \centering
\includegraphics[width= \textwidth]{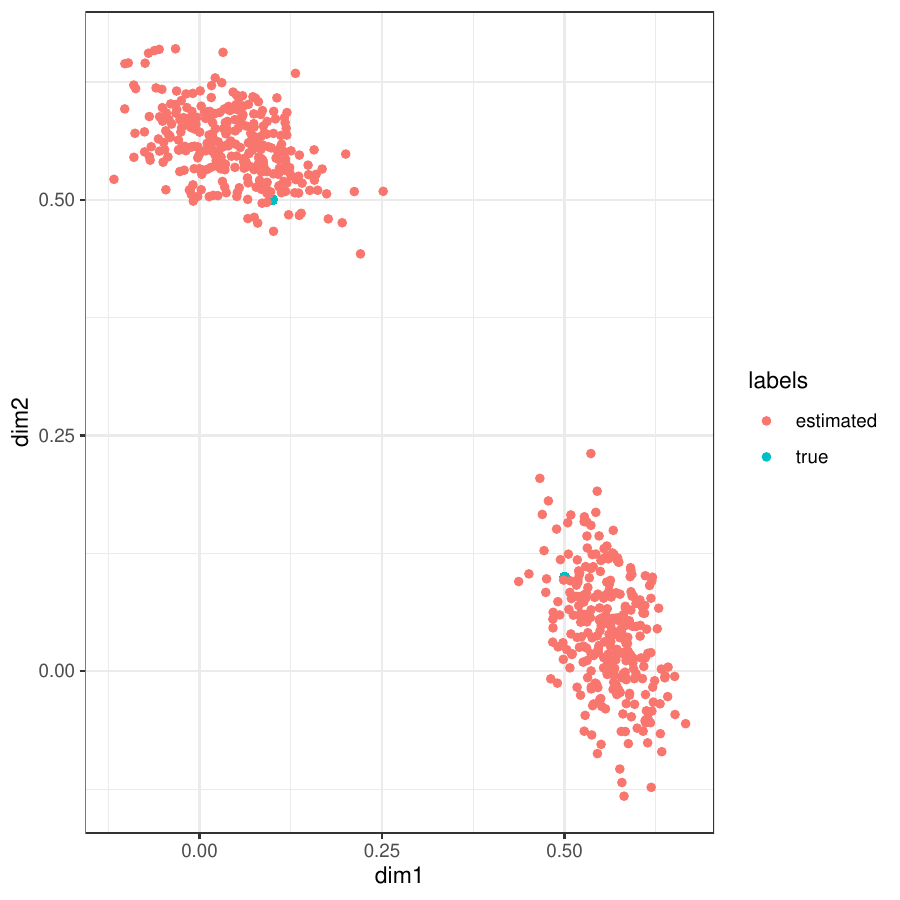} 
\subcaption{SBM}
\end{subfigure}%
\begin{subfigure}{0.23\textwidth}
     \centering
\includegraphics[width= \textwidth]{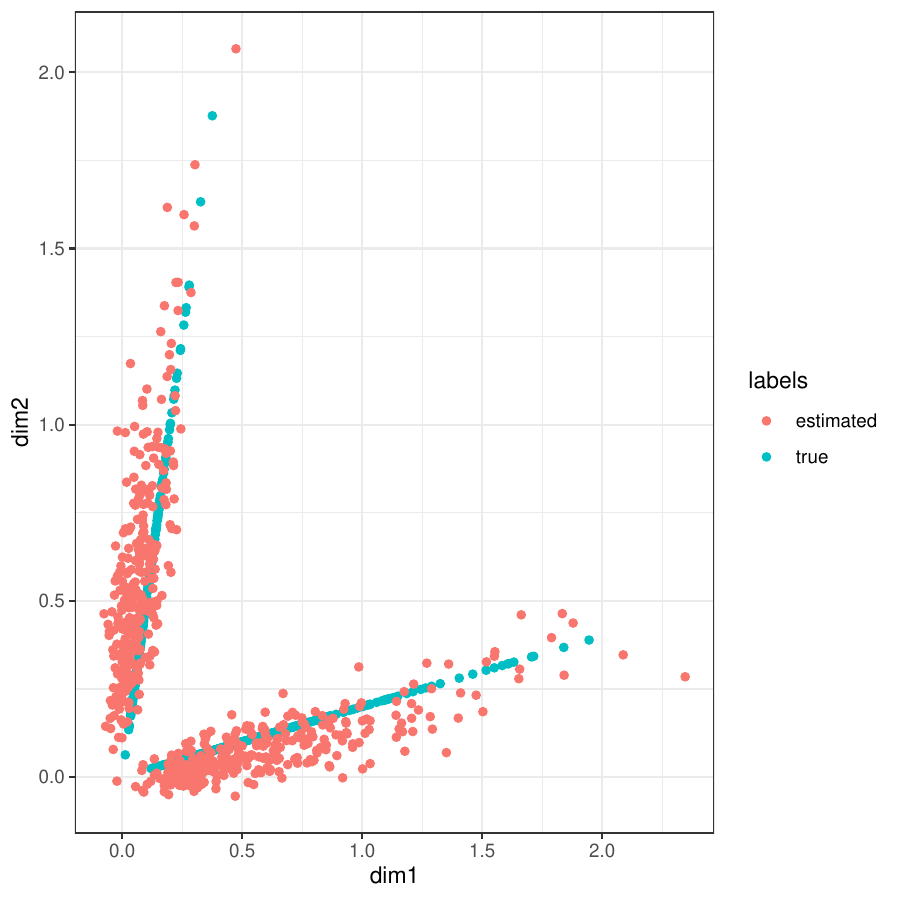} 
\subcaption{DCSBM}
\end{subfigure}
\begin{subfigure}{0.23 \textwidth}
     \centering
\includegraphics[width= \textwidth]{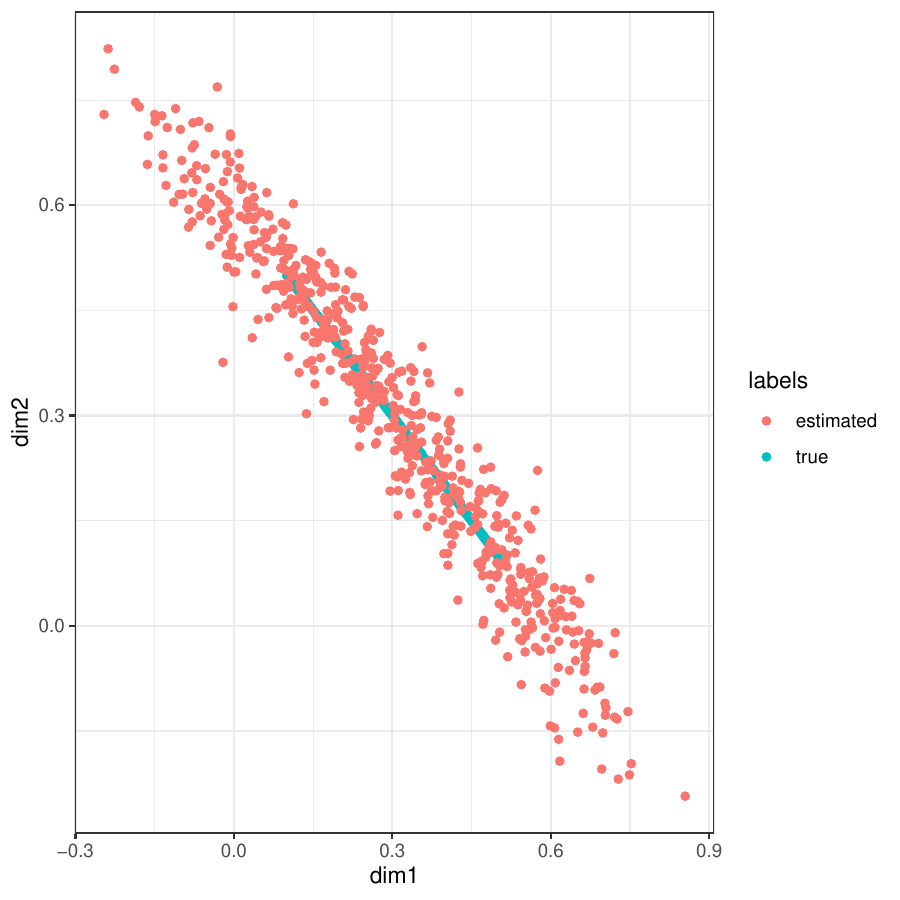} 
\subcaption{MMSBM}
\end{subfigure}%
\begin{subfigure}{0.23 \textwidth}
     \centering
\includegraphics[width= \textwidth]{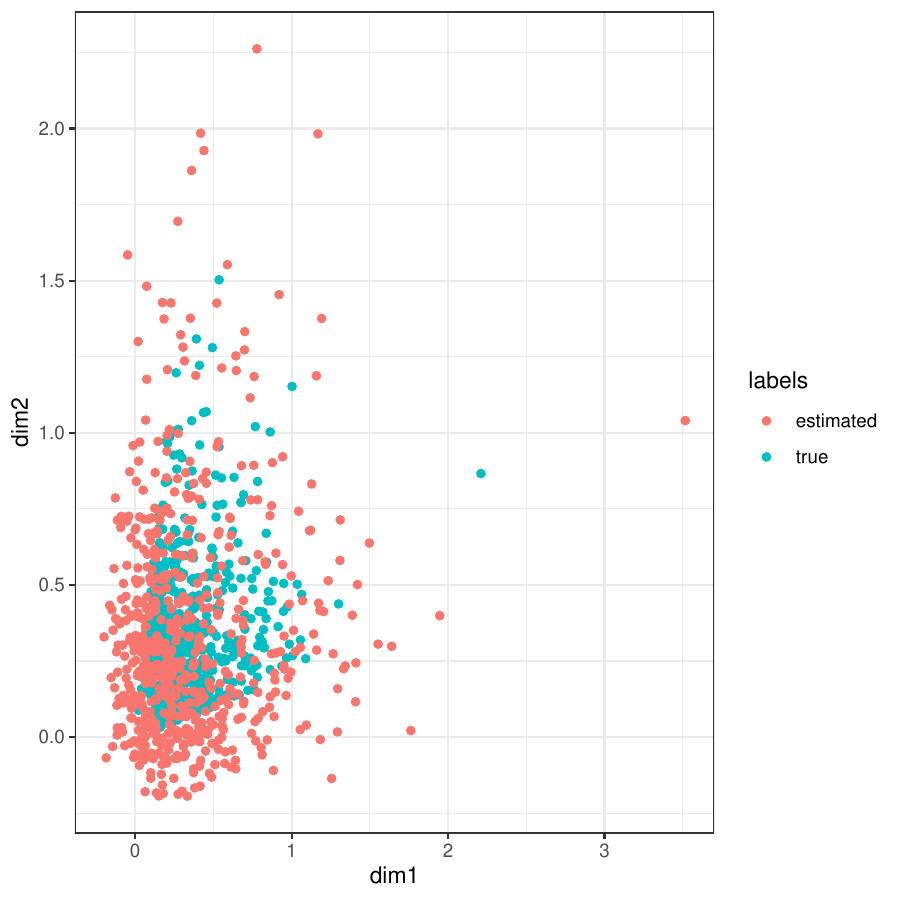} 
\subcaption{DC-MMSBM}
\end{subfigure}
 
\end{figure}

\subsection{DCSBM graph with increasing nodes}
\label{section31}
For this simulation, we generate a network from the Degree-corrected Stochastic Block Model (DCSBM), which is a special case of the RDPG model with the number of nodes increasing from 100 to 700 in increments of 100. Every node has a different latent position under this model. The degree heterogeneity parameters are generated from a log-normal distribution with log mean set at 0 and log SD set at 0.5. The matrix $U$ is formed as $U=\Theta H J$, where $\Theta$ is the diagonal matrix of degree parameters, $H$ is the community assignment matrix whose $i$th row is such that only one element takes the value of $1$, indicating its community membership and all other elements are 0s, and $J=\begin{pmatrix}
    0.5 & 0.1 \\
    0.1 & 0.5
\end{pmatrix} $ is a matrix whose rows provide the direction of the cluster centers. The community assignments are generated from a multinomial distribution with equal class probabilities. We form the matrix as $P=UU^T$ and scale all elements by a number to make the average density of the graph 0.20. We set $\beta=(1,3)$ in the outcome model. We report the bias and Standard Error (SE) of the bias for the three competing methods in Figure \ref{fig:dcsbm} (left). As the figure shows, the estimator without homophily correction remains biased even when the sample size increases, while the bias in the two homophily-corrected estimators decreases as the number of nodes increases. The figure further shows that the proposed measurement error bias-corrected estimator for the peer effect parameter $\rho$ has less bias than the estimator with homophily control but no bias correction,  especially in small samples. The bias in the parameter estimate goes close to 0 more quickly with the measurement error correction. Therefore, the proposed bias correction methodology works well.  
\begin{figure}[!htbp]
\caption{Comparison of estimates of the peer influence parameter $\rho$ when the underlying graph is generated from a DCSBM model}
 \label{fig:dcsbm}
\begin{subfigure}{0.5 \textwidth}
     \centering
\includegraphics[width= \textwidth]{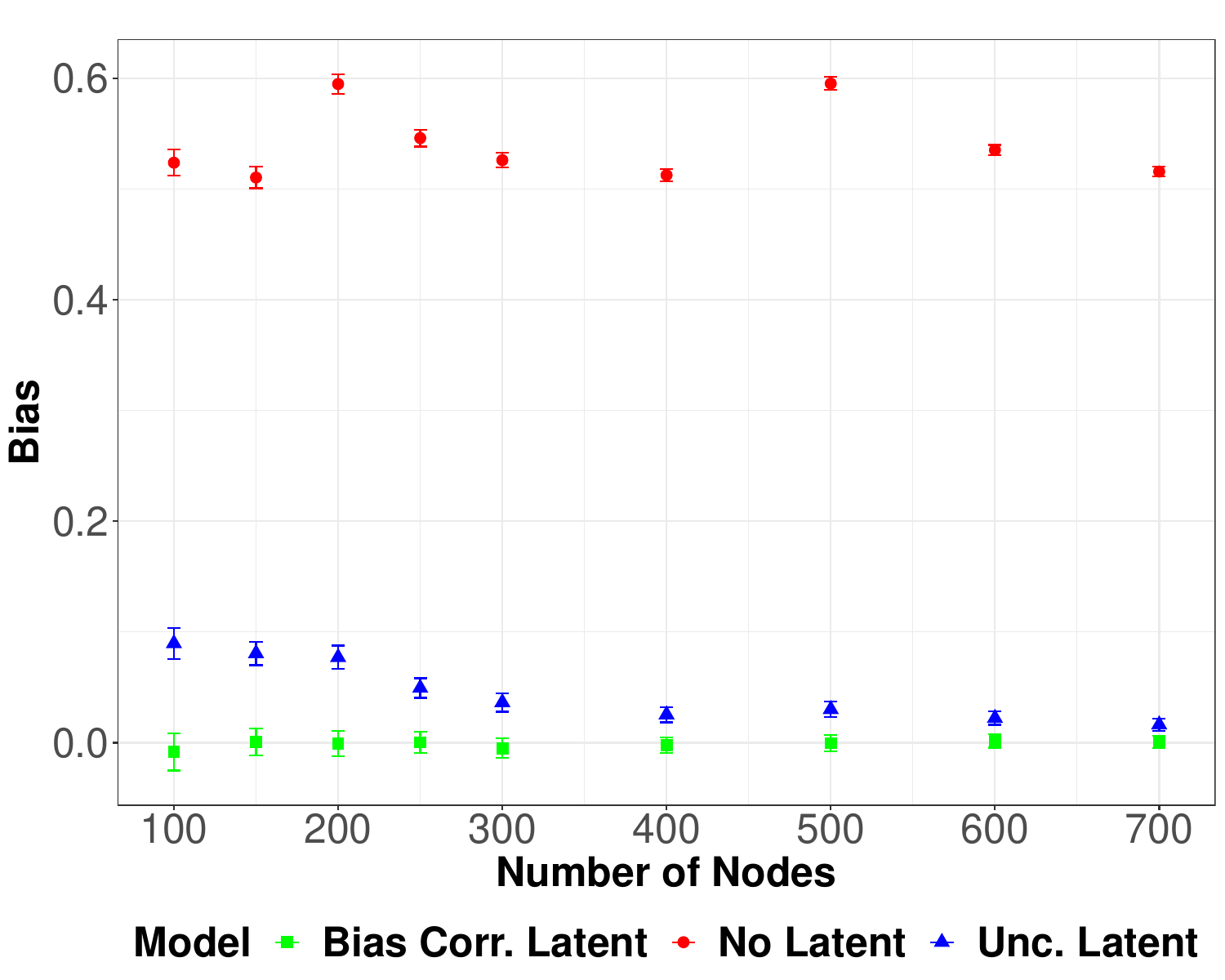} 
\subcaption{Increasing number of nodes}
\end{subfigure}%
\begin{subfigure}{0.5 \textwidth}
     \centering
\includegraphics[width= \textwidth]{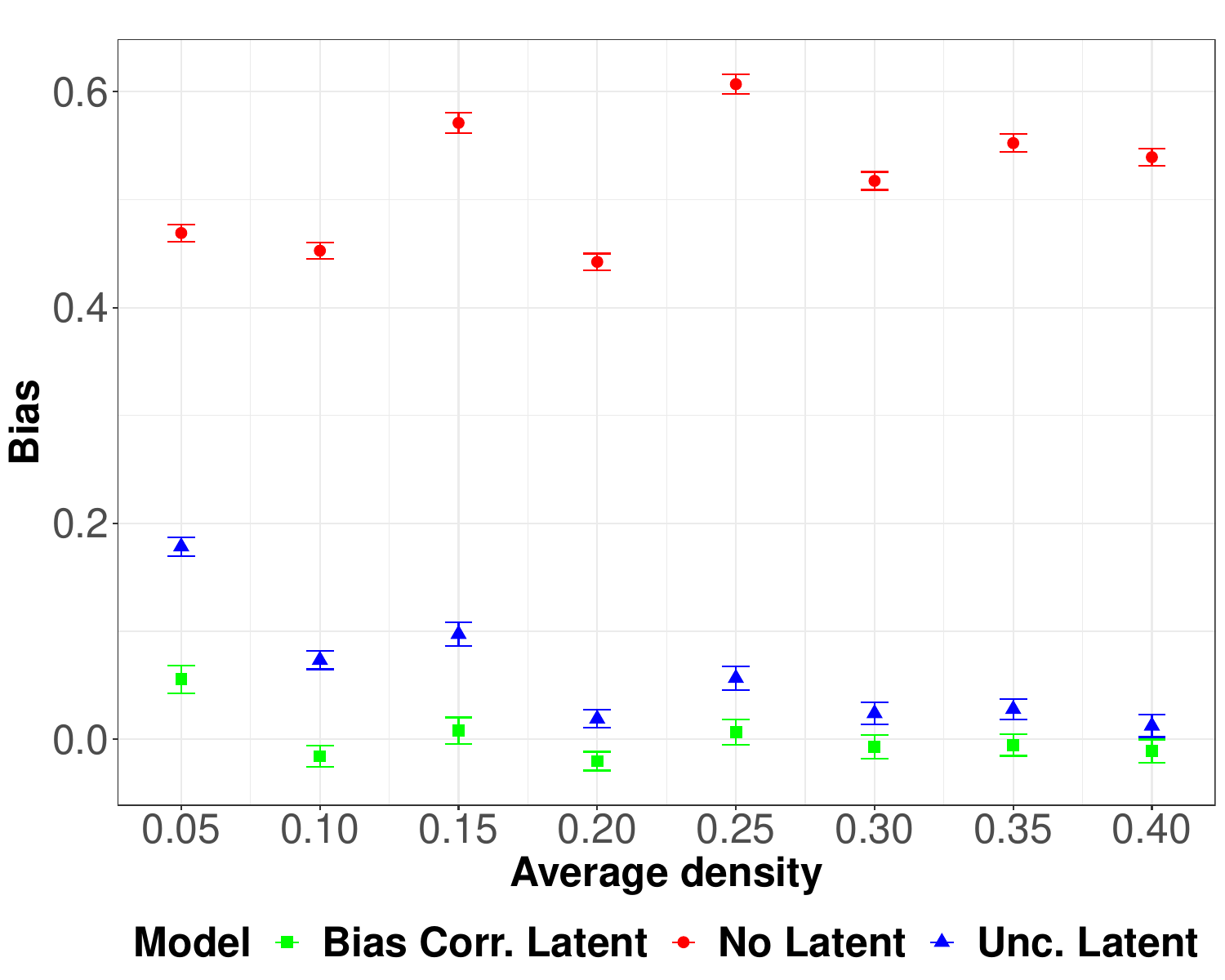} 
\subcaption{Increasing density of the graph}
\end{subfigure}
 
\end{figure}

\subsection{DCSBM graph with increasing density}
Next, we again generate the networks from a DCSBM model but increase the average density of the graphs from approximately 0.05 to 0.40, fixing the number of nodes at 200. For this simulation, we set the same $J$ and $\beta$ as the previous simulation, and the parameters in $\Theta$ are generated from the lognormal distribution with the same mean and SD as in the previous simulation. As Figure \ref{fig:dcsbm} (panel (b)) shows, the bias-corrected latent factor method outperforms the other two methods. We also see that the bias of both the uncorrected estimator and the bias-corrected decreases with increasing density as predicted by Theorem \ref{asympbias}, while the bias of the estimator without latent factors remains high.

 \subsection{Increasing nodes, SBM underlying graph} Now we consider the case when the data is generated from the stochastic block model. For this purpose, we set the $J$ matrix to a matrix which has 2 columns and 4 rows as $
J = \begin{pmatrix}
  0.7 &  0.2 \\
 0.1 &  0.6  \\
 0.2 &  0.2  \\
0.5 & 0.5  \\
\end{pmatrix}$. The matrix of latent positions $U$ is then formed as $U=HJ$. This leads to $d=2$ and the number of communities $K=4$.
Figure \ref{fig:sbm} (a) shows the performance of the estimators with an increasing number of nodes when $\beta$ is set to $(1,2)$. As noted in \cite{mcfowland2021estimating} for the case of SBM, the estimator without homophily correction remains biased even with increasing $n$. We notice that in smaller sample sizes, the bias-corrected estimator improves substantially upon the non-biased estimator.

\begin{figure}[!htbp]
\caption{Comparison of estimates of the peer influence parameter $\rho$}
 \label{fig:sbm}
\begin{subfigure}{0.5 \textwidth}
     \centering
\includegraphics[width= \textwidth]{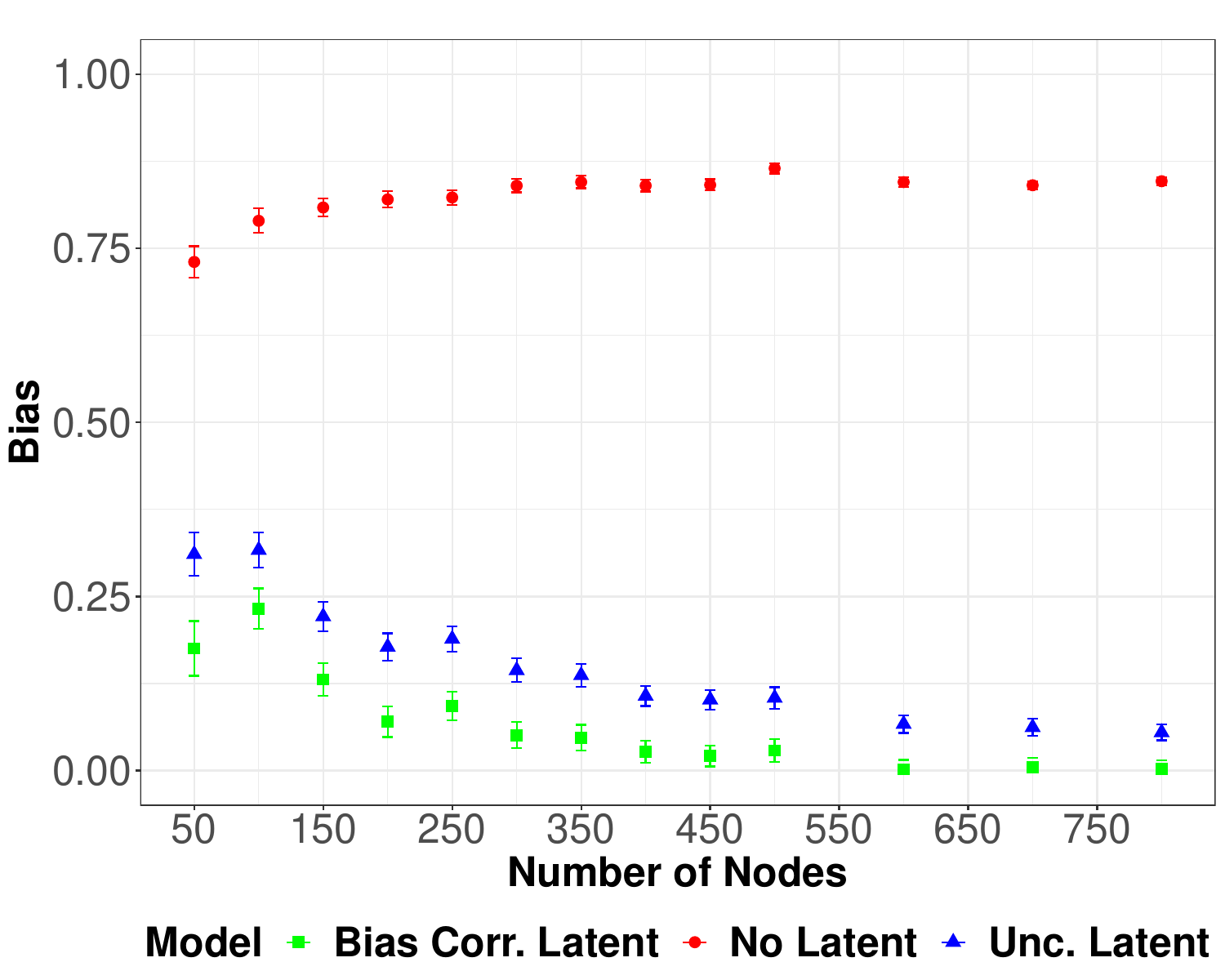}
\subcaption{Increasing number of nodes}
\end{subfigure}%
\begin{subfigure}{0.5 \textwidth}
     \centering
\includegraphics[width= \textwidth]{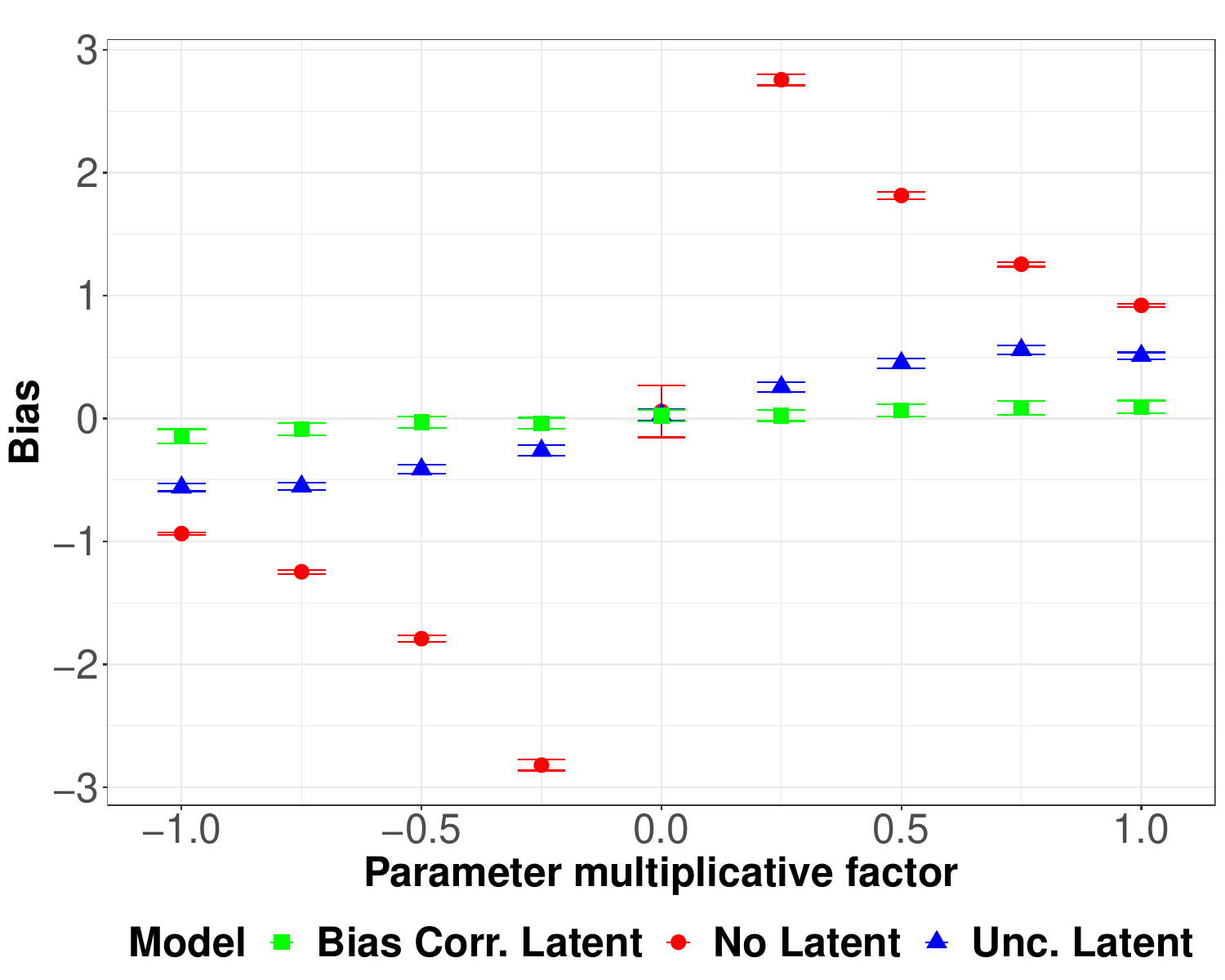}
\subcaption{Model with no latent factors has both positive and negative bias}
\end{subfigure}
 \end{figure}
\subsection{Negative and positive bias}
Next, we design a simulation set-up to test the ability of the bias correction procedure to reduce bias in both positive and negative directions. We set $J$ and $\beta$ as in simulation B but keep $diag(\Theta)$ as 1 for all nodes for simplicity (therefore making the model SBM). However, the parameter associated with $U$ in the previous time period is changed to $\beta'=m\beta$, where $m$ varies from $-1$ to $1$ in increments of $0.25$. The negative $m$'s will create negative bias in the uncorrected estimator. The highest magnitude of the bias is seen at $-0.25$ and $0.25$. In Figure \ref{fig:sbm}(b), we see that the bias-corrected estimator succeeds in correcting the bias of the estimator both when the estimator with no latent factor is biased with negative and positive bias.

\subsection{Comparison of bias correction method}
Next, we compare how the bias correction method performs when we estimate a separate covariance matrix for each of the estimated latent positions using the general formula in Equation \ref{Deltai} and when we estimate only one covariance matrix per community using the formula assuming the network is generated from SBM in Equation  \ref{Deltaq}. Figure \ref{fig:biascorrcomp} shows that both the bias-corrected estimators perform similarly for estimating the peer influence parameter when the network is generated from the degree-corrected block model. Moreover, we do not see much difference in their performances even when the graph is sparse. This is possibly because the formula for the bias corrected estimator involves sum of the covariances of the latent positions over all nodes. Consequently, even though we expect the estimates of the individual covariance matrices for the latent positions in Equation \ref{Deltai} to be not so good when the density of the network is low, the estimate for the overall sum of the covariance matrices is likely comparable.

\begin{figure}[!htbp]
\caption{Comparison of the two bias correction methods using general and SBM-specific covariance matrix estimators}
 \label{fig:biascorrcomp}
\begin{subfigure}{0.5\textwidth}
     \centering
\includegraphics[width= \textwidth]{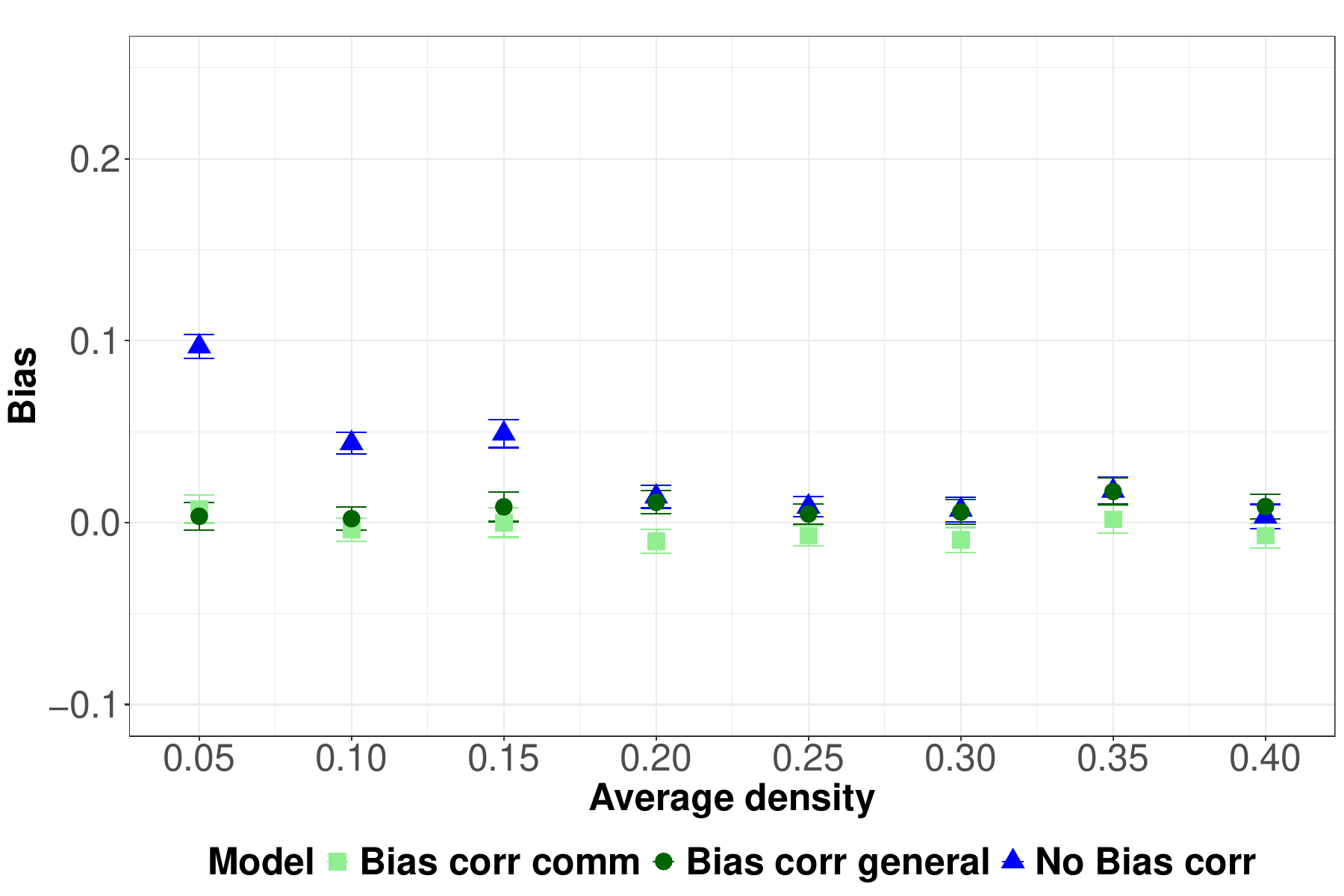} 
\end{subfigure}
 
\end{figure}

\subsection{Probit Model for Binary Reponse Variable} We next perform a simulation study to understand the finite sample properties of the latent homophily variable corrected probit estimator. This estimator was shown to asymptotically provide unbiased estimate of the peer effect parameter in Theorem \ref{theorem3probit}. We consider the same simulation settings as section \ref{section31} (DCSBM graph with increasing nodes), but instead of the linear outcome model, we consider the probit regression outcome model described in Equation \ref{probitnar}. We compare the models with and without latent homophily vectors in terms of bias in estimating the peer influence parameter in \ref{fig:biascorrprobit}. This figure suggests that the bias shrinks rapidly with increasing number of nodes $n$ when the estimated latent factors are included in the probit model, as predicted by Theorem \ref{theorem3probit}. On the other hand the estimate from the model that does not include latent homophily remains biased even when $n$ increases.
\begin{figure}[!htbp]
\caption{Reduction of bias in estimating peer influence parameter by incorporating estimated latent homophily factor when the binary outcome is generated from a Probit Model}
 \label{fig:biascorrprobit}
\begin{subfigure}{0.50\textwidth}
     \centering
\includegraphics[width= \textwidth]{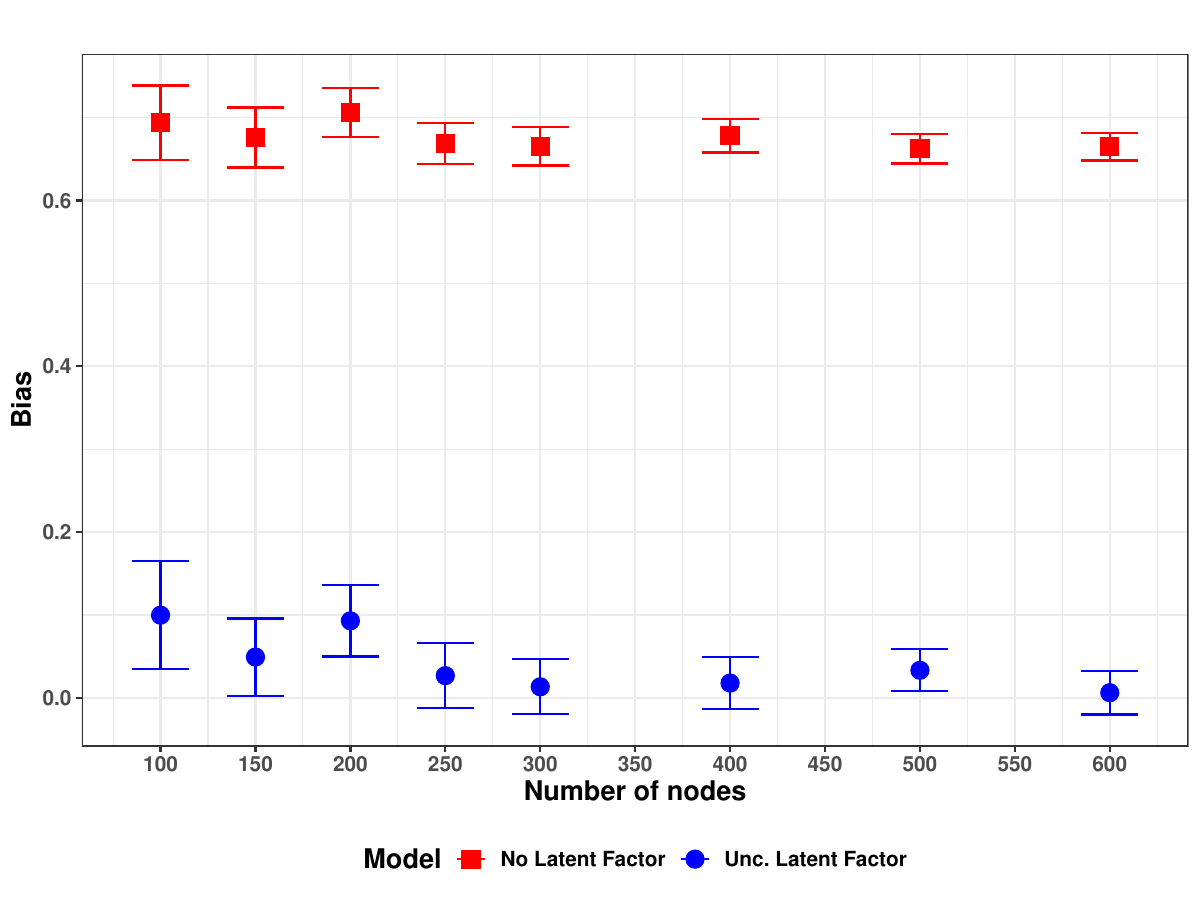} 
\end{subfigure}
\end{figure}

\section{Results}

\subsection{Data}
\label{dataempiricalsection}
The data from the three TCs spans three or four years, depending on the unit. We use the data only on those residents for whom we observe non-missing values for outcome, covariates, and if they sent and received affirmations at least once during their time in the TC. The residents entered the units at different points in time (Panel (a) in Figure \ref{time}) and spent varying amounts of time (Boxplot in Panel (b) in Figure \ref{time}). The median time of stay for residents in male unit 2 and female unit is 149 and 150 days, respectively, while the same for residents in male unit 1 is considerably lower at 124 days. The participants remain in these units for a maximum of six months. The information on entry and exit dates is critical for estimating the causal role model effect. Also, the residents of the TCs interact only within the TC, and there is no interaction across TCs.

The TCs maintained records of residents' socio-demographic characteristics, behavioral aspects, and graduation status. Moreover, the officials implemented a system of mutual feedback among the residents. This took the form of positive affirmations of the prosocial behavior of peers (\cite{campbell2021eigenvector, warren2021resident,warren2021difference,warren2020tightly}). The male units 1 and 2 have about 7,400 and 16,000 affirmations, respectively. The female unit includes a little over 61,000 instances of affirmations over a three-year period. For each of these instances, we observe the anonymized IDs of the sender and the receiver and the message's timestamp. In addition, feedback also involved sending written corrections of behavior that contravenes TC norms. In a separate analysis, we explore peer influence that propagates through the corrections network (see section \ref{correctionssection}). 

The affirmations and corrections record a series of communications between the residents. Related work in the TC literature often treats these communications as friendship networks. Several reasons support this assumption. First, TC residents are likelier to notice positive behaviors if a friend does them.  Second, TC residents are more likely to reward people with an affirmation if they are on good terms. Moreover, empirical evidence also supports this.  There is a lot of reciprocity in the affirmations, transitive relationships are common and appear to exert influence, and there is homophily by graduation status, which tends to suggest that residents with a similar commitment to the program hang out together \citep{warren2020tightly,warren2020building,doogan2017network}. 

The outcome variable of interest is the final graduation status ($S_{i}$). Among the female residents, 79.7\% graduated successfully from the TC. In male unit 1, we observe that 88\% of the residents graduated successfully during the study period; in male unit 2, the corresponding number is 89\%. Given the nature of the peer mentoring program in TCs, we observe that the longer the residents stay in the program, the higher their chances of graduation. Figure \ref{timeTCgrad} shows histograms of time in TC by graduation status for all three units. These graphs suggest that those who stay shorter periods are much less likely to have graduated successfully.

Several reasons can result in residents dropping out of the TC. First, they may not like the program due to various factors. This could include a dislike of the structured nature of the community, such as the requirement to wake up early and perform community tasks. \cite{mccorkel1998treatment} interviewed women who dropped out of a TC and found that they did not regard the peer structure as a legitimate treatment. They preferred a top-down, professionally delivered approach. Second, residents are dismissed immediately from TCs if they break a ``cardinal rule". The cardinal rules are usually no substance use (tobacco is often an exception, as is caffeine), no violence, and no theft. Finally, TC residents proceed through ``phases" in which they earn additional rights by working on their recovery and achieving goals within the TC system. Residents could be terminated for poor progress in the phase system, although we do not have evidence of such dismissal in the TC programs considered in this paper. 

\begin{figure}[!htbp] 
  \caption{Time in TC and Graduation Status}
    \label{timeTCgrad} 

 \begin{minipage}[b]{0.33\linewidth}
       \begin{center}

    \includegraphics[width=\linewidth]{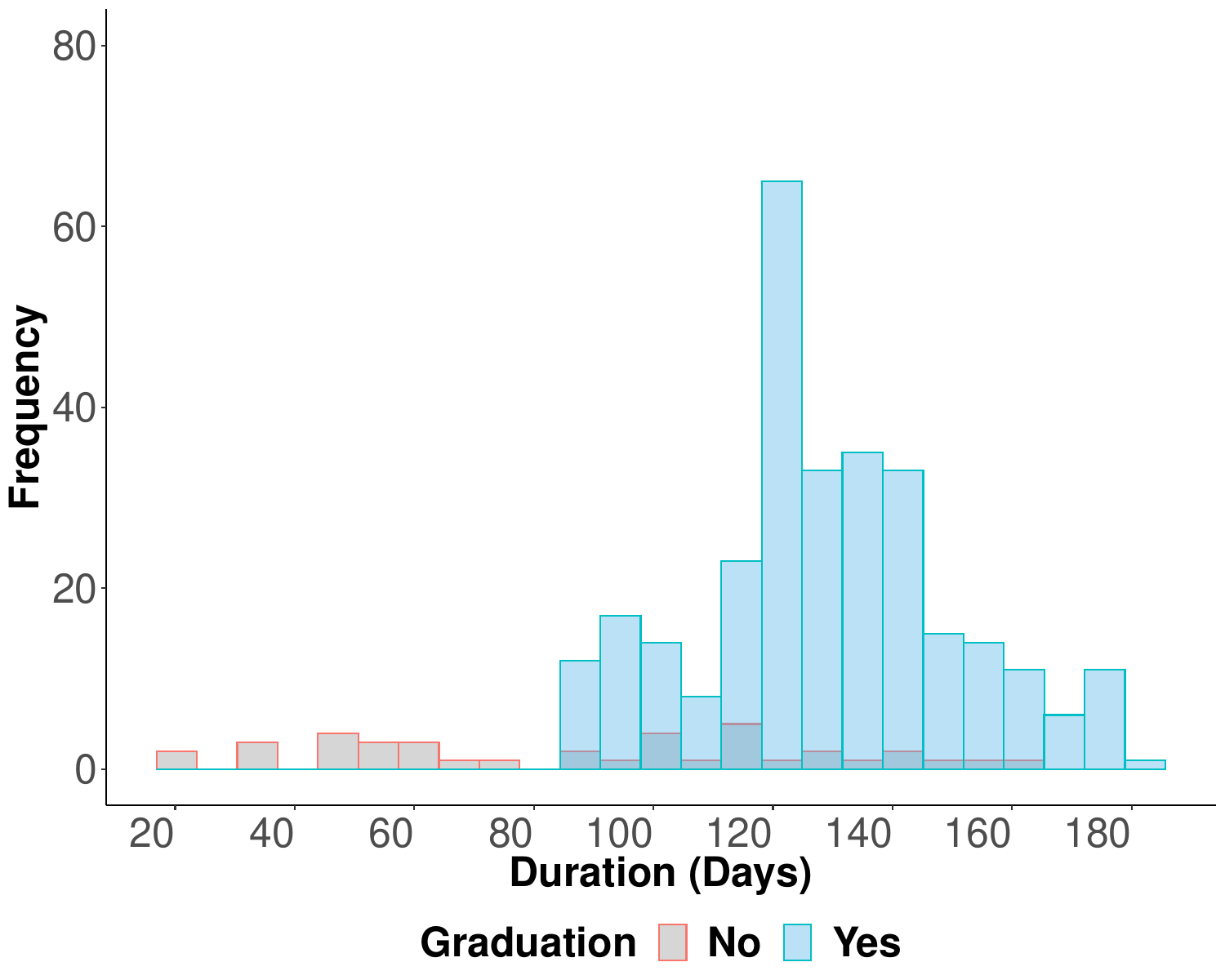}
    \caption*{a. Male Unit 1} 
        \end{center}

  \end{minipage}
  \begin{minipage}[b]{0.33\linewidth}
       \begin{center}

    \includegraphics[width=\linewidth]{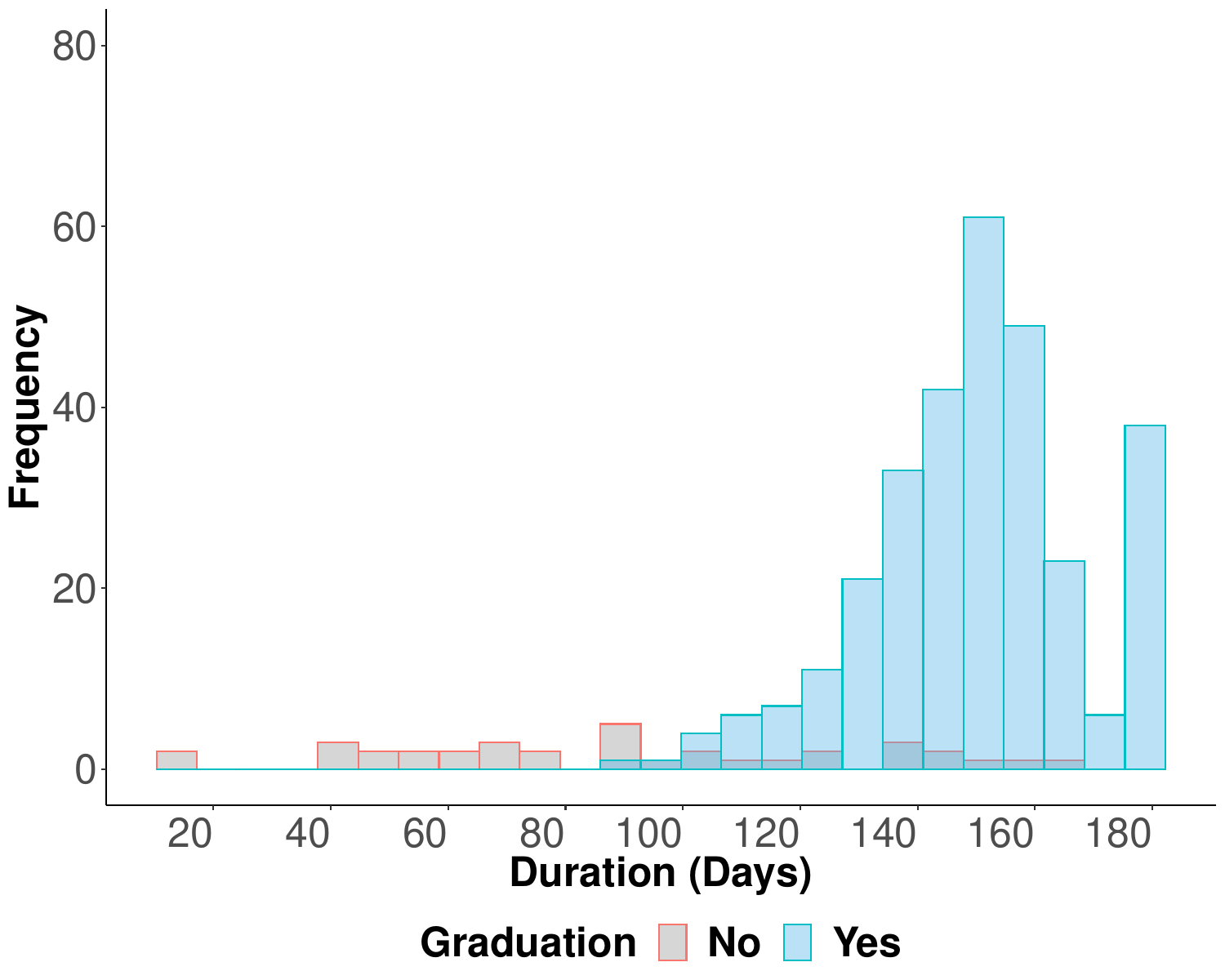}
    \caption*{b. Male Unit 2} 
        \end{center}

  \end{minipage}
    \begin{minipage}[b]{0.33\linewidth}
    
         \begin{center}

    \includegraphics[width=\linewidth]{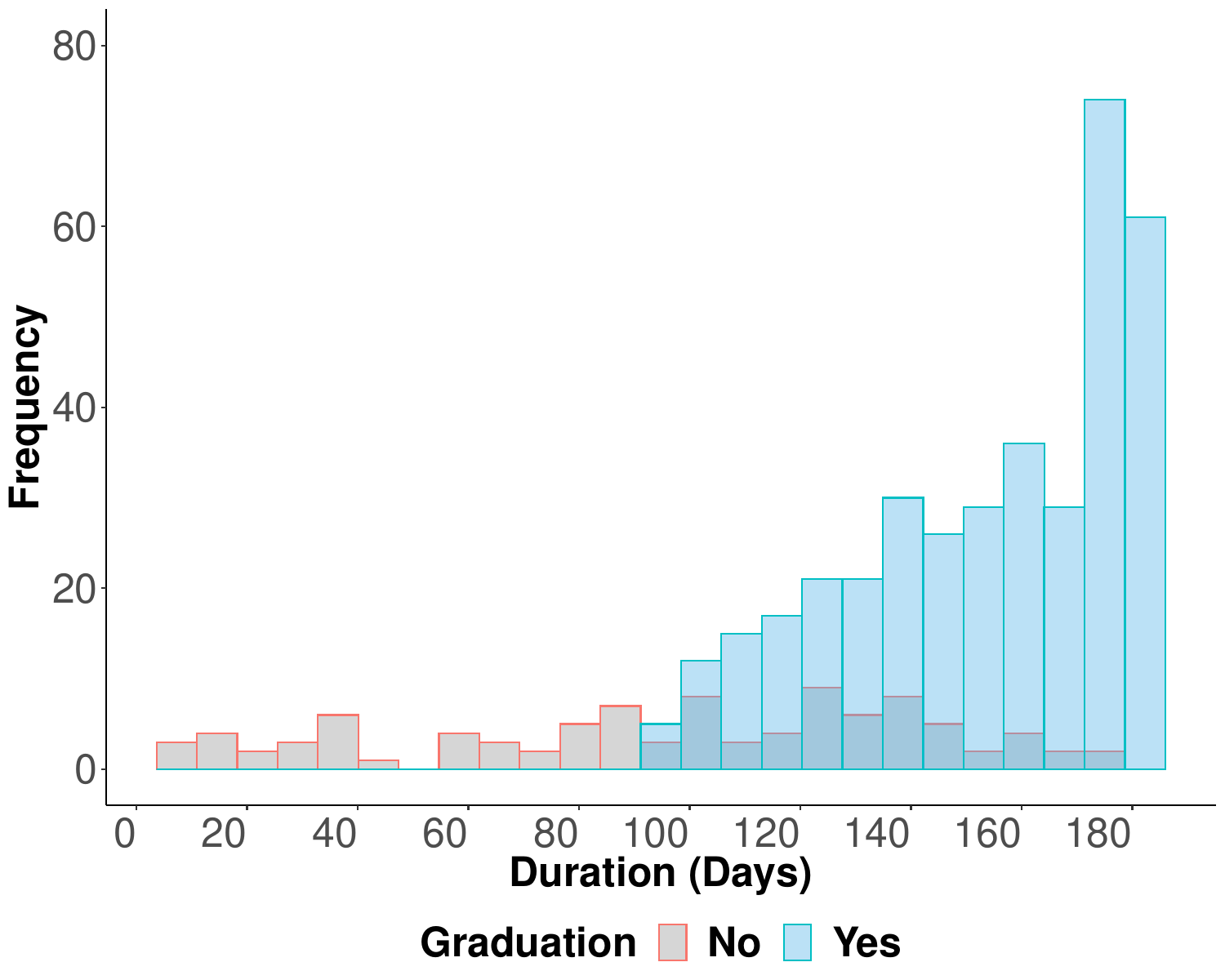}
    \caption*{c. Female Unit} 
          \end{center}

  \end{minipage}
  
  \begin{minipage}{15.5cm}
\footnotesize{
    {Notes: These graphs display histograms for the number of days spent in TC by graduation status in each of the male and female units. Time spent in TC is positively correlated with the propensity to graduate.}}
    \end{minipage}
\end{figure}

The primary explanatory variable is the weighted average of the graduation status of peers ($\frac{\sum_{j} A_{ij}Y_{j,(t-1)}}{\sum_j A_{ij}}$) as observed by resident $i$ just before his/her time of exit. This variable is a function of the affirmations sent and received and the entry and exit dates of $i$ and his/her peers. Using the affirmations data and the time stamps, we can extract all the peers who sent or received affirmations from/to $i$ starting $t_{i}^{\text{entry}}$ to date $t_{i}^{\text{exit}}-1$. In Figure \ref{corrownpeer} (in Appendix), we see that residents who graduated (not graduated) had a higher (lower) fraction of peers who graduated by time $t_{i}^{\text{exit}}$. The correlation is positive, but it is likely confounded with unobserved homophily.

\begin{figure}[!htbp] 
  \caption{Predictors of Graduation Status}
    \label{LSIgrad} 

 \begin{minipage}[b]{0.33\linewidth}
       \begin{center}

    \includegraphics[width=\linewidth]{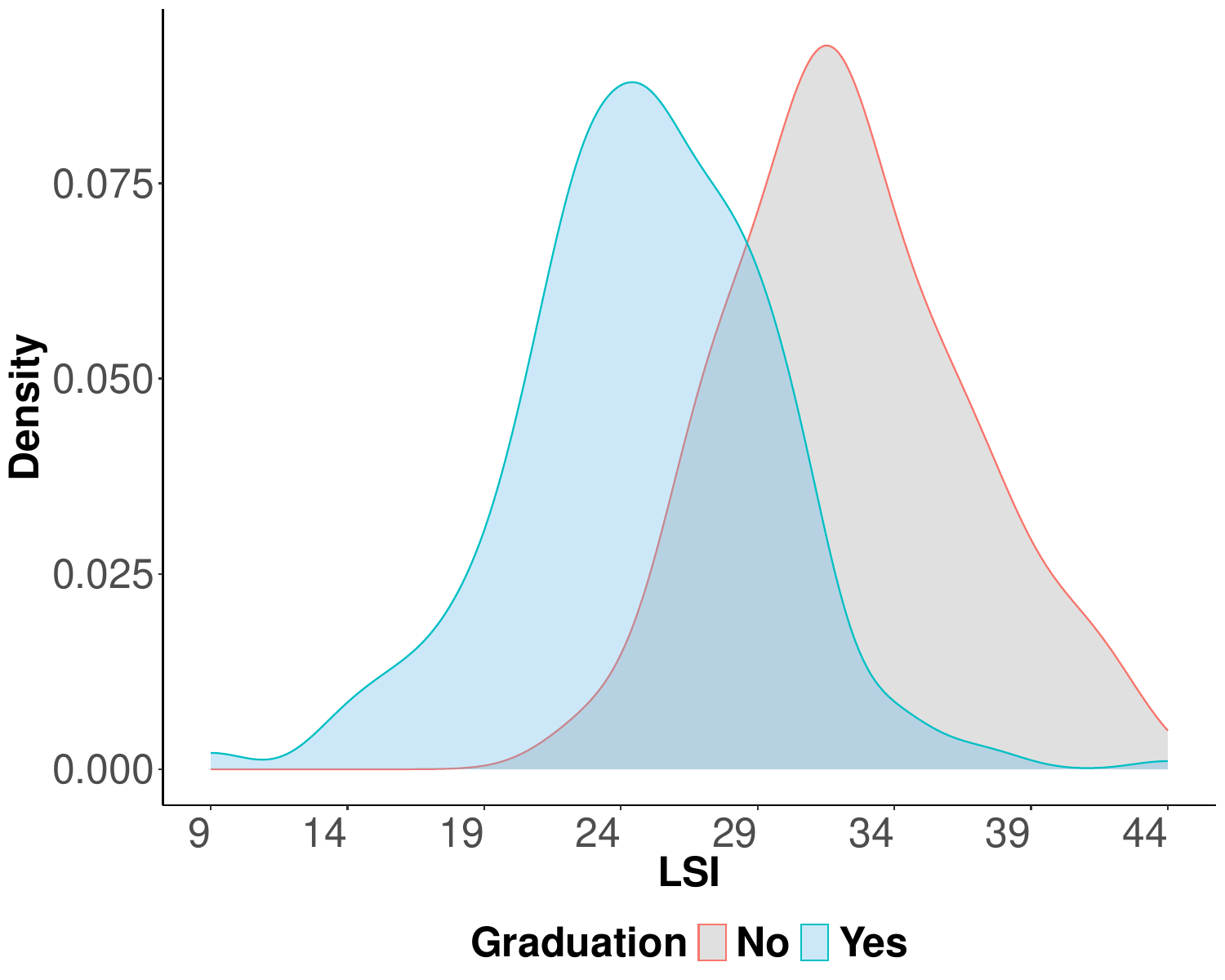}
    \caption*{a. Male Unit 1} 
        \end{center}

  \end{minipage}
  \begin{minipage}[b]{0.33\linewidth}
       \begin{center}

    \includegraphics[width=\linewidth]{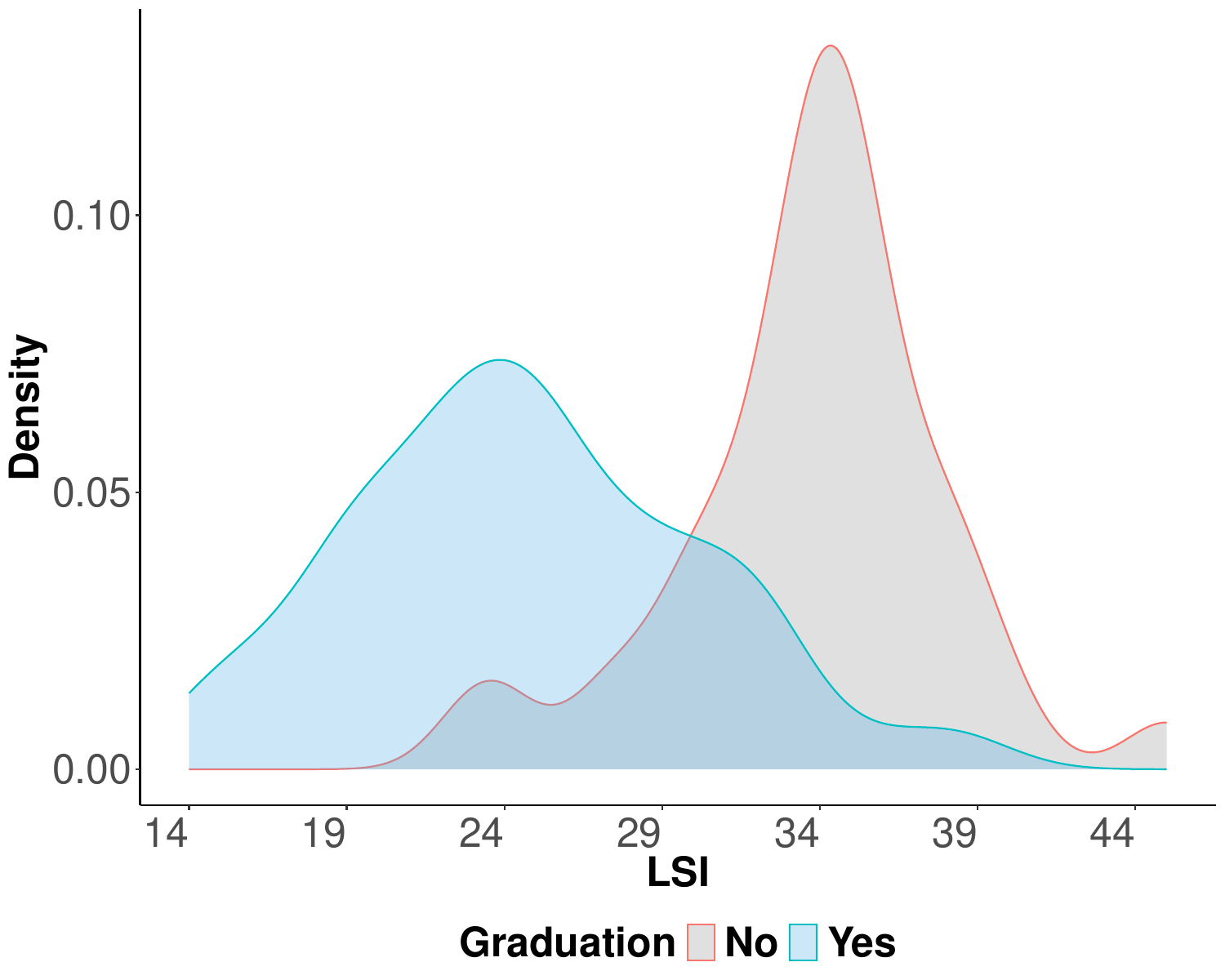}
    \caption*{b. Male Unit 2} 
        \end{center}

  \end{minipage}
    \begin{minipage}[b]{0.33\linewidth}
    
         \begin{center}

    \includegraphics[width=\linewidth]{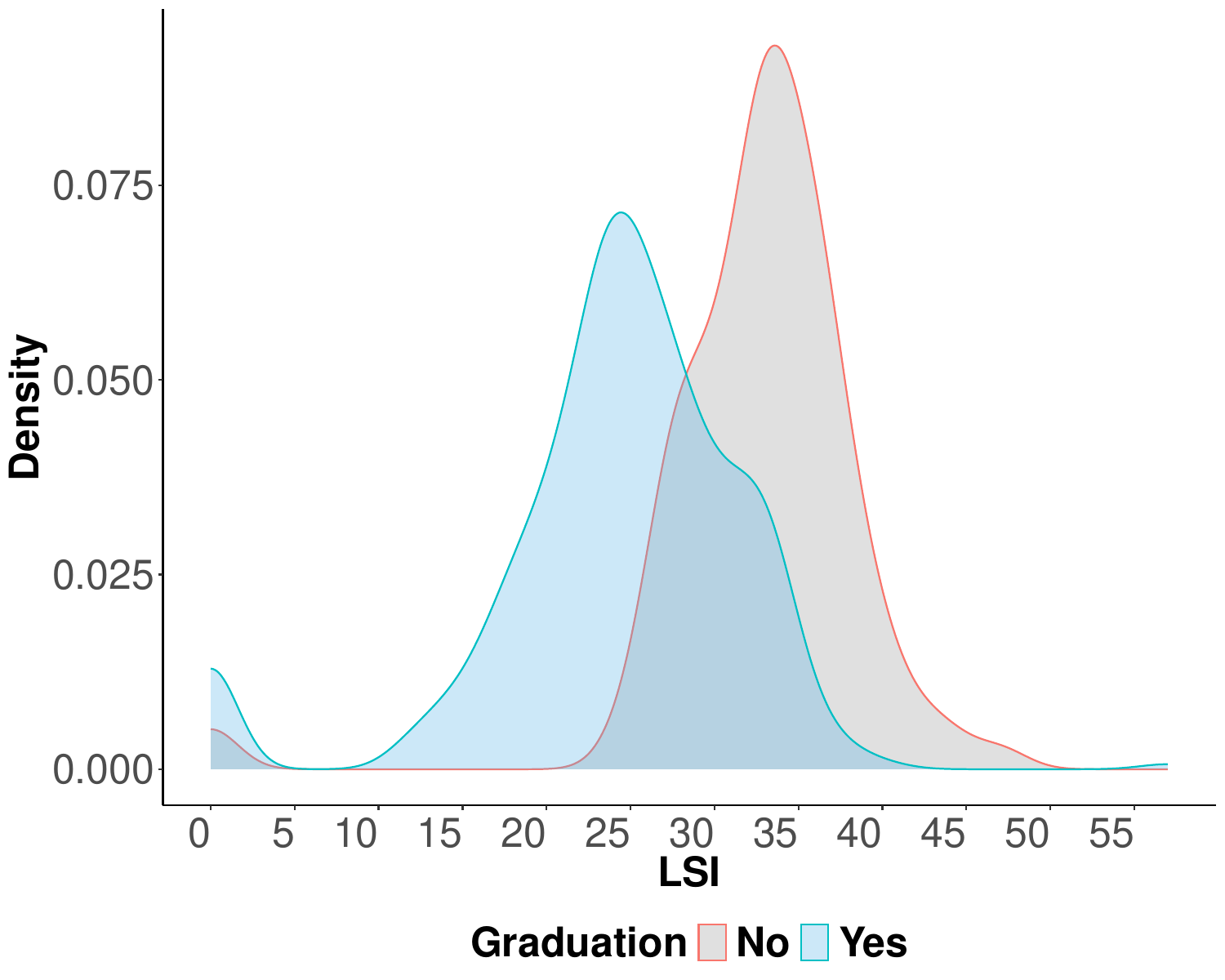}
    \caption*{c. Female Unit} 
          \end{center}

  \end{minipage}
 \begin{minipage}[b]{0.5\linewidth}
       \begin{center}

    \includegraphics[width=\linewidth]{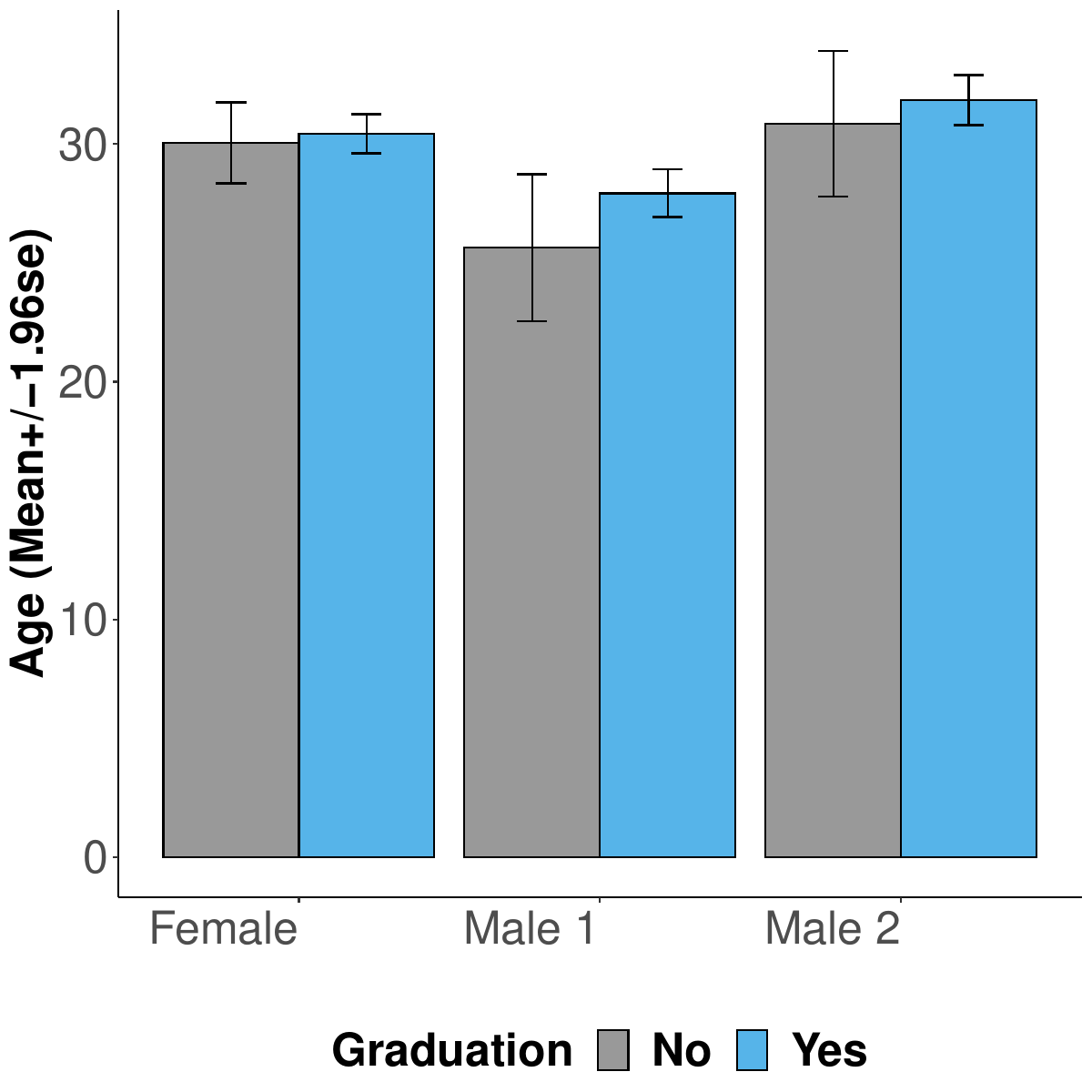}
    \caption*{d. Graduation status and age} 
        \end{center}

  \end{minipage}
  \begin{minipage}[b]{0.5\linewidth}
       \begin{center}

    \includegraphics[width=\linewidth]{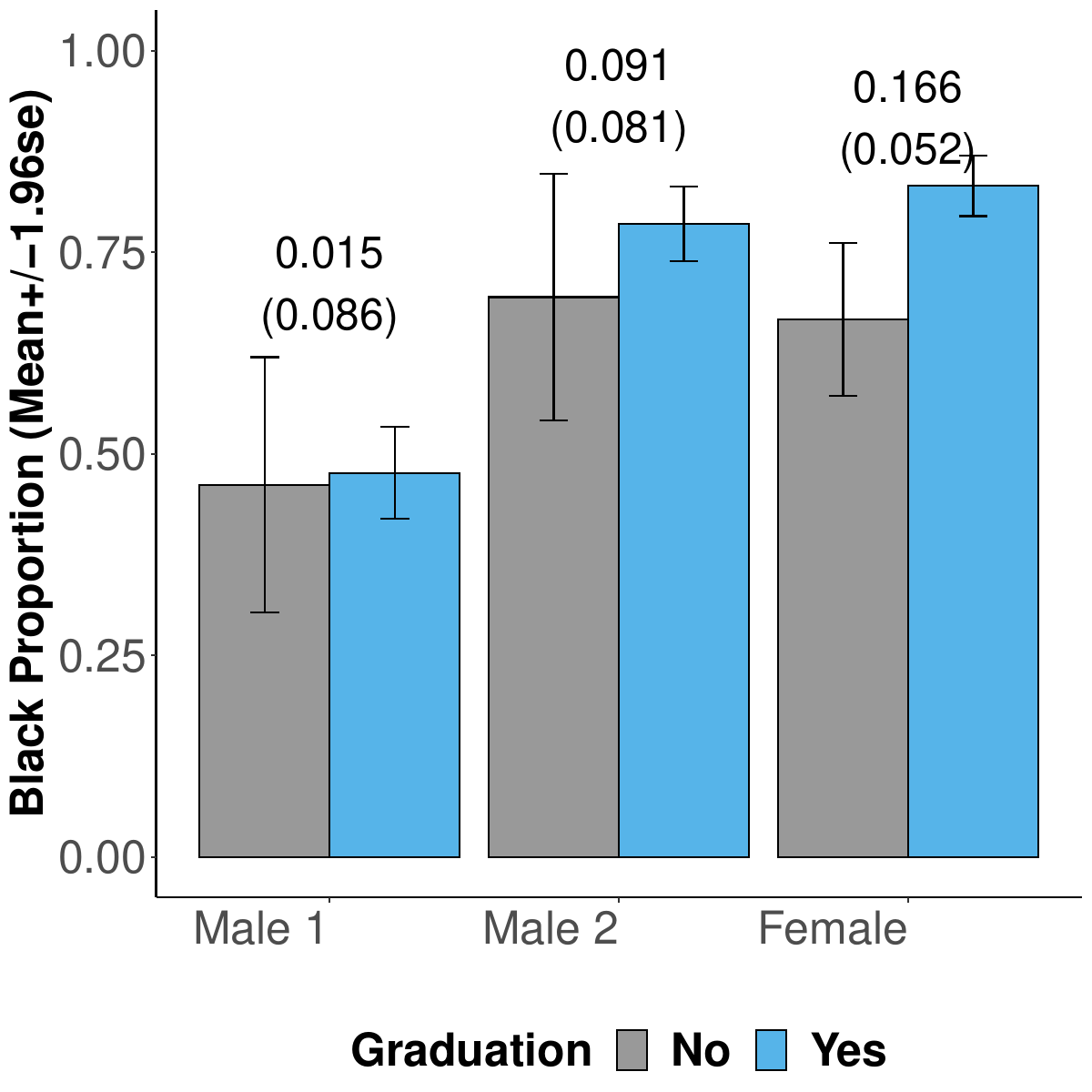}
    \caption*{e. Graduation and Proportion of Blacks} 
        \end{center}

  \end{minipage}
  
  \begin{minipage}{14.0cm}
\footnotesize{
    {Notes: First row (a,b, and c) illustrates the correlation between LSI and $S_{i}$. Panel (d) shows the variation in age by graduation status. Panel (e) shows the proportion of black residents among the residents who were able to graduate successfully and those who did not graduate from the unit. The difference in means for the proportion of blacks and the standard error of the difference is shown at the top of the bar and in parenthesis.}}
    \end{minipage}
\end{figure}

Moreover, we observe a vector of covariates that we control for in our empirical specification. Table \ref{summarytableall3units} (in Appendix) provides summary statistics. Regarding socio-demographic characteristics, the sample is 80\% white in the female unit. The corresponding percentages for the male units 1 and 2 are 47.5\% and 77.6\%, respectively. Age distribution has a mean of 30.4 years for the female unit,  27.6 years for male unit 1, and 31.7 years for the second male unit. The facility also recorded the LSI \citep{andrews1995level} at the time of entry of each resident.  The LSI is a standardized instrument that rates the service needs of residents based on a set of factors, such as substance abuse and family relations, which are known to predict criminal recidivism. The average LSI score is slightly over 25 across all 3 units. 

We show that these covariates likely have some explanatory power for $S_{i}$. Figure \ref{LSIgrad} shows the distribution of LSI by $S_{i}$. We see that for all units, the distribution of LSI is shifted to the right for the non-graduates relative to the graduates. Next, we show the relationship between the age of the residents and their graduation status. We find small differences in the means for age between those who graduated and those who did not graduate from the unit. 

Finally, panel (e) in Figure \ref{LSIgrad} shows the proportion of blacks between the graduates and non-graduates. We see several interesting patterns. First, male unit 2 and female units have a much smaller proportion of black residents than male unit 1. Interestingly, we see large differences in the proportion of blacks among the graduates and non-graduates in these two units. However, this difference is marginal in male unit 1, where there is a balanced proportion of residents by race.

\subsection{Peer Influence with Affirmations Network}

We provide the results for role model effects in all units. Table \ref{rhotable} has 3 panels, one for each unit. Every panel has two columns. The first column presents the parameter estimates for peer graduation without correcting for homophily. In the second column, we provide the estimates using the estimator developed in this paper, which both corrects for latent homophily and additionally adjusts for bias. The sample sizes are 337 in male unit 1, 339 in male unit 2, and 472 in the female unit.

The estimation for our preferred specification (columns (2), (4), and (6)) is done using the following steps. First, we do SVD of the adjacency matrix of affirmations and obtain $\hat{U}$ as described in the methodology section. The selection of the dimension $d$ is done via cross-validation using the method of \cite{li2020network}. The method randomly holds out a portion of node-pairs by selecting each pair independently with a certain probability. Then for a candidate value of $d$, a low-rank matrix completion method is used to obtain an estimate of the expected adjacency matrix. This estimate is used to predict the presence or absence of edge on the held-out node pairs. The method assumes the expected adjacency matrix is a low-rank matrix and was shown to be consistent (in terms of not underselecting $d$) under the RDPG model using a squared-error loss function (Theorem 2 of \cite{li2020network}). We plot the out-of-sample AUC values for $d\in \{1,2,..,20\}$. The AUC rises steeply as we increase $d$ initially, but it stabilizes to a high value around or above 0.9 after a few initial points. We select $d$ corresponding to the point where the AUC becomes stable. Figure \ref{outofsampleauc} (in Appendix) shows the out-of-sample AUC as we increase $d$. Second, we use these estimated $\hat{U}$ as covariates in our outcome model and correct for the bias arising from using $\hat{U}$ instead of true $U$.

We find a decline in the estimates for the peer graduation effect once we correct for homophily and bias. We find the sharpest decline for the female unit ($\approx$ 19\%). The changes for male units are marginal. While the impact of peer graduation status is lower with homophily and bias correction, it continues to increase the residents' graduation substantially. The parameters on peer graduation in columns (2), (4), and (6) are statistically significant in the sense that the 95\% confidence intervals around the point estimates do not include the null effects. We find that a 10\% increase in peers' graduation improves a resident's likelihood of graduating by 4.8-7.7 percentage points (pp).

\begin{table}[!htbp]
\centering
\captionsetup{width=7cm}
\caption{Peer Effects in TC}
\label{rhotable}
\begin{tabular}{p{1.9cm}P{1.55cm}P{1.55cm}P{1.55cm}P{1.55cm}P{1.55cm}P{1.55cm}}\hline\hline
&\multicolumn{6}{c}{Dependent Variable: $S_{i}$}\\\\
&(1)&(2)&(3)&(4)&(5)&(6)\\
Variable  & {OLS} & {Homophily and Bias Adj.} & {OLS} & {Homophily and Bias Adj.} & {OLS} & {Homophily and Bias Adj.}  \\\hline\\
&\multicolumn{2}{c}{\textbf{a. Male Unit 1}}&\multicolumn{2}{c}{\textbf{b. Male Unit 2}}&\multicolumn{2}{c}{\textbf{c. Female Unit}}\\
Peer Grad. & 0.483   & 0.481   & 0.500   & 0.494   & 0.940   & 0.766   \\
                & (0.073) & (0.074) & (0.082) & (0.082) & (0.099) & (0.108) \\
Age             & 0.000   & 0.000   & 0.000   & 0.000   & 0.002   & 0.000   \\
                & (0.002) & (0.002) & (0.002) & (0.002) & (0.002) & (0.002) \\
White           & -0.074  & -0.077  & 0.015   & 0.019   & 0.079   & 0.068   \\
                & (0.030) & (0.031) & (0.035) & (0.035) & (0.039) & (0.038) \\
LSI             & -0.028  & -0.028  & -0.021  & -0.021  & -0.016  & -0.016  \\
                & (0.003) & (0.003) & (0.002) & (0.002) & (0.002) & (0.002) \\
                
\hline\hline
\multicolumn{7}{c}{ \begin{minipage}{13.5 cm}{\footnotesize{Notes: Standard errors are provided in parenthesis. Peer Grad. is constructed using the precise entry and exit dates and the (undirected) affirmations network between the residents as defined in equation \ref{eqrolemodel}. The latent homophily is also estimated from the (undirected) affirmations network.}}
\end{minipage}} \\
\end{tabular}
\end{table}

As for the other covariates, we find some interesting patterns. Being white slightly lowers graduation in male unit 1, where we observe a better balance of white and black residents (refer to Figure \ref{summarytableall3units}). However, in male unit 2 and the female unit, we see a positive (statistically insignificant) coefficient on the white dummy. This specification does not inform us if the impact of peer graduation status varies by race. Importantly, we explore whether the racial identity of role models in peer groups matters for graduation from the unit in the next specification. 

\begin{table}[!htbp]
\centering
\captionsetup{width=13.5cm}
\caption{Peer Effects and Race (Homophily and Bias Adj.)}
\label{rhotable1}
\begin{tabular}{p{6.2cm}P{1.9cm}P{1.9cm}P{1.9cm}}\hline\hline
&\multicolumn{3}{c}{Dependent Variable: $S_{i}$}\\\\
 & (1) & (2) & (3) \\
Variable & Male Unit 1 & Male Unit 2 & Female Unit \\
 \hline

Peer Grad. (White)             & 0.225   & 0.609   & 0.560   \\
                                    & (0.097) & (0.161) & (0.235) \\
Peer Grad. (Non-White)         & 0.194   & 0.113   & 0.202   \\
                                    & (0.101) & (0.123) & (0.235) \\
White                               & -0.140  & 0.140   & 0.064   \\
                                    & (0.071) & (0.080) & (0.092) \\
Peer Grad. (White) $\times$ White     & 0.064   & -0.195  & 0.079   \\
                                    & (0.134) & (0.189) & (0.259) \\
Peer Grad. (Non-White) $\times$ White & 0.093   & -0.116  & -0.069  \\
                                    & (0.141) & (0.137) & (0.254) \\
Age                                 & 0.000   & 0.000   & 0.000   \\
                                    & (0.002) & (0.002) & (0.002) \\
LSI                                 & -0.029  & -0.021  & -0.016  \\
                                    & (0.003) & (0.002) & (0.002) \\
\hline\hline
\multicolumn{4}{c}{ \begin{minipage}{12.5 cm}{\footnotesize{Notes: Standard errors are provided in parenthesis. The baseline category is the mean graduation for a non-white resident. Peer Grad. is constructed separately for white and non-white peers using the precise entry and exit dates and the affirmations network between the residents as defined in equation \ref{eqrolemodel}. The latent homophily is also estimated from the affirmations network.}}
\end{minipage}} \\
\end{tabular}
\end{table}

To capture the heterogeneous effect, these peer variables are constructed separately as follows. For each resident $i$, we extract the graduation status of white peers who left the unit before $i$. Using this subset of peers, which is specific to $i$, we calculate the weighted average of graduation status for white peers $(\frac{\sum_{j_{\text{white}}} A_{ij}Y_{j,(t-1)}}{\sum_{j_{\text{white}}} A_{ij}})$. Similarly, we repeat this exercise for non-white peers and construct $ (\frac{\sum_{j_{\text{non-white}}} A_{ij}Y_{j,(t-1)}}{\sum_{j_{\text{non-white}}} A_{ij}}) $. We use these peer variables on the RHS and interact with the white dummy. Table \ref{rhotable1} presents the estimates. The intercept of the regression (not shown) provides us with the mean graduation status for the baseline group (non-white residents), while the coefficients for Peer Grad (white) and Peer Grad (Non-white) provide effects of white peers' graduation and non-white peers' graduation on non-white residents. Across all 3 units, we find that the peer graduation of both white and non-white residents impacts the graduation status positively. However, the standard errors for the coefficient on peer graduation for non-white residents are very high, especially in male unit 2 and female unit due to small samples of non-white residents. In the male unit 1, where there is a better balance of white and non-white residents, we see the effects are almost identical. We do not have enough power to distinguish a differential impact of peer graduation of white on white and peer graduation of non-white on white residents. 

In addition to the undirected network, we provide estimates for the sender and receiver affirmations network separately in Table \ref{rhotablerobustnessdirected} in the robustness checks. 

\subsection{Second Definition of Role Model}
In this section, we use the second definition of role models described in the methodology (Section \ref{rolemodeleffect}). According to the second definition, a peer group only consists of those peers who have exited the unit before resident $i's$ exit date. Consequently, resident $i$ observes the final graduation status of this subset of peers. We report the estimates in Table \ref{rhotableseconddef}. The direction of the role model effect continues to be positive and large compared to their standard errors, as seen in Table \ref{rhotable}. However, the magnitude of the coefficients drops substantially. This could be because this definition makes the set of peers much smaller than the first definition. Nevertheless, we see a decline in the coefficient on peer graduation status when we control for homophily and adjust for the bias. In line with the observation in Table \ref{rhotable}, the drop is large in the female unit ($>$ 50\%) relative to the two male units. The large change in the peer influence parameter in the female unit with latent homophily adjustment and measurement error correction underscores the importance of adjusting for homophily when estimating peer influence in observational data.

\begin{table}[!htbp]
\centering
\captionsetup{width=12cm}
\caption{Peer Effects in TC (Second Def. of Role Model)}
\label{rhotableseconddef}
\begin{tabular}{p{1.8cm}P{1.5cm}P{1.5cm}P{1.5cm}P{1.5cm}P{1.5cm}P{1.5cm}}\hline\hline
&\multicolumn{6}{c}{Dependent Variable: $S_{i}$}\\\\
&(1)&(2)&(3)&(4)&(5)&(6)\\
Variable  & {OLS} & {Homophily and Bias Adj.} & {OLS} & {Homophily and Bias Adj.} & {OLS} & {Homophily and Bias Adj.}  \\\hline\\
&\multicolumn{2}{c}{\textbf{a. Male Unit 1}}&\multicolumn{2}{c}{\textbf{b. Male Unit 2}}&\multicolumn{2}{c}{\textbf{c. Female Unit}}\\
Peer Grad. & 0.288   & 0.286   & 0.179   & 0.172   & 0.437   & 0.212   \\
                & (0.073) & (0.076) & (0.090) & (0.091) & (0.104) & (0.107) \\
Age             & 0.000   & 0.000   & 0.000   & 0.001   & 0.001   & -0.001  \\
                & (0.002) & (0.002) & (0.002) & (0.002) & (0.002) & (0.002) \\
White           & -0.045  & -0.049  & 0.016   & 0.019   & 0.106   & 0.083   \\
                & (0.031) & (0.032) & (0.037) & (0.037) & (0.042) & (0.040) \\
LSI             & -0.028  & -0.029  & -0.023  & -0.023  & -0.019  & -0.017  \\
                & (0.003) & (0.003) & (0.002) & (0.002) & (0.002) & (0.002) \\
\hline\hline
\multicolumn{7}{c}{ \begin{minipage}{13.5 cm}{\footnotesize{Notes: Standard errors are provided in parenthesis. Peer Grad. is constructed using the precise entry and exit dates and the affirmations network between the residents as defined in equation \ref{eqrolemodel1}. The latent homophily is also estimated from the affirmations network.}}
\end{minipage}} 
\end{tabular}
\end{table}

\subsection{Corrections Network}
\label{correctionssection}
So far, we have analyzed peer effects using the affirmations network. Next, we use the corrections network to estimate peer influence in TCs. A primary difference between the affirmations and the corrections network is that we see residents send each other many more corrections than affirmations. Consequently, our analysis sample is larger when we use the corrections network than the affirmations network. 

Although the corrections constitute negative feedback, the objective is to benefit the recipient by teaching them about the pro-social norms. Some examples of corrections from the TC data are unprosocial language, leaving silverware on a tray, disrespectful behavior toward another resident at the lunch table, and lying and misrepresentation.

The sample sizes are 774 in male unit 1, 391 in male unit 2, and 1046 in the female unit. However, the impact of peers who sent affirmations can differ from those who sent corrections. Table \ref{rhotablecorrect} shows that peer graduation positively impacts the likelihood of graduation from the unit. We find that LSI negatively impacts the propensity to graduate from the unit, which is similar to the coefficients on LSI in Table \ref{rhotable}. Lastly, the age and white dummy coefficients are not statistically significantly different from 0 for most specifications. We examine the network as directed sender's and receiver's network in Appendix Table \ref{rhotablerobustnessdirectedcorrectionsnetwork}. The results from the directed corrections network suggest that both the sender's and receiver's network outcomes positively and significantly impact the graduation status of the residents, but the magnitude of the impact is larger if the residents received corrections from a role model than if they sent corrections to that role model. We also examine heterogeneity by race, which is reported in Table \ref{rhotablecorrectionshte} in the Appendix.

\begin{table}[!htbp]
\centering
\captionsetup{width=12cm}
\caption{Peer Effects in TC (Homophily in Corrections Network)}
\label{rhotablecorrect}
\begin{tabular}{p{1.8 cm}P{1.5cm}P{1.5cm}P{1.5cm}P{1.5cm}P{1.5cm}P{1.5cm}}\hline\hline
&\multicolumn{6}{c}{Dependent Variable: $(S_{i})$}\\\\
&(1)&(2)&(3)&(4)&(5)&(6)\\
Variable  & {OLS} & {Homophily and Bias Adj.} & {OLS} & {Homophily and Bias Adj.} & {OLS} & {Homophily and Bias Adj.}  \\\hline\\
&\multicolumn{2}{c}{\textbf{a. Male Unit 1}}&\multicolumn{2}{c}{\textbf{b. Male Unit 2}}&\multicolumn{2}{c}{\textbf{c. Female Unit}}\\
Peer Grad. & 0.359   & 0.339   & 0.723   & 0.722   & 0.544   & 0.523   \\
                & (0.054) & (0.056) & (0.077) & (0.077) & (0.056) & (0.056) \\
Age             & 0.001   & 0.001   & -0.000  & -0.001  & 0.001   & 0.001   \\
                & (0.001) & (0.001) & (0.001) & (0.001) & (0.001) & (0.001) \\
White           & -0.035  & -0.035  & 0.004   & 0.009   & 0.071   & 0.074   \\
                & (0.020) & (0.021) & (0.032) & (0.032) & (0.022) & (0.022) \\
LSI             & -0.025  & -0.025  & -0.023  & -0.024  & -0.016  & -0.016  \\
                & (0.002) & (0.002) & (0.002) & (0.002) & (0.001) & (0.001) \\
\hline\hline
\multicolumn{7}{c}{ \begin{minipage}{13.5 cm}{\footnotesize{Notes: Standard errors are provided in parenthesis. Peer Grad. is constructed using the precise entry and exit dates and the corrections network between the residents as defined in equation \ref{eqrolemodel}. The latent homophily is also estimated from the corrections network.}}
\end{minipage}}
\end{tabular}
\end{table}

\subsection{Robustness Checks}
In this section, we provide robustness checks for our estimates of peer effects. We begin the robustness checks by changing the computation of the peer graduation variable. We use a weighted network of peers for all our analyses so far. Here, we binarize this network. In other words, we associate a value 1 for an $i,j^{th}$ pair of residents if they have sent and/or received affirmations to/from each other at least once during their stay in the unit. Panel (a) in Table \ref{rhotablerobustness} suggests a positive role model effect whose magnitude is comparable to the weighted network case. We change the model for estimating the latent homophily variables and use a latent space model to extract the (uncorrected) latent factors \cite{hoff2002latent,handcock2007model}. We report these estimates in panel (b) in Table \ref{rhotablerobustness}. Again, we find a positive role model effect. 

\begin{table}[!htbp]
\centering
\captionsetup{width=8.5cm}
\caption{Robustness Checks}
\label{rhotablerobustness}
\begin{tabular}{p{3.5 cm}P{1.95cm}P{1.95cm}P{1.95cm}}\hline\hline
&\multicolumn{3}{c}{Dependent Variable: $S_{i}$}\\\\
 & (1) & (2) & (3) \\
Variable & Male Unit 1 & Male Unit 2 & Female Unit \\
 \hline
                 \multicolumn{4}{c}{a. Binarizing the network (Homophily and Bias Adj.)}\\
 Peer Graduation & 0.521   & 0.553   & 0.777   \\
                & (0.080) & (0.091) & (0.123) \\
\\
            \multicolumn{4}{c}{b. Latent Space Models (Homophily Corrected)}\\
Peer Graduation & 0.483   & 0.492   & 1.104   \\
                & (0.074) & (0.083) & (0.100) \\\\
                \multicolumn{4}{c}{c. Latent Space Models with Node Level Covariates (Homophily Corrected)}\\
Peer Graduation & 0.482   & 0.495   & 0.956   \\
                & (0.073) & (0.082) & (0.099) \\
                \\

\hline\hline
\multicolumn{4}{c}{ \begin{minipage}{11.0 cm}{\footnotesize{Notes: Standard errors are provided in parenthesis. We control for additional covariates in all specifications, i.e., age, white dummy, and LSI. Peer Grad. is constructed using the precise entry and exit dates and the affirmations network between the residents as defined in equation \ref{eqrolemodel}. The latent homophily is also estimated from the affirmations network.}}
\end{minipage}} \\
\end{tabular}
\end{table}

In panel (c), we include node-level covariates with multiplicative latent factors in the latent space model. Figure \ref{latentspacewithcovariates} in Appendix  shows histograms of relevant network properties from networks simulated from the fitted posterior distribution to the data. In each case, the red vertical line denotes the observed property. We observe that the model can replicate heterogeneity (differences) in row and column means quite well, possibly due to the presence of node-level predictors. Note since this is an undirected network, the dyadic dependencies are not relevant and are mechanically set to 1. The model can also replicate the observed cycle and transitive dependencies, manifestations of homophily due to the multiplicative latent variables. This suggests the model fits the data quite well, and the multiplicative latent vectors contain information on homophily. The peer effect parameter using the additive and multiplicative effects model with covariates in panel (c) is also comparable in sign and magnitude to the peer effect seen in the main analysis.

As described in the main analysis, we use an undirected network of affirmations for both constructing the peer graduation variable and the singular value decomposition to estimate the latent homophily vectors. However, here, we separately estimate peer effects from the directed sender and receiver affirmations network. Precisely, for the estimation in panel (a) in Table \ref{rhotablerobustnessdirected} we consider only those peers in affirmations network who were sent affirmations from node $i$. On the contrary, in panel (b), results are shown for the network peers from whom node $i$ received affirmations during their stay. Comparing the estimate for $\rho$ in the two panels suggest that for the affirmations network receiving an affirmation from a role model who has successfully graduated during $i's$ stay has a larger effect on likelihood of $i's$ graduation than sending an affirmation to a successful role model.

\begin{table}[!h]
\centering
\captionsetup{width=8.5cm}
\caption{Directed Sender and Receiver Networks}
\label{rhotablerobustnessdirected}
\begin{tabular}{p{3.5 cm}P{1.95cm}P{1.95cm}P{1.95cm}}\hline\hline
&\multicolumn{3}{c}{Dependent Variable: $S_{i}$}\\\\
 & (1) & (2) & (3) \\
Variable & Male Unit 1 & Male Unit 2 & Female Unit \\
 \hline
 \multicolumn{4}{c}{a. Directed Sender Affirmations Network}\\
Peer Graduation & 0.295   & 0.377   & 0.711   \\
                & (0.055) & (0.072) & (0.092) \\
Age             & 0.000   & -0.000  & 0.001   \\
                & (0.002) & (0.002) & (0.002) \\
White           & -0.072  & 0.014   & 0.077   \\
                & (0.031) & (0.036) & (0.039) \\
LSI             & -0.028  & -0.021  & -0.016  \\
                & (0.003) & (0.002) & (0.002) \\\\
 \multicolumn{4}{c}{b. Directed Receiver Affirmations Network}\\
 Peer Graduation & 0.417   & 0.458   & 0.824   \\
                & (0.072) & (0.076) & (0.104) \\
Age             & 0.001   & -0.000  & 0.000   \\
                & (0.002) & (0.002) & (0.002) \\
White           & -0.074  & 0.025   & 0.078   \\
                & (0.031) & (0.035) & (0.039) \\
LSI             & -0.028  & -0.021  & -0.016  \\
                & (0.003) & (0.002) & (0.002) \\
\hline\hline
\multicolumn{4}{c}{ \begin{minipage}{11.0 cm}{\footnotesize{Notes: Standard errors are provided in parenthesis. Peer Grad. is constructed using the precise entry and exit dates and the affirmations network between the residents as defined in equation \ref{eqrolemodel}. The latent homophily is also estimated from the affirmations network. For panel (a), affirmations only consist of directed sender affirmations from node $i$ to peers, and panel (b) consists of directed receiver affirmations.}}
\end{minipage}} \\
\end{tabular}
\end{table}

\begin{table}[!h]
\centering
\captionsetup{width=13.5cm}
\caption{Probit Model for Binary Response}
\label{rhotable1probit}
\begin{tabular}{p{3.2cm}P{1.9cm}P{1.9cm}P{1.9cm}}\hline\hline
&\multicolumn{3}{c}{Dependent Variable: $S_{i}$}\\\\
 & (1) & (2) & (3) \\
Variable & Male Unit 1 & Male Unit 2 & Female Unit \\
 \hline
 Peer Graduation & 0.457   & 0.307   & 0.568   \\
                & (0.074) & (0.062) & (0.097) \\
Age             & -0.000  & -0.000  & 0.001   \\
                & (0.001) & (0.001) & (0.002) \\
LSI             & -0.023  & -0.017  & -0.017  \\
                & (0.003) & (0.002) & (0.002) \\
White           & -0.028  & 0.020   & 0.053   \\
                & (0.025) & (0.027) & (0.032) \\
\hline\hline
\multicolumn{4}{c}{ \begin{minipage}{9.5 cm}{\footnotesize{Notes: Standard errors are provided in parenthesis. The latent homophily is estimated from the affirmations network.}}
\end{minipage}} \\
\end{tabular}
\end{table}

\begin{figure}[!h] 
  \caption{Average Partial Effect Probit Model}
    \label{AMEgraphs} 

 \begin{minipage}[b]{0.33\linewidth}
       \begin{center}

    \includegraphics[width=\linewidth]{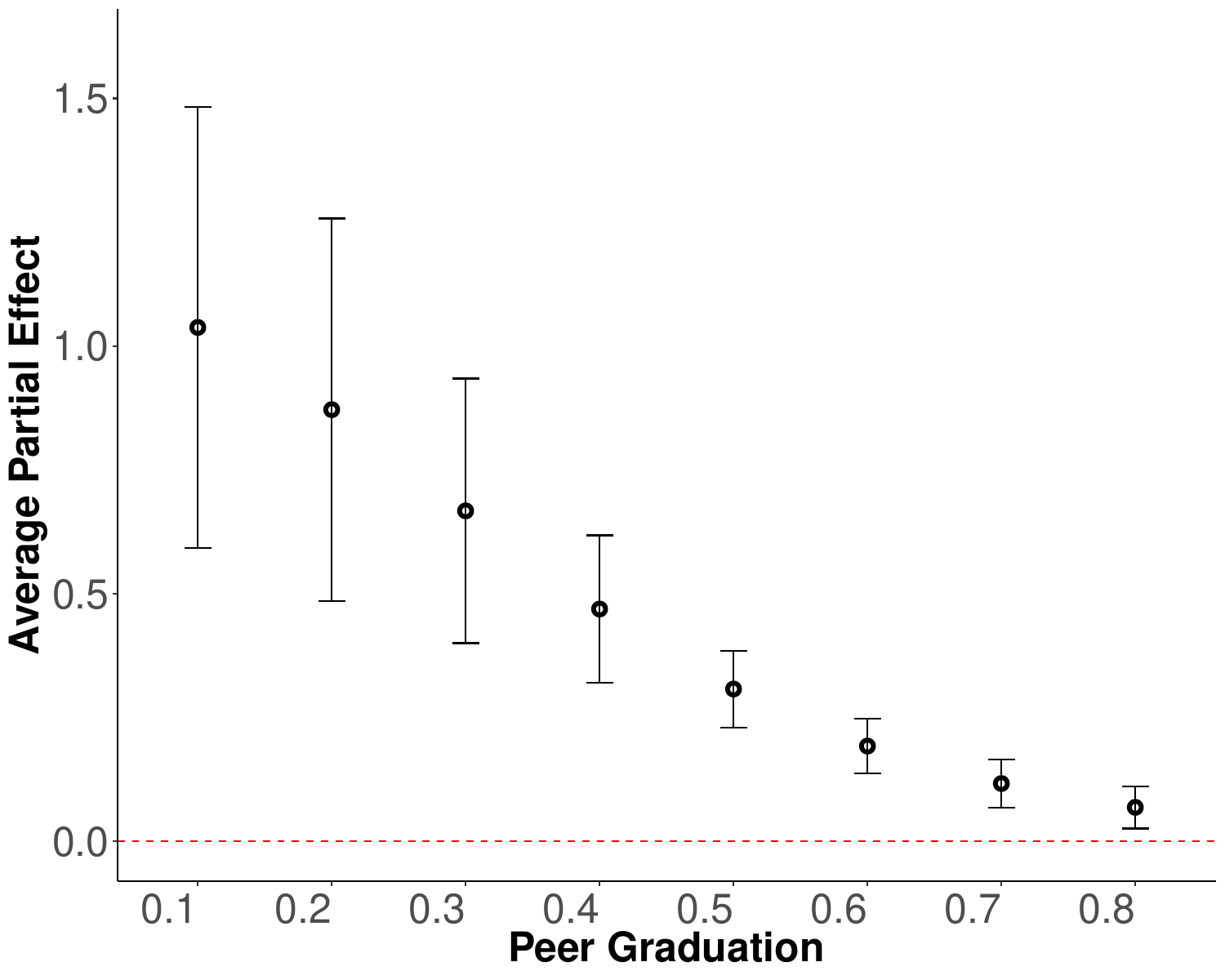}
    \caption*{a. Male Unit 1} 
        \end{center}

  \end{minipage}
  \begin{minipage}[b]{0.33\linewidth}
       \begin{center}

    \includegraphics[width=\linewidth]{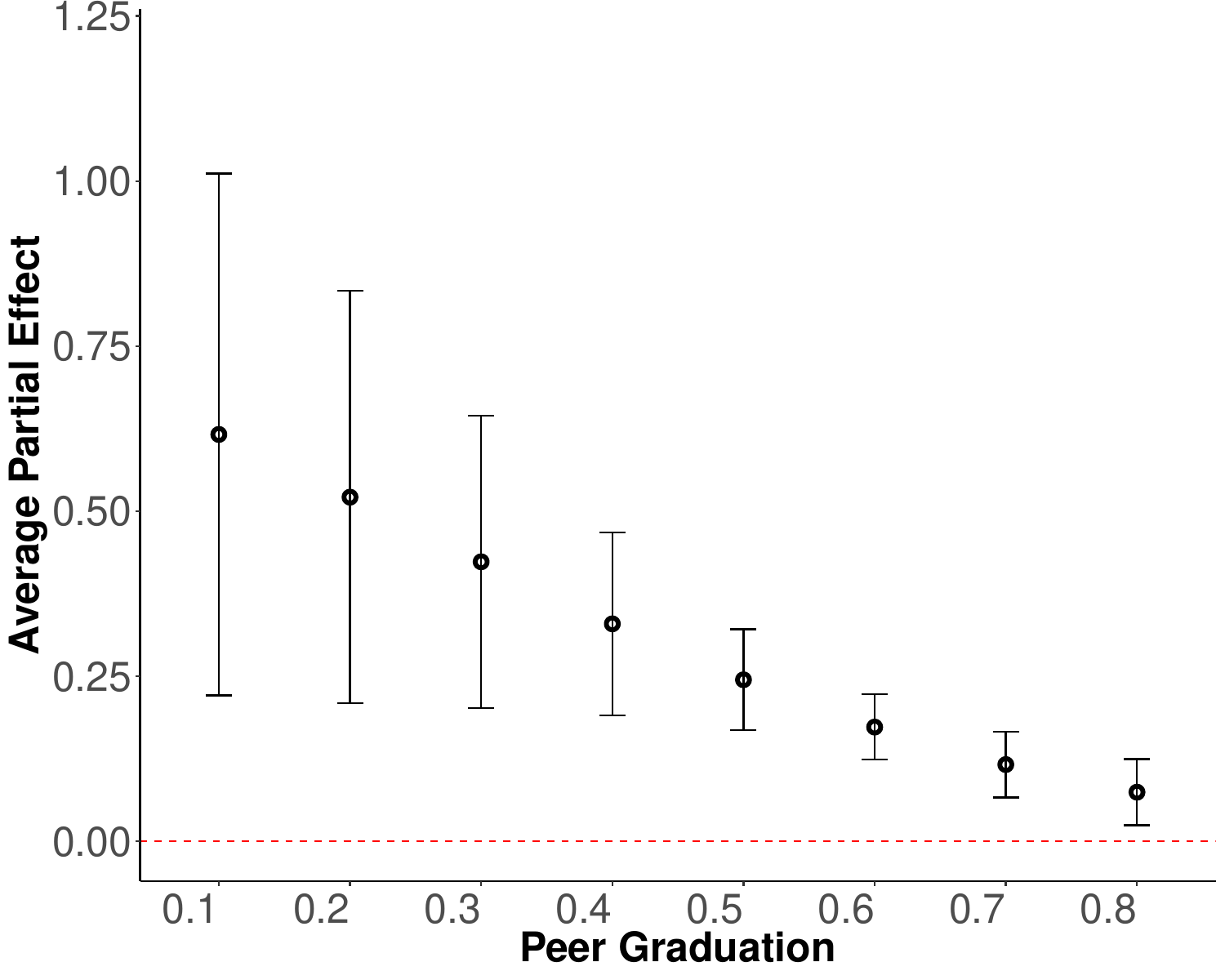}
    \caption*{b. Male Unit 2} 
        \end{center}

  \end{minipage}
    \begin{minipage}[b]{0.33\linewidth}
    
         \begin{center}

    \includegraphics[width=\linewidth]{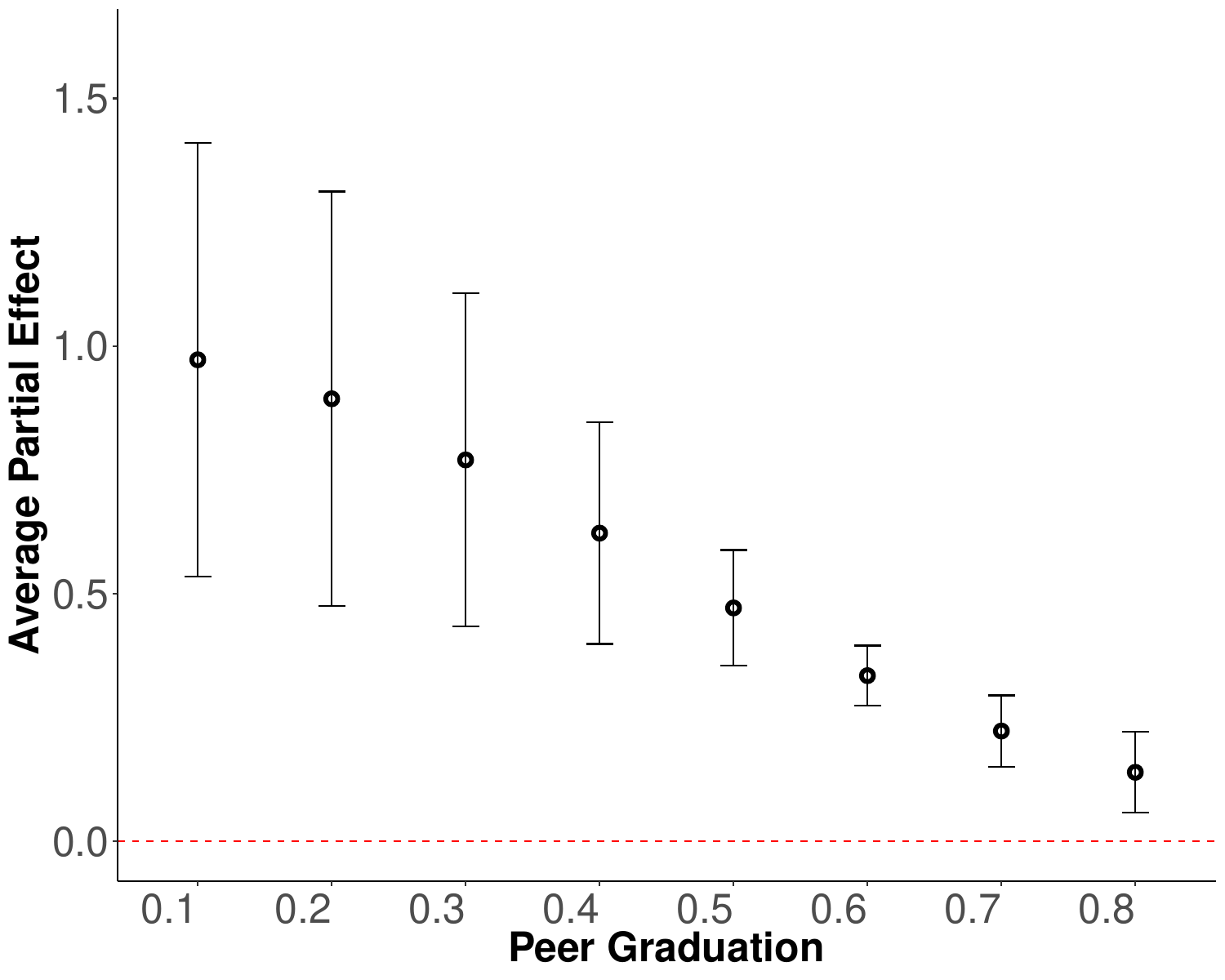}
    \caption*{c. Female Unit} 
          \end{center}

  \end{minipage}
  
  \begin{minipage}{15.5cm}
\footnotesize{
    {Notes: These graphs display the average partial effects (APE) for the binary response with respect to peer graduation variable. In addition to the point estimate, 95\% CI is plotted where the standard errors are computed using the delta method.}}
    \end{minipage}
\end{figure}

\subsection{Binary Response Variable} As noted before, the response variable for the empirical section is binary. The analysis so far has focused on linear peer effects model which for binary responses can be thought of as a linear probability model (LMP). As discussed in section 2.6, while OLS produces consistent and unbiased estimates in LPM, researchers often prefer a probit regression model for binary responses. We extended the result for Theorem \ref{theorem1} for the binary case in Theorem \ref{asympbiasprobit}. Using that result we estimate the model \ref{probitnar} using probit regression and report the results in Table \ref{rhotable1probit}. The table reports the average partial effects for every covariate.
As discussed in section 2.6, the partial effect of a covariate can be interpreted as the change in the expected response (or probability of being 1) due to a 1-unit change in a covariate. Since the probit is a non-linear model, the partial effect would vary for each individual. We compute the partial effect for all residents and then take an average to calculate the average partial effects reported in Table \ref{rhotable1probit}. We find comparable results to those reported in Table \ref{rhotable}. For example, the 3rd column of \ref{rhotable1probit} shows that the probability of successful graduation for an individual on an average increases by 0.568 for one unit increase in peer graduation (or by about 0.0568 for a 0.1 increase in peer graduation). We also provide a plot of average partial effects (APE) with varying levels of the peer graduation variable in Figure \ref{AMEgraphs}. As typical with non-linear models, the APE of peer graduation is higher for lower values of peer graduation. The APE gradually shrinks towards zero as the value of peer graduation increases, indicating a diminished return from marginal increases in peer graduation.

\section{Discussion}

\subsection{Counterfactual Analysis}

\begin{figure}[!htbp] 
  \caption{Toy Example: Direct and Cascading Impact of Intervention}
    \label{toyexample} 

 \begin{minipage}[b]{0.24\textwidth}
       \begin{center}

    \begin{tikzpicture}
\Vertex[color=red,x=2,y=1,size=0.3]{A} \Vertex[x=2,color=green!70!blue,x=1,y=-1,size=0.3]{B} \Vertex[x=2,y=-1,color=red,size=0.3]{C} \Vertex[x=2,y=-2,color=green!70!blue,size=0.3]{D}
\Edge[color=black](A)(C)
\Edge[color=black](B)(D)
\Edge[color=black](A)(D)
\end{tikzpicture}

    \caption*{a.} 
        \end{center}

  \end{minipage}%
  \begin{minipage}[b]{0.24\textwidth}
       \begin{center}

    \begin{tikzpicture}
\Vertex[color=red,x=2,y=1,size=0.3]{A} \Vertex[x=2,color=green!70!blue,x=1,y=-1,size=0.3]{B} \Vertex[x=2,y=-1,color=red,size=0.3]{C} \Vertex[x=2,y=-2,color=green!70!blue,size=0.3]{D} \Vertex[color=green!70!blue,x=2,y=2,style=dashed,size=0.3]{E} 
\Edge[color=black](A)(C)
\Edge[color=black](B)(D)
\Edge[color=black](A)(D)
\Edge[color=black,style={dashed}](A)(E)
\end{tikzpicture}
    \caption*{b. } 
        \end{center}

  \end{minipage}%
    \begin{minipage}[b]{0.24\textwidth}
    
         \begin{center}

    \begin{tikzpicture}
\Vertex[color=green!70!blue,x=2,y=1,style={dashed},size=0.3]{A} \Vertex[x=2,color=green!70!blue,x=1,y=-1,size=0.3]{B} \Vertex[x=2,y=-1,color=red,size=0.3]{C} \Vertex[x=2,y=-2,color=green!70!blue,size=0.3]{D} \Vertex[color=green!70!blue,x=2,y=2,style=dashed,size=0.3]{E} 
\Edge[color=black](A)(C)
\Edge[color=black](B)(D)
\Edge[color=black](A)(D)
\Edge[color=black,style={dashed}](A)(E)
\end{tikzpicture}

    \caption*{c. } 
          \end{center}

  \end{minipage}%
     \begin{minipage}[b]{0.24\textwidth}
    
         \begin{center}

    \begin{tikzpicture}
\Vertex[color=green!70!blue,x=2,y=1,style={dashed},size=0.3]{A} \Vertex[x=2,color=green!70!blue,x=1,y=-1,size=0.3]{B} \Vertex[x=2,y=-1,color=green!70!blue,style={dashed},size=0.3]{C} \Vertex[x=2,y=-2,color=green!70!blue,size=0.3]{D} \Vertex[color=green!70!blue,x=2,y=2,style=dashed,size=0.3]{E} 
\Edge[color=black](A)(C)
\Edge[color=black](B)(D)
\Edge[color=black](A)(D)
\Edge[color=black,style={dashed}](A)(E)
\end{tikzpicture}

    \caption*{d. } 
            \end{center}

  \end{minipage}

  \begin{minipage}{15cm}
\footnotesize{
    {Notes: The failed residents are denoted in \textcolor{red}{red} and successful are shown in \textcolor{green}{green}. (a) displays the existing network. (b), shows the intervention where one \textcolor{red}{red} peer is assigned a \textcolor{green}{green} buddy. In (c), we see that receiving a higher fraction of affirmations from successful peers can help treated residents graduate. Finally, in (d), we see that the treated resident would have a cascading effect on the other \textcolor{red}{red} peer, which helps them to graduate.}}
    \end{minipage}
\end{figure}

Here, we perform counterfactual simulations to predict the results of a few interventions. We start by estimating our preferred model (Columns (2), (4), and (6) in Table \ref{rhotable}). We use the estimated coefficients with data to compute the fitted value for the propensity of graduation. We calculate the graduation probability threshold by computing the misclassification error at each threshold. Figure \ref{misclassification} (see Appendix) shows the threshold selection for each unit. For each unit, we select that threshold that minimizes the misclassification error (indicated in \textcolor{blue}{blue}).

We can identify residents whose true outcome is 0. We can nudge these residents by assigning them a successful buddy. The intervention is that a successful peer (assigned buddy) is asked to send affirmations to these ``at-risk" residents. This intervention will move the predicted likelihood of graduation for some of them above the threshold for graduation. Now, we replace the outcome for this subset as 1 and keep the outcome for all other residents the same as before. With this modified outcome, we can re-estimate the model parameters of peer graduation and predict the outcome for all the residents. This second step captures how the intervention can propagate through the network (Figure \ref{toyexample}). Post this prediction exercise, we compute residents who are still ``at-risk".

Table \ref{countefactuaLSImulations} (appendix) shows the number of residents on whom we perform the intervention in step 1 and the additional residents who are pushed above the threshold due to the cascading effect of the intervention. For male unit 1, we find that we would be treating 24 residents as these have failed to graduate and, after assigning a buddy, cross the graduation threshold. In step 2, these treated residents and others can influence each other, and we find that post-simulation, we have only 2 residents below the graduation threshold instead of 39 failed residents in true data for male unit 1. Similarly, for male unit 2, we can push an additional 15 residents above the threshold and end up with 21 residents below the threshold instead of 36 in true data. Finally, we see that this intervention results in only 25 residents below the threshold in the female unit instead of 96 in true data. 
\begin{table}[!htbp]
\centering
\caption{Counterfactual Scenario with LSI}
\label{countefactuaLSImulations2}
\begin{tabular}{p{7cm}p{1.5cm}p{1.5cm}p{1.5cm}}
  \hline\hline
 & Male Unit 1 & Male Unit 2 & Female Unit \\ 
  \hline\\
  \multicolumn{4}{c}{\textbf{a. Treat residents with LSI $>90^{th}$ percentile}}\\
  Threshold for Graduation & 0.59 & 0.66 & 0.47 \\ 
  Residents who cleared the threshold of graduation due to a buddy & 19.00 & 8.00 & 12.00 \\ 
  Failures (True Data) & 39.00 & 36.00 & 96.00 \\ 
  Residents whose predicted graduation is below the threshold (Post-simulation) & 9.00 & 34.00 & 78.00 \\ \\
    \multicolumn{4}{c}{\textbf{b. Treat residents with LSI $>80^{th}$ percentile}}\\
Threshold for Graduation & 0.59 & 0.66 & 0.47 \\ 
  Residents who cleared the threshold of graduation due to a buddy & 39.00 & 38.00 & 30.00 \\ 
  Failures (True Data) & 39.00 & 36.00 & 96.00 \\ 
  Residents whose predicted graduation is below the threshold (Post-simulation) & 5.00 & 25.00 & 64.00 \\ \\
        \multicolumn{4}{c}{\textbf{c. Treat residents with LSI $>75^{th}$ percentile}}\\
        Threshold for Graduation & 0.59 & 0.66 & 0.47 \\ 
  Residents who cleared the threshold of graduation due to a buddy & 56.00 & 49.00 & 50.00 \\ 
  Failures (True Data) & 39.00 & 36.00 & 96.00 \\ 
  Residents whose predicted graduation is below the threshold (Post-simulation) & 4.00 & 23.00 & 57.00 \\ 
   \hline\hline
   \multicolumn{4}{c}{ \begin{minipage}{13.0 cm}{\footnotesize{Notes: Each panel targets ``at-risk" residents for intervention using a different percentile cut-off on LSI. The estimated parameters for the even-numbered columns (homophily and bias-adjusted) in Table \ref{rhotable} are used for simulating these counterfactual scenarios.}}
\end{minipage}} \\
\end{tabular}
\end{table}

The counterfactual exercise above uses the true graduation status. During the intervention, this variable will not be observed by the researcher. Hence, we would ideally like to use a variable that is highly correlated with $S_{i}$  and is measured pre-intervention. One such variable is LSI (Figure \ref{LSIgrad}). We treat residents at the top $x$ percentile of the LSI distribution and perform the same procedure as described above. Table \ref{countefactuaLSImulations2} shows the estimates of the counterfactual simulations.

We provide results for 3 cut-offs of LSI to identify the residents to be treated using our intervention. Panel (a) in Table \ref{countefactuaLSImulations2} shows the estimated number of treated residents and residents whose predicted probability of graduation is below the cut-off after taking into account the cascading effect once we treat residents whose LSI is above the $90^{th}$ percentile. We observe that using this LSI cut-off results in treating a few individuals. However, even this intervention on a few individuals exceeds the threshold once we consider the cascading effect. We relax the cut-off for LSI to include more residents under the treated group in panels (b) and (c), respectively. LSI is highly correlated with graduation but not perfectly correlated. Therefore, we see the number of residents who could be finally moved above the threshold is lower for all panels in Table \ref{countefactuaLSImulations2} relative to the results in Table \ref{countefactuaLSImulations}. A critical lesson for TC clinicians and policymakers is that one can compute the optimal number of residents to be nudged by comparing the additional cost of assigning a buddy to a resident with the benefit of graduation. As long as the additional benefits exceed the additional cost, one can treat an additional resident and stop when the cost equals the benefit.

\subsection{Simultaneous response peer effects} Another peer influence model has been widely applied in the literature when the response $Y$ is observed only at one-time point \cite{bramoulle2009identification,lin2010identifying,goldsmith2013social}. The structural model has $Y_i$s being simultaneously dependent on each other. In the matrix form, the model is 
\[
Y = \alpha 1_n + \rho LY + Z \gamma + V.
\]
It is typical to assume that the error term is uncorrelated with $L$ and $Z$ for identification, i.e., $E[V_i |Z_i, L] = 0$ \cite{bramoulle2009identification}. This model cannot be estimated by OLS due to the presence of the response $Y$ on both sides, or in other words, due to $LY$ being correlated with $V$. However, \cite{bramoulle2009identification} showed that one can asymptotically identify the peer influence parameter $\rho$ with a Two-stage least squares instrumental variable estimator using $S=[Z, LZ, L^2Z]$ as instruments. The result in \cite{bramoulle2009identification} does not hold if one assumes that there are latent homophily variables that affect both the network formation and the response, which is the setup of the current paper.

We do not analyze this model here. Our methodology was motivated by how we define peer effects in the context of the therapeutic community data (which we call the role model effect). For a person to be a role model to someone else, the variation in terms of time is critical. This allowed us to define the causal effect in terms of a directed acyclic graph, which we would not have if there was a simultaneous dependence between $Y_i$s in the same time period. 

Nevertheless this simultaneous response model is an important model to study when the response is available only at one time point. We conjecture that a similar methodology that we have proposed here, along with instrumental variables, will succeed in the asymptotic identification of peer influence in this case. We keep a study of this model as a future research direction.

\subsection{Takeaway for TC clinicians} These results carry several lessons for TC clinicians. Most obviously, the finding that peer graduation status impacts graduation \cite{de1982therapeutic} confirms the community's importance as a treatment method in the TC. Previous research has found the existence of homophily among TC graduates \cite{warren2020senior}. However, this paper shows that network influence, once we control for homophily, substantially impacts the propensity to graduate from TC.
The social learning emphasis of TC clinical theory \cite{warren2021resident} would imply that social networks play their classic role as conduits of information, which would explain the direct effect of network connections on graduation. Previous research suggests that social network roles influence and constrain residents as they go through TC treatment \citep{campbell2021eigenvector,warren2020tightly,de2000therapeutic}. It is plausible that interaction between stronger and weaker members of the TC allows the former to experience the role of helper to the latter and that such experience is of value in recovery \citep{riessman1965helper}.  Moreover, it is likely that results found in TCs would extend to other mutual aid-based programs, such as recovery housing \citep{jason2022dynamic,mericle2023social} and 12-Step groups \citep{white2008twelve}.

\section*{Acknowledgement} We are grateful to Prof. Cosma Shalizi, Prof. Edward McFowland III, Prof. Joshua Cape, and participants in a session of the CMStatistics 2022 conference for helpful discussions and feedback.

\bibliographystyle{plainnat}
\bibliography{bibliography}

\begin{thebibliography}{80}
\providecommand{\natexlab}[1]{#1}
\providecommand{\url}[1]{\texttt{#1}}
\expandafter\ifx\csname urlstyle\endcsname\relax
  \providecommand{\doi}[1]{doi: #1}\else
  \providecommand{\doi}{doi: \begingroup \urlstyle{rm}\Url}\fi

\bibitem[Andrews and Bonta(1995)]{andrews1995level}
DA~Andrews and J~Bonta.
\newblock The level of service inventory-revised [user manual]. toronto, on: Multi-health systems, 1995.

\bibitem[Aral and Nicolaides(2017)]{aral2017exercise}
Sinan Aral and Christos Nicolaides.
\newblock Exercise contagion in a global social network.
\newblock \emph{Nature communications}, 8:\penalty0 14753, 2017.

\bibitem[Athreya et~al.(2016)Athreya, Priebe, Tang, Lyzinski, Marchette, and Sussman]{athreya2016limit}
Avanti Athreya, Carey~E Priebe, Minh Tang, Vince Lyzinski, David~J Marchette, and Daniel~L Sussman.
\newblock A limit theorem for scaled eigenvectors of random dot product graphs.
\newblock \emph{Sankhya A}, 78\penalty0 (1):\penalty0 1--18, 2016.

\bibitem[Athreya et~al.(2017)Athreya, Fishkind, Tang, Priebe, Park, Vogelstein, Levin, Lyzinski, and Qin]{athreya2017statistical}
Avanti Athreya, Donniell~E Fishkind, Minh Tang, Carey~E Priebe, Youngser Park, Joshua~T Vogelstein, Keith Levin, Vince Lyzinski, and Yichen Qin.
\newblock Statistical inference on random dot product graphs: a survey.
\newblock \emph{The Journal of Machine Learning Research}, 18\penalty0 (1):\penalty0 8393--8484, 2017.

\bibitem[Baird et~al.(2023)Baird, Engberg, and Opper]{baird2023optimal}
Matthew~D Baird, John Engberg, and Isaac~M Opper.
\newblock Optimal allocation of seats in the presence of peer effects: Evidence from a job training program.
\newblock \emph{Journal of Labor Economics}, 41\penalty0 (2):\penalty0 479--509, 2023.

\bibitem[Basse et~al.(2024)Basse, Ding, Feller, and Toulis]{basse2024randomization}
Guillaume Basse, Peng Ding, Avi Feller, and Panos Toulis.
\newblock Randomization tests for peer effects in group formation experiments.
\newblock \emph{Econometrica}, 92\penalty0 (2):\penalty0 567--590, 2024.

\bibitem[Bayer et~al.(2009)Bayer, Hjalmarsson, and Pozen]{bayer2009building}
Patrick Bayer, Randi Hjalmarsson, and David Pozen.
\newblock Building criminal capital behind bars: Peer effects in juvenile corrections.
\newblock \emph{The Quarterly Journal of Economics}, 124\penalty0 (1):\penalty0 105--147, 2009.

\bibitem[BB and Mutton(1975)]{davies1975effect}
Davies BB and B~Mutton.
\newblock The effect of errors in the independent variables in linear regression.
\newblock \emph{Biometrika}, 62\penalty0 (2):\penalty0 383--391, 1975.

\bibitem[Best(2019)]{best2019we}
David Best.
\newblock What we know about recovery, desistance and reintegration.
\newblock In \emph{Pathways to Recovery and Desistance}, pages 1--22. Policy Press, 2019.

\bibitem[Bostwick and Weinberg(2022)]{bostwick2022nevertheless}
Valerie~K Bostwick and Bruce~A Weinberg.
\newblock Nevertheless she persisted? gender peer effects in doctoral stem programs.
\newblock \emph{Journal of Labor Economics}, 40\penalty0 (2):\penalty0 397--436, 2022.

\bibitem[Bramoull{\'e} et~al.(2009)Bramoull{\'e}, Djebbari, and Fortin]{bramoulle2009identification}
Yann Bramoull{\'e}, Habiba Djebbari, and Bernard Fortin.
\newblock Identification of peer effects through social networks.
\newblock \emph{Journal of econometrics}, 150\penalty0 (1):\penalty0 41--55, 2009.

\bibitem[Campbell et~al.(2021)Campbell, Warren, Weiler, and De~Leon]{campbell2021eigenvector}
Benjamin Campbell, Keith Warren, Mackenzie Weiler, and George De~Leon.
\newblock Eigenvector centrality defines hierarchy and predicts graduation in therapeutic community units.
\newblock \emph{Plos one}, 16\penalty0 (12):\penalty0 e0261405, 2021.

\bibitem[Chan et~al.(2014)Chan, Li, and Pierce]{chan2014compensation}
Tat~Y Chan, Jia Li, and Lamar Pierce.
\newblock Compensation and peer effects in competing sales teams.
\newblock \emph{Management Science}, 60\penalty0 (8):\penalty0 1965--1984, 2014.

\bibitem[Christakis and Fowler(2007)]{christakis2007spread}
Nicholas~A Christakis and James~H Fowler.
\newblock The spread of obesity in a large social network over 32 years.
\newblock \emph{New England journal of medicine}, 357\penalty0 (4):\penalty0 370--379, 2007.

\bibitem[De~Leon(2000)]{de2000therapeutic}
George De~Leon.
\newblock The therapeutic community: Theory, model, and method.
\newblock 2000.

\bibitem[De~Leon et~al.(1982)De~Leon, Wexler, and Jainchill]{de1982therapeutic}
George De~Leon, Harry~K Wexler, and Nancy Jainchill.
\newblock The therapeutic community: Success and improvement rates 5 years after treatment.
\newblock \emph{International Journal of the Addictions}, 17\penalty0 (4):\penalty0 703--747, 1982.

\bibitem[Dean et~al.(2017)Dean, Bauer, and Prinstein]{dean2017friendship}
Danielle~O Dean, Daniel~J Bauer, and Mitchell~J Prinstein.
\newblock Friendship dissolution within social networks modeled through multilevel event history analysis.
\newblock \emph{Multivariate behavioral research}, 52\penalty0 (3):\penalty0 271--289, 2017.

\bibitem[Deming(2011)]{deming2011better}
David~J Deming.
\newblock Better schools, less crime?
\newblock \emph{The Quarterly Journal of Economics}, 126\penalty0 (4):\penalty0 2063--2115, 2011.

\bibitem[Denning et~al.(2023)Denning, Murphy, and Weinhardt]{denning2023class}
Jeffrey~T Denning, Richard Murphy, and Felix Weinhardt.
\newblock Class rank and long-run outcomes.
\newblock \emph{Review of Economics and Statistics}, pages 1--16, 2023.

\bibitem[Doogan and Warren(2017)]{doogan2017network}
Nathan~J Doogan and Keith Warren.
\newblock A network of helping: Generalized reciprocity and cooperative behavior in response to peer and staff affirmations and corrections among therapeutic community residents.
\newblock \emph{Addiction research \& theory}, 25\penalty0 (3):\penalty0 243--250, 2017.

\bibitem[Fowler and Christakis(2008)]{fowler2008dynamic}
James~H Fowler and Nicholas~A Christakis.
\newblock Dynamic spread of happiness in a large social network: longitudinal analysis over 20 years in the framingham heart study.
\newblock \emph{Bmj}, 337:\penalty0 a2338, 2008.

\bibitem[Gaviria and Raphael(2001)]{gaviria2001school}
Alejandro Gaviria and Steven Raphael.
\newblock School-based peer effects and juvenile behavior.
\newblock \emph{Review of Economics and Statistics}, 83\penalty0 (2):\penalty0 257--268, 2001.

\bibitem[Gershenson et~al.(2022)Gershenson, Hart, Hyman, Lindsay, and Papageorge]{gershenson2022long}
Seth Gershenson, Cassandra~MD Hart, Joshua Hyman, Constance~A Lindsay, and Nicholas~W Papageorge.
\newblock The long-run impacts of same-race teachers.
\newblock \emph{American Economic Journal: Economic Policy}, 14\penalty0 (4):\penalty0 300--342, 2022.

\bibitem[Goldsmith-Pinkham and Imbens(2013)]{goldsmith2013social}
Paul Goldsmith-Pinkham and Guido~W Imbens.
\newblock Social networks and the identification of peer effects.
\newblock \emph{Journal of Business Economic Statistics}, 31\penalty0 (3):\penalty0 253--264, 2013.

\bibitem[Gong et~al.(2021)Gong, Lu, and Song]{gong2021gender}
Jie Gong, Yi~Lu, and Hong Song.
\newblock Gender peer effects on students’ academic and noncognitive outcomes: Evidence and mechanisms.
\newblock \emph{Journal of Human Resources}, 56\penalty0 (3):\penalty0 686--710, 2021.

\bibitem[Gossop(2000)]{gossop2000therapeutic}
Michael Gossop.
\newblock The therapeutic community: Theory, model and method.
\newblock \emph{Addiction}, 95\penalty0 (11):\penalty0 1720, 2000.

\bibitem[Handcock et~al.(2007)Handcock, Raftery, and Tantrum]{handcock2007model}
Mark~S Handcock, Adrian~E Raftery, and Jeremy~M Tantrum.
\newblock Model-based clustering for social networks.
\newblock \emph{Journal of the Royal Statistical Society: Series A (Statistics in Society)}, 170\penalty0 (2):\penalty0 301--354, 2007.

\bibitem[Herbst and Mas(2015)]{herbst2015peer}
Daniel Herbst and Alexandre Mas.
\newblock Peer effects on worker output in the laboratory generalize to the field.
\newblock \emph{Science}, 350\penalty0 (6260):\penalty0 545--549, 2015.

\bibitem[Hern{\'a}n and Cole(2009)]{hernan2009invited}
Miguel~A Hern{\'a}n and Stephen~R Cole.
\newblock Invited commentary: causal diagrams and measurement bias.
\newblock \emph{American journal of epidemiology}, 170\penalty0 (8):\penalty0 959--962, 2009.

\bibitem[Hoff(2021)]{hoff2021additive}
Peter Hoff.
\newblock Additive and multiplicative effects network models.
\newblock \emph{Statistical Science}, 2021.

\bibitem[Hoff et~al.(2002)Hoff, Raftery, and Handcock]{hoff2002latent}
Peter~D Hoff, Adrian~E Raftery, and Mark~S Handcock.
\newblock Latent space approaches to social network analysis.
\newblock \emph{Journal of the american Statistical association}, 97\penalty0 (460):\penalty0 1090--1098, 2002.

\bibitem[Holland et~al.(1983)Holland, Laskey, and Leinhardt]{hll83}
P.~Holland, K.~Laskey, and S.~Leinhardt.
\newblock Stochastic blockmodels: some first steps.
\newblock \emph{Social Networks}, 5:\penalty0 109--137, 1983.

\bibitem[Hsieh and Lee(2016)]{hsieh2016social}
Chih-Sheng Hsieh and Lung~Fei Lee.
\newblock A social interactions model with endogenous friendship formation and selectivity.
\newblock \emph{Journal of Applied Econometrics}, 31\penalty0 (2):\penalty0 301--319, 2016.

\bibitem[Jason et~al.(2022)Jason, Lynch, Bobak, Light, and Doogan]{jason2022dynamic}
Leonard~A Jason, Gabrielle Lynch, Ted Bobak, John~M Light, and Nathan~J Doogan.
\newblock Dynamic interdependence of advice seeking, loaning, and recovery characteristics in recovery homes.
\newblock \emph{Journal of Human Behavior in the Social Environment}, 32\penalty0 (5):\penalty0 663--678, 2022.

\bibitem[Kiessling and Norris(2023)]{kiessling2023long}
Lukas Kiessling and Jonathan Norris.
\newblock The long-run effects of peers on mental health.
\newblock \emph{The Economic Journal}, 133\penalty0 (649):\penalty0 281--322, 2023.

\bibitem[Kreager et~al.(2019)Kreager, Schaefer, Davidson, Zajac, Haynie, and De~Leon]{kreager2019evaluating}
Derek~A Kreager, David~R Schaefer, Kimberly~M Davidson, Gary Zajac, Dana~L Haynie, and George De~Leon.
\newblock Evaluating peer-influence processes in a prison-based therapeutic community: a dynamic network approach.
\newblock \emph{Drug and alcohol dependence}, 203:\penalty0 13--18, 2019.

\bibitem[Lazer(2001)]{lazer2001co}
David Lazer.
\newblock The co-evolution of individual and network.
\newblock \emph{Journal of Mathematical Sociology}, 25\penalty0 (1):\penalty0 69--108, 2001.

\bibitem[Lazer et~al.(2010)Lazer, Rubineau, Chetkovich, Katz, and Neblo]{lazer2010coevolution}
David Lazer, Brian Rubineau, Carol Chetkovich, Nancy Katz, and Michael Neblo.
\newblock The coevolution of networks and political attitudes.
\newblock \emph{Political Communication}, 27\penalty0 (3):\penalty0 248--274, 2010.

\bibitem[Lei and Rinaldo(2015)]{lei2015consistency}
Jing Lei and Alessandro Rinaldo.
\newblock Consistency of spectral clustering in stochastic block models.
\newblock \emph{The Annals of Statistics}, 43\penalty0 (1):\penalty0 215--237, 2015.

\bibitem[Li et~al.(2020)Li, Levina, and Zhu]{li2020network}
Tianxi Li, Elizaveta Levina, and Ji~Zhu.
\newblock Network cross-validation by edge sampling.
\newblock \emph{Biometrika}, 107\penalty0 (2):\penalty0 257--276, 2020.

\bibitem[Li et~al.(2019)Li, Ding, Lin, Yang, and Liu]{li2019randomization}
Xinran Li, Peng Ding, Qian Lin, Dawei Yang, and Jun~S Liu.
\newblock Randomization inference for peer effects.
\newblock \emph{Journal of the American Statistical Association}, 2019.

\bibitem[Lin(2010)]{lin2010identifying}
Xu~Lin.
\newblock Identifying peer effects in student academic achievement by spatial autoregressive models with group unobservables.
\newblock \emph{Journal of Labor Economics}, 28\penalty0 (4):\penalty0 825--860, 2010.

\bibitem[Manski(1993)]{manski1993identification}
Charles~F Manski.
\newblock Identification of endogenous social effects: The reflection problem.
\newblock \emph{The review of economic studies}, 60\penalty0 (3):\penalty0 531--542, 1993.

\bibitem[McCorkel et~al.(1998)McCorkel, Harrison, and Inciardi]{mccorkel1998treatment}
Jill McCorkel, Lana~D Harrison, and James~A Inciardi.
\newblock How treatment is constructed among graduates and dropouts in a prison therapeutic community for women.
\newblock \emph{Journal of Offender Rehabilitation}, 27\penalty0 (3-4):\penalty0 37--59, 1998.

\bibitem[McFowland~III and Shalizi(2023)]{mcfowland2021estimating}
Edward McFowland~III and Cosma~Rohilla Shalizi.
\newblock Estimating causal peer influence in homophilous social networks by inferring latent locations.
\newblock \emph{Journal of the American Statistical Association}, 118\penalty0 (541):\penalty0 707--718, 2023.

\bibitem[McPherson et~al.(2001)McPherson, Smith-Lovin, and Cook]{mcpherson2001birds}
Miller McPherson, Lynn Smith-Lovin, and James~M Cook.
\newblock Birds of a feather: Homophily in social networks.
\newblock \emph{Annual review of sociology}, 27\penalty0 (1):\penalty0 415--444, 2001.

\bibitem[Mericle et~al.(2023)Mericle, Howell, Borkman, Subbaraman, Sanders, and Polcin]{mericle2023social}
Amy~A Mericle, Jason Howell, Thomasina Borkman, Meenakshi~S Subbaraman, Beth~Fisher Sanders, and Douglas~L Polcin.
\newblock Social model recovery and recovery housing.
\newblock \emph{Addiction Research \& Theory}, pages 1--8, 2023.

\bibitem[Nakamura(1990)]{nakamura1990corrected}
Tsuyoshi Nakamura.
\newblock Corrected score function for errors-in-variables models: Methodology and application to generalized linear models.
\newblock \emph{Biometrika}, 77\penalty0 (1):\penalty0 127--137, 1990.

\bibitem[Novick and Stefanski(2002)]{novick2002corrected}
Steven~J Novick and Leonard~A Stefanski.
\newblock Corrected score estimation via complex variable simulation extrapolation.
\newblock \emph{Journal of the American Statistical Association}, 97\penalty0 (458):\penalty0 472--481, 2002.

\bibitem[Paul and Chen(2016)]{pc15}
Subhadeep Paul and Yuguo Chen.
\newblock Consistent community detection in multi-relational data through restricted multi-layer stochastic blockmodel.
\newblock \emph{Electronic Journal of Statistics}, 10\penalty0 (2):\penalty0 3807--3870, 2016.

\bibitem[Paul et~al.(2020)Paul, Chen, et~al.]{paul2020random}
Subhadeep Paul, Yuguo Chen, et~al.
\newblock A random effects stochastic block model for joint community detection in multiple networks with applications to neuroimaging.
\newblock \emph{Annals of Applied Statistics}, 14\penalty0 (2):\penalty0 993--1029, 2020.

\bibitem[Pearl(2009)]{pearl2009causality}
Judea Pearl.
\newblock \emph{Causality}.
\newblock Cambridge university press, 2009.

\bibitem[Riessman(1965)]{riessman1965helper}
Frank Riessman.
\newblock The" helper" therapy principle.
\newblock \emph{Social work}, pages 27--32, 1965.

\bibitem[Rohe et~al.(2011)Rohe, Chatterjee, and Yu]{rcy11}
K.~Rohe, S.~Chatterjee, and B.~Yu.
\newblock Spectral clustering and the high-dimensional stochastic blockmodel.
\newblock \emph{Ann. Statist}, 39\penalty0 (4):\penalty0 1878--1915, 2011.

\bibitem[Roxburgh et~al.(2023)Roxburgh, Best, Lubman, and Manning]{roxburgh2023composition}
Ariel~Dart Roxburgh, David Best, Dan~I Lubman, and Victoria Manning.
\newblock Composition of social networks to build recovery capital differ across early and stable stages of recovery.
\newblock \emph{Addiction Research \& Theory}, pages 1--8, 2023.

\bibitem[Rubin-Delanchy et~al.(2022)Rubin-Delanchy, Cape, Tang, and Priebe]{rubin2022statistical}
Patrick Rubin-Delanchy, Joshua Cape, Minh Tang, and Carey~E Priebe.
\newblock A statistical interpretation of spectral embedding: The generalised random dot product graph.
\newblock \emph{Journal of the Royal Statistical Society Series B: Statistical Methodology}, 84\penalty0 (4):\penalty0 1446--1473, 2022.

\bibitem[Sacerdote(2011)]{sacerdote2011peer}
Bruce Sacerdote.
\newblock Peer effects in education: How might they work, how big are they and how much do we know thus far?
\newblock In \emph{Handbook of the Economics of Education}, volume~3, pages 249--277. Elsevier, 2011.

\bibitem[Schafer(1987)]{schafer1987covariate}
Daniel~W Schafer.
\newblock Covariate measurement error in generalized linear models.
\newblock \emph{Biometrika}, 74\penalty0 (2):\penalty0 385--391, 1987.

\bibitem[Schafer(1993)]{schafer1993likelihood}
DW~Schafer.
\newblock Likelihood analysis for probit regression with measurement errors.
\newblock \emph{Biometrika}, 80\penalty0 (4):\penalty0 899--904, 1993.

\bibitem[Shalizi and Thomas(2011)]{shalizi2011homophily}
Cosma~Rohilla Shalizi and Andrew~C Thomas.
\newblock Homophily and contagion are generically confounded in observational social network studies.
\newblock \emph{Sociological methods \& research}, 40\penalty0 (2):\penalty0 211--239, 2011.

\bibitem[Snijders(2017)]{snijders2017stochastic}
Tom~AB Snijders.
\newblock Stochastic actor-oriented models for network dynamics.
\newblock \emph{Annual review of statistics and its application}, 4:\penalty0 343--363, 2017.

\bibitem[Sridhar et~al.(2022)Sridhar, De~Bacco, and Blei]{sridhar2022estimating}
Dhanya Sridhar, Caterina De~Bacco, and David Blei.
\newblock Estimating social influence from observational data.
\newblock In \emph{Conference on Causal Learning and Reasoning}, pages 712--733. PMLR, 2022.

\bibitem[Stefanski(1985)]{stefanski1985effects}
Leonard~A Stefanski.
\newblock The effects of measurement error on parameter estimation.
\newblock \emph{Biometrika}, 72\penalty0 (3):\penalty0 583--592, 1985.

\bibitem[Stefanski et~al.(1985)Stefanski, Carroll, et~al.]{stefanski1985covariate}
Leonard~A Stefanski, Raymond~J Carroll, et~al.
\newblock Covariate measurement error in logistic regression.
\newblock \emph{The Annals of Statistics}, 13\penalty0 (4):\penalty0 1335--1351, 1985.

\bibitem[Stevenson(2017)]{stevenson2017breaking}
Megan Stevenson.
\newblock Breaking bad: Mechanisms of social influence and the path to criminality in juvenile jails.
\newblock \emph{Review of Economics and Statistics}, 99\penalty0 (5):\penalty0 824--838, 2017.

\bibitem[Sussman et~al.(2012)Sussman, Tang, Fishkind, and Priebe]{sussman12}
Daniel~L Sussman, Minh Tang, Donniell~E Fishkind, and Carey~E Priebe.
\newblock A consistent adjacency spectral embedding for stochastic blockmodel graphs.
\newblock \emph{Journal of the American Statistical Association}, 107\penalty0 (499):\penalty0 1119--1128, 2012.

\bibitem[Tang et~al.(2018)Tang, Priebe, et~al.]{tang2018limit}
Minh Tang, Carey~E Priebe, et~al.
\newblock Limit theorems for eigenvectors of the normalized laplacian for random graphs.
\newblock \emph{The Annals of Statistics}, 46\penalty0 (5):\penalty0 2360--2415, 2018.

\bibitem[Toulis and Kao(2013)]{toulis2013estimation}
Panos Toulis and Edward Kao.
\newblock Estimation of causal peer influence effects.
\newblock In \emph{International conference on machine learning}, pages 1489--1497. PMLR, 2013.

\bibitem[VanderWeele(2011)]{vanderweele2011sensitivity}
Tyler~J VanderWeele.
\newblock Sensitivity analysis for contagion effects in social networks.
\newblock \emph{Sociological Methods \& Research}, 40\penalty0 (2):\penalty0 240--255, 2011.

\bibitem[VanderWeele and An(2013)]{vanderweele2013social}
Tyler~J VanderWeele and Weihua An.
\newblock Social networks and causal inference.
\newblock In \emph{Handbook of causal analysis for social research}, pages 353--374. Springer, 2013.

\bibitem[Warren et~al.(2020{\natexlab{a}})Warren, Campbell, and Cranmer]{warren2020tightly}
Keith Warren, Benjamin Campbell, and Skyler Cranmer.
\newblock Tightly bound: the relationship of network clustering coefficients and reincarceration at three therapeutic communities.
\newblock \emph{Journal of Studies on Alcohol and Drugs}, 81\penalty0 (5):\penalty0 673--680, 2020{\natexlab{a}}.

\bibitem[Warren et~al.(2020{\natexlab{b}})Warren, Campbell, Cranmer, De~Leon, Doogan, Weiler, and Doherty]{warren2020building}
Keith Warren, Benjamin Campbell, Skyler Cranmer, George De~Leon, Nathan Doogan, Mackenzie Weiler, and Fiona Doherty.
\newblock Building the community: Endogenous network formation, homophily and prosocial sorting among therapeutic community residents.
\newblock \emph{Drug and Alcohol Dependence}, 207:\penalty0 107773, 2020{\natexlab{b}}.

\bibitem[Warren et~al.(2021{\natexlab{a}})Warren, Doogan, and Doherty]{warren2021difference}
Keith Warren, Nathan~J Doogan, and Fiona Doherty.
\newblock Difference in response to feedback and gender in three therapeutic community units.
\newblock \emph{Frontiers in Psychiatry}, 12, 2021{\natexlab{a}}.

\bibitem[Warren(2020)]{warren2020senior}
Keith~L Warren.
\newblock Senior therapeutic community members show greater consistency when affirming peers: evidence of social learning.
\newblock \emph{Therapeutic Communities: The International Journal of Therapeutic Communities}, 2020.

\bibitem[Warren et~al.(2021{\natexlab{b}})Warren, Doogan, Wernekinck, and Doherty]{warren2021resident}
Keith~Leverett Warren, Nathan Doogan, Uwe Wernekinck, and Fiona~Claire Doherty.
\newblock Resident interactions when affirming and correcting peers in a therapeutic community for women.
\newblock \emph{Therapeutic Communities: The International Journal of Therapeutic Communities}, 2021{\natexlab{b}}.

\bibitem[White and Kurtz(2008)]{white2008twelve}
William~L White and Ernest Kurtz.
\newblock Twelve defining moments in the history of alcoholics anonymous.
\newblock \emph{Recent Developments in Alcoholism: Research on Alcoholics Anonymous and Spirituality in Addiction Recovery}, pages 37--57, 2008.

\bibitem[Wooldridge(2010)]{wooldridge2010econometric}
Jeffrey~M Wooldridge.
\newblock \emph{Econometric analysis of cross section and panel data}.
\newblock MIT press, 2010.

\bibitem[Xie and Xu(2023)]{xie2023efficient}
Fangzheng Xie and Yanxun Xu.
\newblock Efficient estimation for random dot product graphs via a one-step procedure.
\newblock \emph{Journal of the American Statistical Association}, 118\penalty0 (541):\penalty0 651--664, 2023.

\bibitem[Yates et~al.(2017)Yates, Burns, and McCabe]{yates2017integration}
Rowdy Yates, John Burns, and Louise McCabe.
\newblock Integration: Too much of a bad thing?
\newblock \emph{Journal of Groups in Addiction \& Recovery}, 12\penalty0 (2-3):\penalty0 196--206, 2017.

\bibitem[Zimmerman(2019)]{zimmerman2019elite}
Seth~D Zimmerman.
\newblock Elite colleges and upward mobility to top jobs and top incomes.
\newblock \emph{American Economic Review}, 109\penalty0 (1):\penalty0 1--47, 2019.

\end{thebibliography}

\clearpage

\appendix

\setcounter{table}{0}
\renewcommand{\thetable}{B\arabic{table}}
\setcounter{figure}{0}
\renewcommand{\thefigure}{B\arabic{figure}}

\section{Appendix}
\subsection{Proofs}

\begin{proof}[\textbf{Proof of Proposition} \ref{asympbias}]

We will show that $\rho$ is equivalent to $\frac{1}{n} \sum_i \phi_i$. The proof follows similar arguments as in \cite{sridhar2022estimating}. Recall that in our setup the causal effect $\phi_{ij}$ involves expectations conditioning on $A_{ij}>0$ and fixing either $Y_j^i =1$ or $Y_j^i=0$. Those expectations then  are taken by marginalizing or integrating over other connections (non $j$) of $i$ denoted by $A_{i,(-j)}$, and the observed covariates $Z_i$.
\begin{align*}
\phi_{ij} & = E[S_i | A_{ij} >0, do(Y^{(i)}_j = 1)] - E[S_i | A_{ij} >0, do(Y^{(i)}_j = 0) ]\\
& = E_{A_{i,(-j)},Z_i} \bigg[ E[S_i | A_{ij} >0, do(Y^{(i)}_j = 1),A_{i,(-j)},Z_i]  - E[S_i | A_{ij} >0, do(Y^{(i)}_j = 0), A_{i,(-j)},Z_i ] \bigg].
\end{align*}
Using the notation of \cite{sridhar2022estimating}, we define,
\[
\mu_{ij}(a,y) = E[S_i | 1\{A_{ij} >0\}=a, Y^{(i)}_j = y,A_{i,(-j)},Z_i, U_i],
\]
where $1\{A_{ij} >0\}$ is the indicator function for the event $\{A_{ij} >0\}$. Both $a$ and $y$ can take two values, 1 and 0. We noted from the DAG that conditioning on $U_i, U_j$ closes the backdoor paths and the causal effect is identified. Now we can write the backdoor adjustment formula \cite{pearl2009causality} as
\[
\phi_{ij} =  E_{U_i}[E_{A_{i,(-j)},Z_i}[\mu_{ij}(1,1)-\mu_{ij}(1,0)]].
\]
Then, from our model, we can calculate,
\begin{align*}
    \mu_{ij}(1,1)-\mu_{ij}(1,0) & = \big(\alpha_0 + \gamma^T Z_i + \frac{\rho A_{ij}}{\sum_j A_{ij}} \\
    & +  \rho \frac{\sum_{l\neq j} A_{li} Y_j^{(i)}}{\sum_l A_{li}} +\beta^T U_i \big)  - \big(\alpha_0 + \gamma^T Z_i +  \rho \frac{\sum_{l\neq j} A_{li} Y_j^{(i)}}{\sum_l A_{li}} +\beta^T U_i\big) \\
& = \frac{\rho A_{ij}}{\sum_j A_{ij}}
\end{align*}
Therefore, for all $i$,
\[
\phi_i =  \sum_j \phi_{ij} =\sum_j E_{U_i}[E_{A_{i,(-j)},Z_i}[ \frac{\rho A_{ij}}{\sum_j A_{ij}}]]=\rho.
\]

\end{proof}

\begin{proof}[\textbf{Proof of Theorem} \ref{asympbias}]
    
Note we can rewrite the model in \eqref{narmodel} as follows:
\begin{align*}  
Y_{i,t}&=\alpha_0 + \alpha_1 Y_{i,(t-1)} +  \beta_1^T \hat{U}_{i} +  \rho  \frac{\sum_{j} A_{ij}Y_{j,(t-1)}}{\sum_j A_{ij}} + \\&\gamma^T Z_i + V_{i,t} + \beta^T U_i - \beta_1^T \hat{U}_i.
\end{align*}
Then when we use $\hat{U}_i$ in place of $U_i$ in the model, the error term in the regression model now consists of $V_{i,t} + \beta^T U_i - \beta_1^T \hat{U}_i$. From Lemmas 1 and 2 in \cite{mcfowland2021estimating} the covariance between $\frac{\sum_j A_{ij}Y_{j,t-1}}{\sum_j A_{ij}}$ and $\beta^T U_i - \beta_1^T \hat{U}_i$ conditioning on $A, \hat{U}_i,\hat{U}_j,Y_{i,(t-1)}$  is given by 
\begin{align}
& Cov\left( \frac{\sum_j A_{ij}Y_{j,t-1}}{\sum_j A_{ij}}, \beta^T U_i - \beta_1^T \hat{U}_i \Bigg |A, \hat{U}_i, \hat{U}_j, Y_{i,t-1}\right) \nonumber \\
& = \frac{1}{\sum_j A_{ij}}\sum_j A_{ij}\beta^T \big(c_1 Cov (U_i, U_j|A) + c_2 Var(U_i|A)  + \frac{c_3}{d_i} \sum_{l \neq(i,j)} Cov(U_i, U_l|A)\big)\beta,
\label{covar}
\end{align}
where $d_i = \sum_j A_{ij}$ and $c_1, c_2, c_3$ are suitable constants. We omitted the conditioning on $\hat{U}_i$ and $\hat{U}_j$ since it is equivalent to conditioning on $A$ as they are deterministic functions of each other. 
Now define the event $G_n$ as an indicator (random) variable for the following event
\[
\max_{i=1,\ldots, n}\|\hat{U}_i-U_iR\| \leq c_4 \frac{(\log n)^{c}}{n^{1/2}},
\]
for some constant $c_4>0$, where $R $ is an orthogonal matrix, $c>1$ is a constant mentioned in the statement of the theorem, and $\|.\|$ denotes the vector Euclidean norm.
From Theorem 1 in \cite{rubin2022statistical} we have
$P(G_n=1) = 1-\frac{1}{n^{c'}}$ for some $c'>0$.

We will use the law of total expectation and covariance (also known as tower property) conditioning on the event $G_n$. First, for the unconditional covariance we have the following formula,
\begin{align*}
    Cov[U_i,U_j|A] &= E[Cov(U_i, U_j|A,G_n)|A] + Cov[E[U_i|A,G_n],E[U_j|A,G_n]|A].
\end{align*}
Now, we note from the triangle property of the Euclidean vector norm,
\[
\|U_i\| \leq \|\hat{U}_i\| + \|U_i - \hat{U}_i\|.
\]
Therefore if $G_n=1$, then we have $\|U_i\| = \|\hat{U}_i\|+ O(\frac{(\log n)^c}{n^{1/2}})$ for all $i$. This implies that \citep{mcfowland2021estimating}
\[
Cov(U_i, U_j|A,G_n=1) = O \left(\frac{(\log n)^{2c}}{n}\right),
\]
where by $O(.)$, we mean all elements of the matrix on the left-hand side are bounded by the quantity on the right-hand side asymptotically.
On the other hand, when $G_n=0$, while we do not have an upper bound on the closeness between $U$ and $\hat{U}$, we can use Popoviciu's theorem and population latent position conditions to bound the variances and covariances. First note that $U_i$s are bounded random vectors in the sense that every element $U_{ik}$ is bounded by the maximum norm of the vectors, $m=\max_{i=1,\ldots,n} \|U_i\| =\theta_n^{1/2}$. Therefore repeatedly applying Popoviciu's theorem, to the elements of the matrices $Var(U_i)$ and $Cov(U_i, U_j)$, we have
\[
Var(U_i) = O( \theta_n) \text{ and } cov(U_i, U_j) =  O( \theta_n),
\]
where again $O(.)$ implies element-wise asymptotic bound on the elements of the matrices on the left-hand side. Then, we can compute the expectation of the conditional covariance as,
\begin{align*}
E[Cov(U_i, U_j|A,G_n)|A] &= (1-\frac{1}{n^c})Cov(U_i, U_j|A,G_n=1) \\
& + \frac{1}{n^{c'}} Cov(U_i, U_j|A,G_n=0)\\
& = O\left( \frac{(\log n)^{2c}}{n} +  \theta_n \frac{1}{n^{c'}}\right).
\end{align*}

Now turning our attention to the conditional expectations, we have the conditional expectation  of $U_i$ given $G_n=1$ is
\[
E[U_i|A, G_n=1] = \hat{U}_i +  O\left(\frac{(\log n)^{c}}{n^{1/2}}\right).
\]
When $G_n=0$, we define $\tilde{U}_i = E[U_i|A, G_n=0]$. Note $\frac{1}{(\theta_n)^{1/2}}\tilde{U}_i$ takes a value in the Euclidean ball defined by $\|x\|_2 \leq 1$ and is a function of $A$. This implies, $\tilde{U}_i = O((\theta_n)^{1/2})$.
Now, we can write the conditional expectation of $U_i$ given $G_n$ as,
\[
E[U_i|A,G_n] = G_n\hat{U}_i +  G_n\frac{(\log n)^{c}}{n^{1/2}} + (1-G_n)\tilde{U}_i.
\]
Then similar to the computation in \cite{mcfowland2021estimating} we have
\begin{align*}
Cov & [E[U_i|A,G_n],E[U_j|A,G_n]|A] \\
& = Cov[G_n\hat{U}_i +  G_n\frac{(\log n)^{c}}{n^{1/2}} + (1-G_n)\tilde{U}_i, \, \, G_n\hat{U}_j +  G_n\frac{(\log n)^{c}}{n^{1/2}} + (1-G_n)\tilde{U}_j|A]\\
& = Var(G_n|A) (\hat{U}_i \hat{U}_j^T + (\hat{U}_i + \hat{U}_j) \frac{(\log n)^{c}}{n^{1/2}} + \frac{(\log n)^{2c}}{n}) \\
& \quad + Var(1-G_n|A)\tilde{U}_i\tilde{U}_j^T  + Cov(G_n(1-G_n)|A)(\hat{U}_i \tilde{U}_j^T  + \tilde{U}_i \hat{U}_j^T + (\tilde{U}_i + \tilde{U}_j) \frac{(\log n)^{c}}{n^{1/2}})\\
& = Var(G_n|A) (O(\theta_n + \frac{(\log n)^{c}(\theta_n)^{1/2}}{n^{1/2}} + \frac{(\log n)^{2c}}{n})) \\ 
& \quad +Var(1-G_n|A)O(\theta_n) + Cov(G_n, 1-G_n|A)   O(\theta_n + \frac{(\log n)^{c}(\theta_n)^{1/2}}{n^{1/2}} ). 
\end{align*}
The last line follows from the following arguments. We note $\hat{U}_i, \hat{U}_j, \tilde{U}_i, \tilde{U}_j $ are all functions of $A$ and therefore are non-random when conditioned on $A$, and given our assumption that the sparsity of the network $\theta_n$ is known, they are $O((\theta_n)^{1/2})$. 
Further, note that $G_n$ is an indicator random variable. Therefore $Var(G_n|A) = var(1-G_n|A) = \frac{1}{n^{c'}}(1-\frac{1}{n^{c'}}) = O(\frac{1}{n^{c'}}).$ Moreover, since $G_n(1-G_n) = 0$, we have $Cov(G_n(1-G_n)|A) = - E[G_n|A] E[1-G_n|A] = -\frac{1}{n^{c'}}(1-\frac{1}{n^{c'}}) = O(\frac{1}{n^{c'}}) $. Then the above becomes
\begin{align*}
& Cov[E[U_i|A,G_n],E[U_j|A,G_n]|A] = \frac{1}{n^{c'}} O\left(\theta_n + \frac{(\log n)^{c}(\theta_n)^{1/2}}{n^{1/2}} + \frac{(\log n)^{2c}}{n})\right).
\end{align*}
Therefore combining the two results we have
\[
Cov[U_i, U_j|A] = O\left(\frac{(\log n)^{2c}}{n} +  \theta_n \frac{1}{n^{c'}} +  \frac{(\log n)^{c}(\theta_n)^{1/2}}{n^{c'} n^{1/2}}\right).
\]

Now note that since $U_i$ is $O(\theta_n)^{1/2}$, the corresponding coefficient $\beta$ should be $O(\theta_n)^{-1/2}$ in order for the total term $\beta^TU_i$ to be constant as a function of $n$. 
Therefore, using the above estimate of the covariance in \eqref{covar}, the asymptotic order of bias in estimating $\rho_n$ is given by multiplying the above expression by $\theta_n^{-1}$, as,
\[
E[\hat{\rho}_n-\rho|A] = O\left(\frac{(\log n)^{2c}}{n\theta_n} +   \frac{1}{n^{c'}} +  \frac{(\log n)^{c}}{n^{c'} (n\theta_n)^{1/2}}\right).
\]

\end{proof}

\begin{proof}[\textbf{Proof of Theorem} \ref{theo2}]

We can write down the log-likelihood function associated with the linear regression model as
\begin{align*}
    &l(\beta, W, U_{-i}, \hat{U}_i, Y) = -\frac{n}{2} \log 2\pi - n \log \sigma - \frac{1}{2\sigma^2}\{\sum_{j \neq i} (Y_{j}- \eta^T W_j - \beta^TU_j)^2 + (Y_{i}- \eta^T W_i - \beta^T\hat{U}_i)^2 \}.
\end{align*}
The score equation is given by
\begin{align*}
 & (\sum_j W_jW_j^T + \sum_{j\neq i} U_j U_j^T + \hat{U}_i\hat{U}_i^T)\begin{pmatrix}
    \eta \\  \beta
\end{pmatrix}   = (\sum_j Y_jW_j + \sum_{j\neq i} Y_j U_j + Y_i\hat{U}_i),
\end{align*}
which leads to the estimator $\begin{pmatrix}
    \hat{\eta} \\  \hat{\beta}
\end{pmatrix}^{(old)}$ given in the statement of the theorem.
As $n\to \infty$ this estimator converges in probability to the following limit
\[
(E[W_1W_1^T] + \Delta_F/\theta + \Sigma(U_i))^{-1}(E[W_1W_1^T] +\Delta_F/\theta) \begin{pmatrix}
    \eta_0 \\  \beta_0
\end{pmatrix},
\]
since $Cov(\hat{U}_i - U_i) \to \Sigma(U_i)$ and $E[U_1U_1^T] \to \Delta_F/\theta$ as $n \to \infty$, and $\Sigma(\cdot)$ is the function defined in the main paper. 

To derive the bias-corrected estimator, we note that
\begin{align*}
    E&[l(\beta,W,U_{-i}, \hat{U}_i, Y)] = l(\beta, W, U,Y) + \frac{1}{2\sigma^2}[ \beta^T \Sigma(U_i) \beta]
\end{align*}

Then, the corrected log-likelihood function is
\begin{align*}
l^{*}&(\beta, W, U_{-i}, \hat{U}_i,Y) = l(\beta, W, U_{-i}, \hat{U}_i,Y)-\frac{1}{2\sigma^2}\beta^T \Sigma(U_i) \beta.
\end{align*}
The corrected score equation is
\begin{align*}
& (\sum_j W_jW_j^T + \sum_{j\neq i} U_j U_j^T + \hat{U}_i\hat{U}_i^T-\Sigma(U_i))\begin{pmatrix}
    \eta \\  \beta
\end{pmatrix} = (\sum_j Y_jW_j + \sum_{j\neq i} Y_j U_j + Y_i\hat{U}_i),
\end{align*}
which leads to the estimator $\begin{pmatrix}
    \hat{\eta} \\  \hat{\beta}
\end{pmatrix}^{(new)}$ given in the statement of the theorem.
Then as $n\to \infty$ this estimator converges in probability to
\begin{align*}
& (E[W_1W_1^T] + \Delta_F/\theta)^{-1}(E[W_1W_1^T] +\Delta_F/\theta) \begin{pmatrix}
    \eta_0 \\  \beta_0
\end{pmatrix} = \begin{pmatrix}
    \eta_0 \\  \beta_0
\end{pmatrix}
\end{align*}
\end{proof}

\begin{proof}[\textbf{Proof of Theorem} \ref{asympbiasprobit}]
We have already seen in the proof of Theorem \ref{asympbias} that
\[
Cov\left( \frac{\sum_j A_{ij}Y_{j,t-1}}{\sum_j A_{ij}}, \beta^T U_i - \beta_1^T \hat{U}_i | \hat{U}_i, \hat{U}_j, A_{ij}\right) = O\left(\frac{(\log n)^{2c}}{n\theta_n} +   \frac{1}{n^{c'}} +  \frac{(\log n)^{c}}{n^{c'} (n\theta_n)^{1/2}}\right).
\]
This covariance, which is the same as $Cov(LY_{t-1}, q|A) $ converges to 0 as $n\to \infty$, since $n \theta_n \to \infty$ as $n \to \infty$. With the assumptions of normality for $LY_{t-1}$ and $q$, this implies that $\delta$ converges to 0 as $n \to \infty$. Moreover, we note that
\[
Var(\beta^T U_i - \beta_1^T \hat{U}_i|A) = \beta^T Var(U_i|A) \beta = O\left(\frac{(\log n)^{2c}}{n\theta_n} +   \frac{1}{n^{c'}} +  \frac{(\log n)^{c}}{n^{c'} (n\theta_n)^{1/2}}\right), 
\]
which also converges to $0$ as $n \to \infty$. Therefore $\tau^2 \to 0$ and $n \to \infty$.
This implies
\begin{align*}
P(Y_{i,t}=1|W_i, \hat{U}_i) & = P(V_{it} + q < - \eta^TW_i - \beta_1^T\hat{U}) \\
& = P\left(\frac{V_{it} + q - \delta (LY_{t-1})_i }{\sqrt{\tau^2 +1}} < \frac{- \eta^TW_i - \beta_1^T\hat{U}_i-\delta (LY_{t-1})_i}{\sqrt{\tau^2 +1}}\right) \\
& = \Phi (\frac{ \eta^TW_i + \beta_1^T\hat{U}_i+\delta (LY_{t-1})_i}{\sqrt{\tau^2 +1}}) \to \Phi ( \eta^TW_i + \beta_1^T\hat{U}_i) .
\end{align*}
The above calculation implies that $\rho$ is consistently estimated.  Moreover, the average partial effect also converges to the true average partial effect.
\[
APE =  \frac{\rho+\delta}{\sqrt{\tau^2 +1}}  \phi\left(\frac{ \eta^TW_i + \beta_1^T\hat{U}+\delta (LY_{t-1})_i}{\sqrt{\tau^2 +1}}\right) \to \rho  \phi(\eta^TW_i + \beta_1^T\hat{U}_i).
\]
\end{proof}

\subsection{Additional Figures and Tables}

\begin{table}[!htbp] 
\centering 
  \caption{Summary Statistics} 
  \label{summarytableall3units} 
\begin{tabular}{@{\extracolsep{5pt}}p{4cm}p{1.0cm}p{1.0cm}p{1.0cm}p{1.0cm}p{1.0cm}} 
\\[-1.8ex]\hline 
\hline \\[-1.8ex] 
Variable & \multicolumn{1}{c}{N} & \multicolumn{1}{c}{Mean} & \multicolumn{1}{c}{St. Dev.} & \multicolumn{1}{c}{Min} & \multicolumn{1}{c}{Max} \\ 
\hline \\[-1.8ex] 
&\multicolumn{5}{c}{\textbf{A. Male unit 1}}\\
Graduation Status & 337 & 0.884 & 0.320 & 0 & 1 \\ 
Age & 337 & 27.665 & 8.945 & 18 & 61 \\ 
White & 337 & 0.475 & 0.500 & 0 & 1 \\ 
LSI & 337 & 25.727 & 5.215 & 9 & 44 \\ 
Peer Graduation & 337 & 0.416 & 0.203 & 0.000 & 1.000 \\ \\
&\multicolumn{5}{c}{\textbf{B. Male unit 2}}\\
Graduation Status & 339 & 0.894 & 0.309 & 0 & 1 \\ 
Age & 339 & 31.746 & 9.360 & 18 & 60 \\ 
White & 339 & 0.776 & 0.418 & 0 & 1 \\ 
LSI & 339 & 25.661 & 6.046 & 14 & 45 \\ 
Peer Graduation & 339 & 0.425 & 0.178 & 0.000 & 1.000 \\ \\ 
&\multicolumn{5}{c}{\textbf{C. Female unit}}\\
Graduation Status & 472 & 0.797 & 0.403 & 0 & 1 \\ 
Age & 472 & 30.358 & 8.203 & 18 & 60 \\ 
White & 472 & 0.799 & 0.401 & 0 & 1 \\ 
LSI & 472 & 25.862 & 8.378 & 0 & 57 \\ 
Peer Graduation & 472 & 0.381 & 0.159 & 0.000 & 0.868 \\ 
\hline\hline
\multicolumn{6}{c}{ \begin{minipage}{12 cm}{\footnotesize{Notes: The summary statistics on the outcome variable (graduation status), covariates and the primary explanatory variable (weighted peer graduation status) for all three units are provided in this table.}}
\end{minipage}} \\
\end{tabular} 
\end{table} 

\begin{table}[!htbp]
\centering
\captionsetup{width=10.5cm}
\caption{Peer Effects and Race (Homophily and Bias Adj. Corrections Network)}
\label{rhotablecorrectionshte}
\begin{tabular}{p{6.5 cm}P{1.75cm}P{1.75cm}P{1.75cm}}\hline\hline
&\multicolumn{3}{c}{Dependent Variable: $(S_{i})$}\\\\
 & (1) & (2) & (3) \\
Variable & Male Unit 1 & Male Unit 2 & Male Unit 3 \\
 \hline
Peer Grad. (White)             & 0.189   & 0.635   & 0.724   \\
                                    & (0.075) & (0.134) & (0.141) \\
Peer Grad. (Non-White)         & 0.100   & 0.302   & -0.015  \\
                                    & (0.077) & (0.102) & (0.098) \\
White                               & -0.098  & 0.134   & 0.177   \\
                                    & (0.048) & (0.070) & (0.065) \\
Peer Grad. (White) x White     & 0.086   & -0.188  & -0.383  \\
                                    & (0.102) & (0.158) & (0.153) \\
Peer Grad. (Non-White) x White & 0.066   & -0.157  & 0.173   \\
                                    & (0.106) & (0.117) & (0.110) \\
Age                                 & 0.001   & -0.001  & 0.001   \\
                                    & (0.001) & (0.001) & (0.001) \\
LSI                                 & -0.025  & -0.024  & -0.016  \\
                                    & (0.002) & (0.002) & (0.001) \\
Intercept                           & 1.393   & 1.122   & 0.893   \\
                                    & (0.077) & (0.098) & (0.079) \\\\
N&774&391&1046\\\\
\hline\hline
\multicolumn{4}{c}{ \begin{minipage}{13.5 cm}{\footnotesize{Notes: Standard errors are provided in parenthesis. Two new variables are constructed for this specification. We compute the peer graduation status of white and non-white residents separately for this analysis. The latent homophily vectors are estimated from the corrections network in these specifications.}}
\end{minipage}} \\
\end{tabular}
\end{table}

\begin{table}[!htbp]
\centering
\captionsetup{width=8.5cm}
\caption{Directed Sender and Receiver Networks}
\label{rhotablerobustnessdirectedcorrectionsnetwork}
\begin{tabular}{p{3.5 cm}P{1.95cm}P{1.95cm}P{1.95cm}}\hline\hline
&\multicolumn{3}{c}{Dependent Variable: $S_{i}$}\\\\
 & (1) & (2) & (3) \\
Variable & Male Unit 1 & Male Unit 2 & Female Unit \\
 \hline
 \multicolumn{4}{c}{a. Directed Sender Corrections Network}\\
Peer Graduation & 0.289   & 0.563   & 0.421   \\
                & (0.050) & (0.075) & (0.055) \\
Age             & 0.002   & -0.001  & 0.001   \\
                & (0.001) & (0.001) & (0.001) \\
White           & -0.021  & 0.017   & 0.073   \\
                & (0.021) & (0.033) & (0.023) \\
LSI             & -0.024  & -0.024  & -0.016  \\
                & (0.002) & (0.002) & (0.001) \\\\
 \multicolumn{4}{c}{b. Directed Receiver Corrections Network}\\
 Peer Graduation & 0.356   & 0.606   & 0.544   \\
                & (0.047) & (0.065) & (0.051) \\
Age             & 0.002   & 0.000   & 0.001   \\
                & (0.001) & (0.001) & (0.001) \\
White           & -0.020  & 0.032   & 0.078   \\
                & (0.021) & (0.032) & (0.022) \\
LSI             & -0.024  & -0.023  & -0.016  \\
                & (0.002) & (0.002) & (0.001) \\
\hline\hline
\multicolumn{4}{c}{ \begin{minipage}{11.0 cm}{\footnotesize{Notes: Standard errors are provided in parenthesis. Peer Grad. is constructed using the precise entry and exit dates and the affirmations network between the residents as defined in equation \ref{eqrolemodel}. The latent homophily is also estimated from the affirmations network. For panel (a), corrections only consist of directed sender corrections from node $i$ to peers, and panel (b) consists of directed receiver corrections.}}
\end{minipage}} \\
\end{tabular}
\end{table}

\begin{table}[!htbp]
\centering
\caption{Counterfactual Scenario}
\label{countefactuaLSImulations}
\begin{tabular}{p{7cm}p{1.5cm}p{1.5cm}p{1.5cm}}
  \hline\hline
 & Male Unit 1 & Male Unit 2 & Female Unit \\ 
  \hline
Threshold for Graduation & 0.59 & 0.66 & 0.47 \\ 
Failures (True Data) & 39.00 & 36.00 & 96.00 \\ 
  Residents Who cleared the Threshold of Graduation due to a Buddy & 24.00 & 12.00 & 42.00 \\ 
  Residents whose predicted graduation is below threshold (Post-Simulation) & 2.00 & 21.00 & 25.00 \\ 
   \hline\hline
   \multicolumn{4}{c}{ \begin{minipage}{13 cm}{\footnotesize{Notes: We show the results of the counterfactual exercise in this table. For this, we use the true $S_{i}$ to identify the ``at-risk" residents. Then a successful buddy is assigned to these ``at-risk" residents. Using the estimates from Table \ref{rhotable} the buddy assignment helps some of these residents cross the graduation threshold. The $S_{i}$ is modified, and then we re-estimate our role model effect. This second re-estimation takes into account the indirect cascading effect of buddy assignment. At the end of this estimation, we calculate every resident's propensity to graduate and report the remaining ``at-risk" residents in the last row of this table.}}
\end{minipage}} \\
\end{tabular}
\end{table}

\begin{figure}[!htbp]
  \caption{Variation in peer graduation and graduation success}
    \label{corrownpeer} 

 \begin{minipage}[b]{0.5\linewidth}
       \begin{center}

    \includegraphics[width=\linewidth]{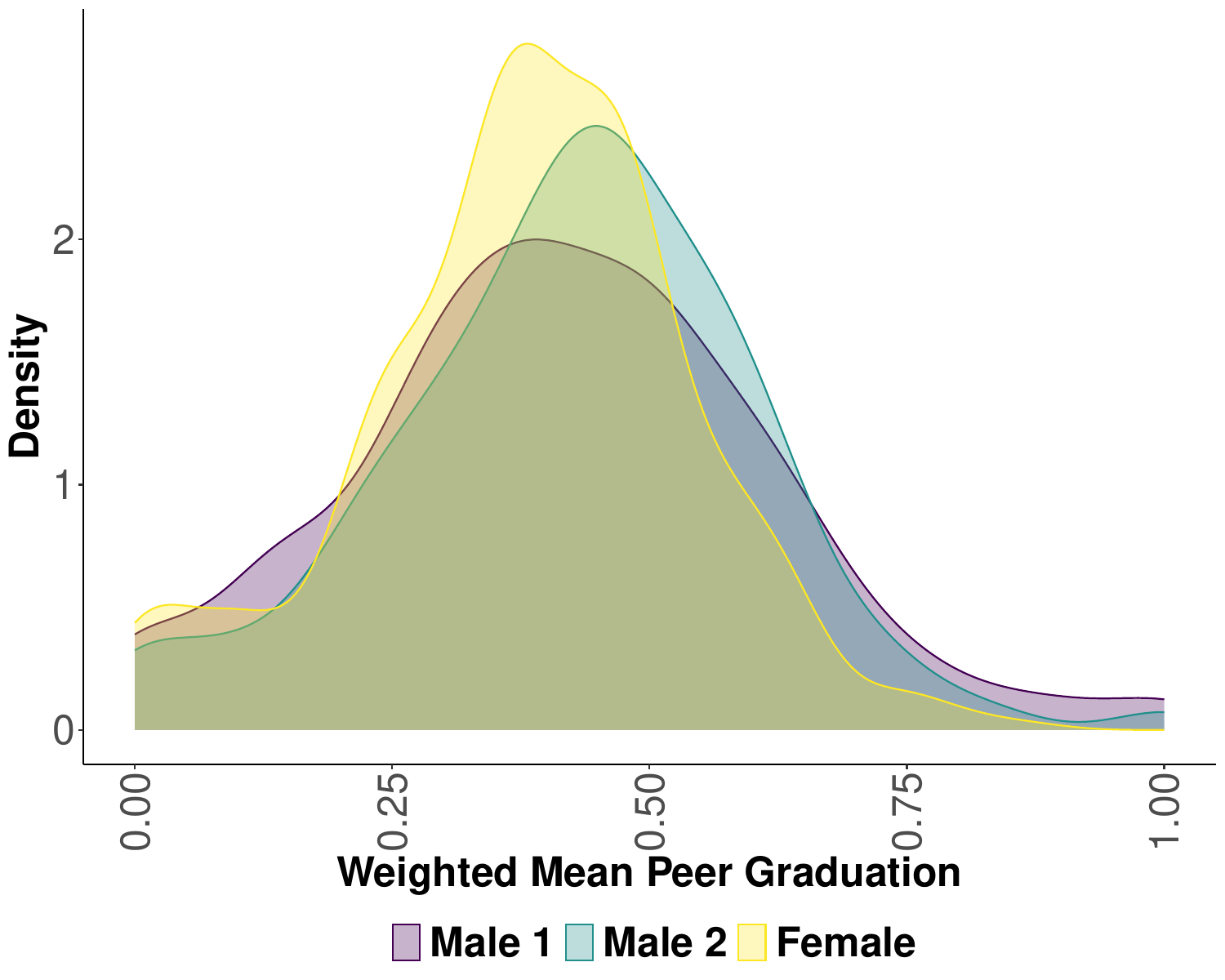}
    \caption*{a. Graduation Status of peers} 
        \end{center}

  \end{minipage}
  \begin{minipage}[b]{0.5\linewidth}
       \begin{center}

    \includegraphics[width=\linewidth]{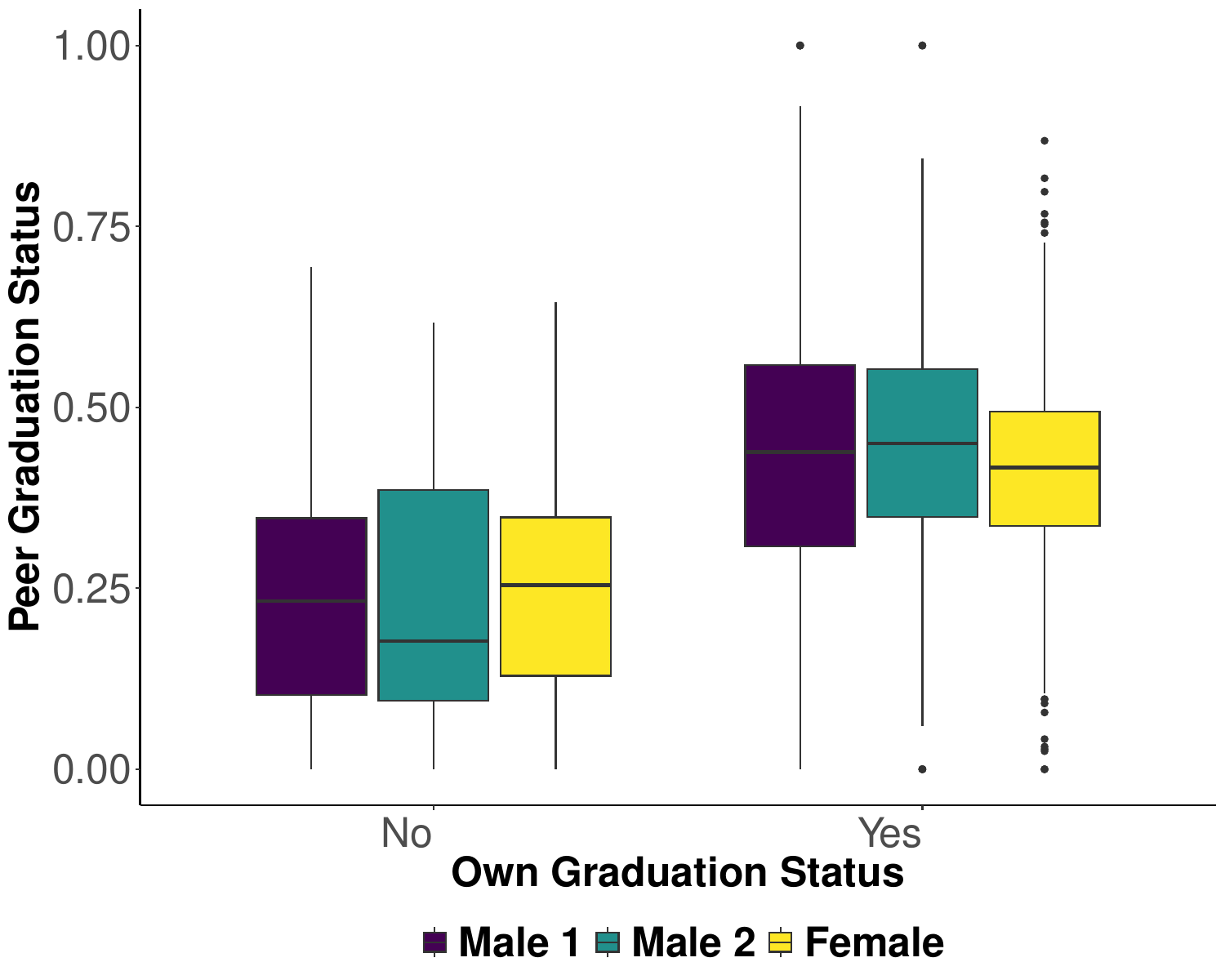}
    \caption*{b. Peer graduation and success} 
        \end{center}

  \end{minipage}
  
  \begin{minipage}{15.5cm}
\footnotesize{
    {Notes: We pool the data across three units for these figures. The final data set includes the residents for whom we observe the complete set of covariates and the affirmations network. Panel (a) shows the distribution of peer graduation status weighed by the nodes of the affirmation network as observed by residents at time $t$ when survey for graduation status was conducted. Panel (b) displays the correlation between the peer graduation as observed by each resident and their own graduation status.}}
    \end{minipage}
\end{figure}

\begin{figure}[!htbp]
  \caption{Out of Sample AUC for Selection of $d$ using Cross-validation}
    \label{outofsampleauc} 

 \begin{minipage}[b]{0.33\linewidth}
       \begin{center}

    \includegraphics[width=\linewidth]{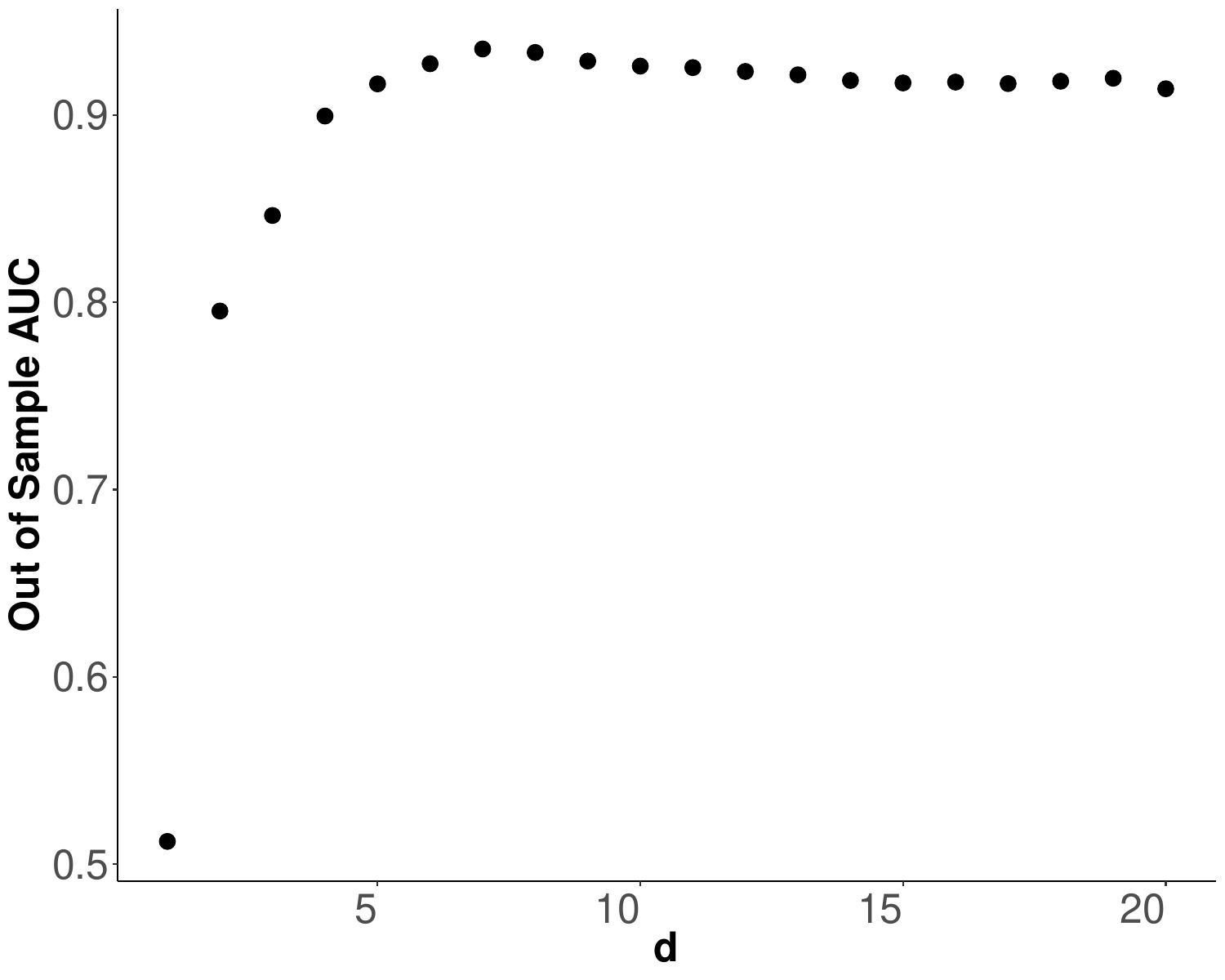}
    \caption*{a. Male Unit 1} 
        \end{center}

  \end{minipage}
  \begin{minipage}[b]{0.33\linewidth}
       \begin{center}

    \includegraphics[width=\linewidth]{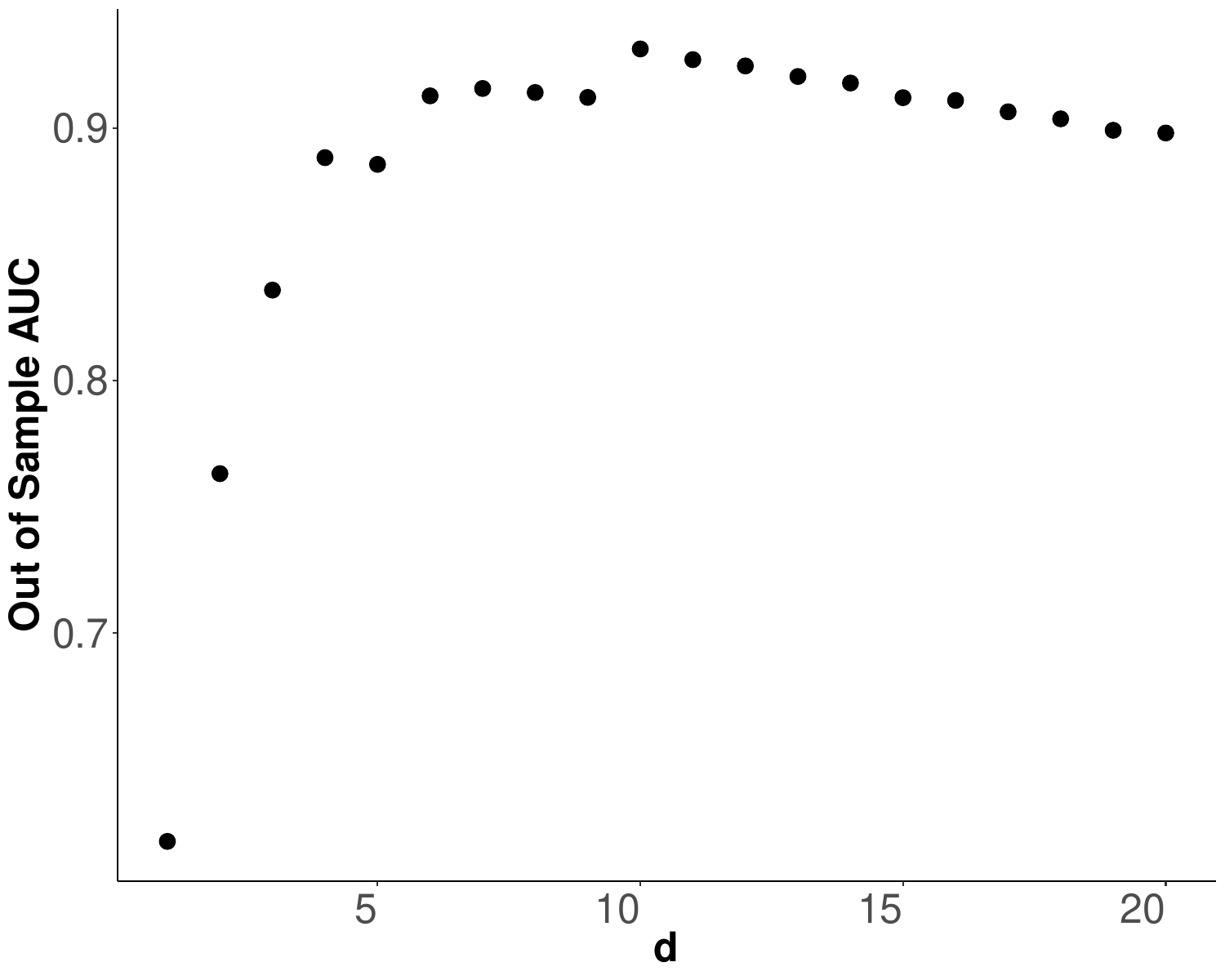}
    \caption*{b. Male Unit 2} 
        \end{center}

  \end{minipage}
    \begin{minipage}[b]{0.33\linewidth}
    
         \begin{center}

    \includegraphics[width=\linewidth]{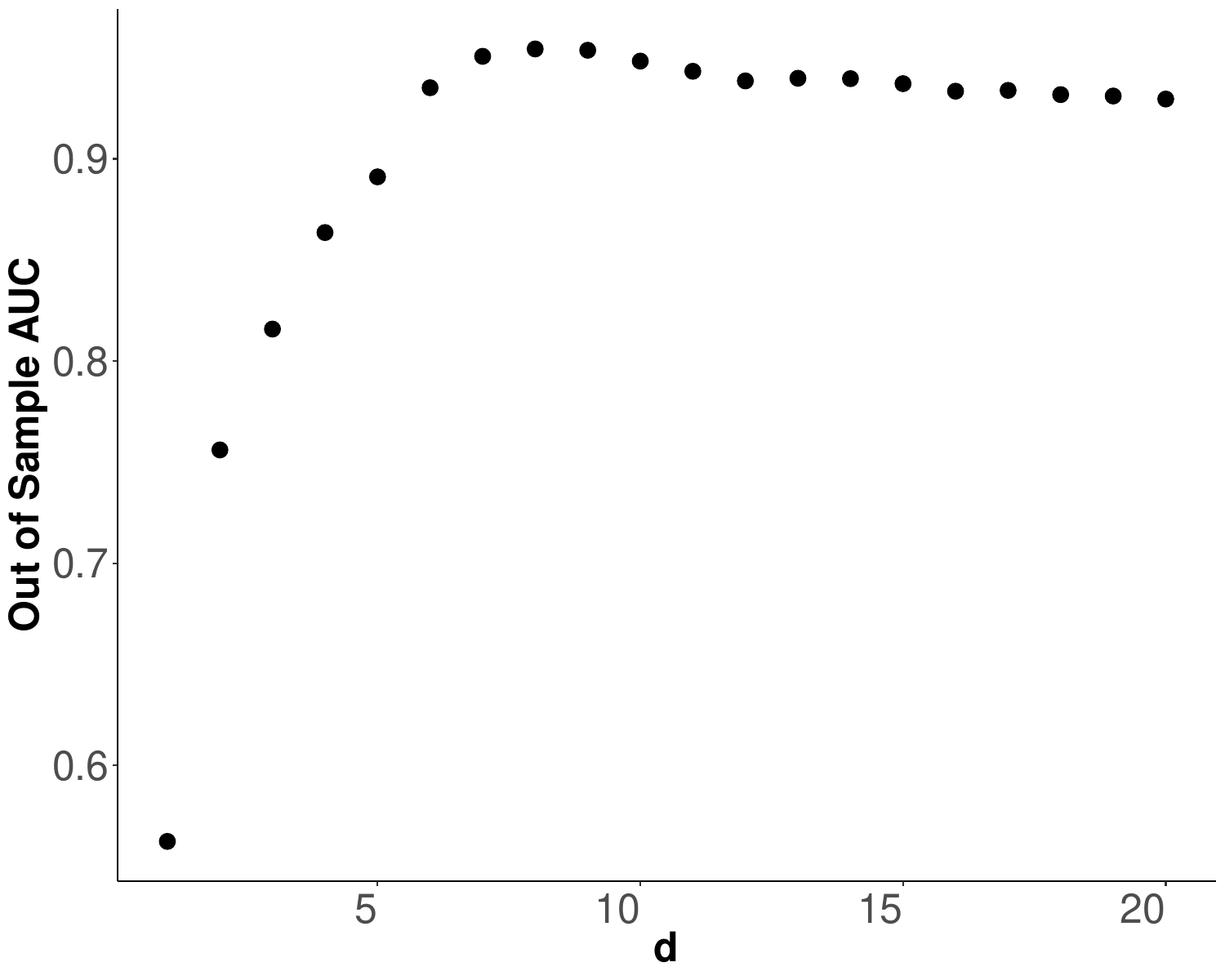}
    \caption*{c. Female Unit} 
          \end{center}

  \end{minipage}
  
  \begin{minipage}{15.5cm}
\footnotesize{
    {Notes: Figures (a,b, and c) illustrate the out-of-sample as we increase $d$ from 1 to 20 in male unit 1, male unit 2, and female unit, respectively. The $d$ for each unit is chosen via cross-validation using the out-of-sample AUCs.}}
    \end{minipage}
\end{figure}

\begin{figure}[!htbp] 
  \caption{Misclassification Error}
    \label{misclassification} 

 \begin{minipage}[b]{0.33\linewidth}
       \begin{center}

    \includegraphics[width=\linewidth]{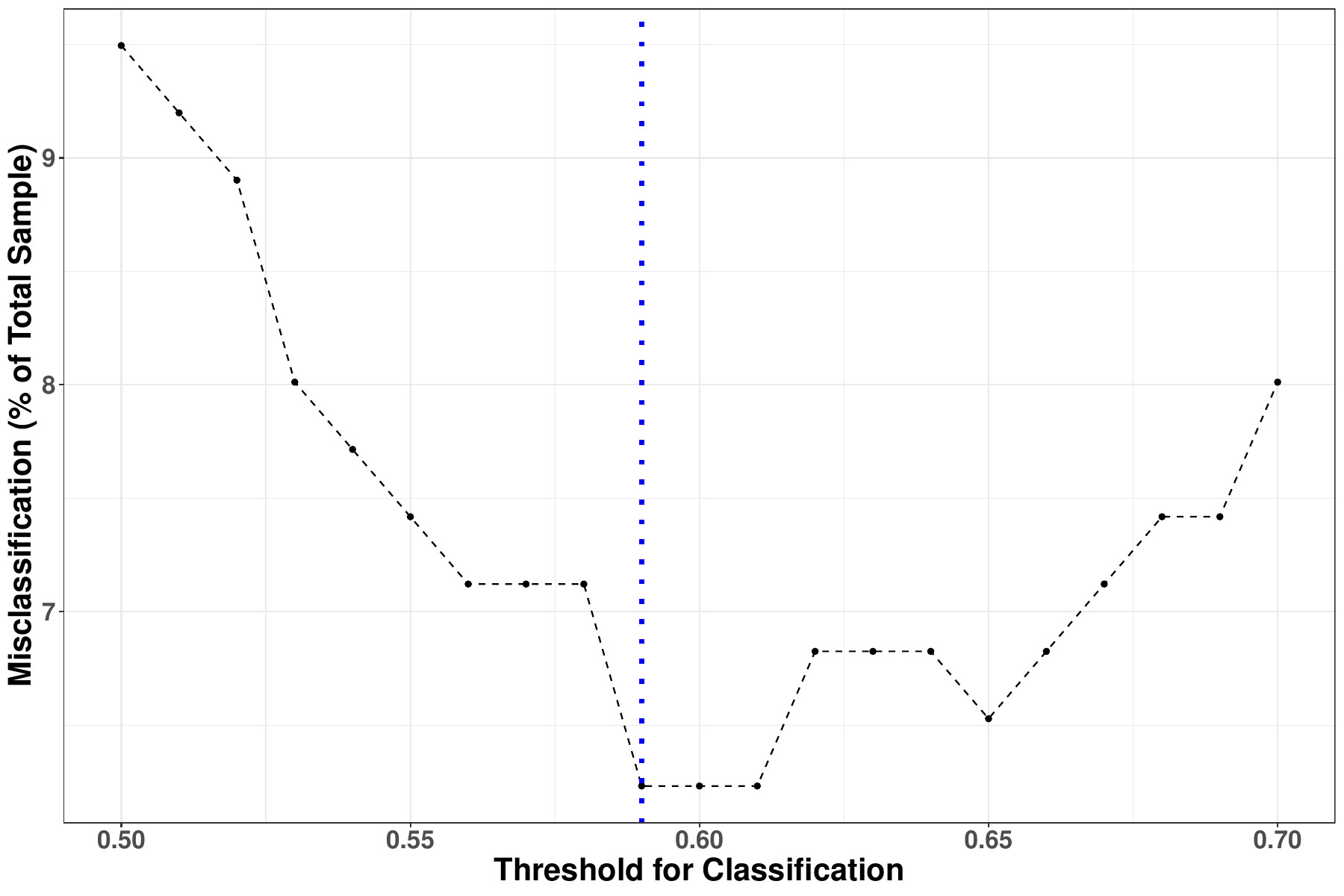}
    \caption*{a. Male Unit 1} 
        \end{center}

  \end{minipage}
  \begin{minipage}[b]{0.33\linewidth}
       \begin{center}

    \includegraphics[width=\linewidth]{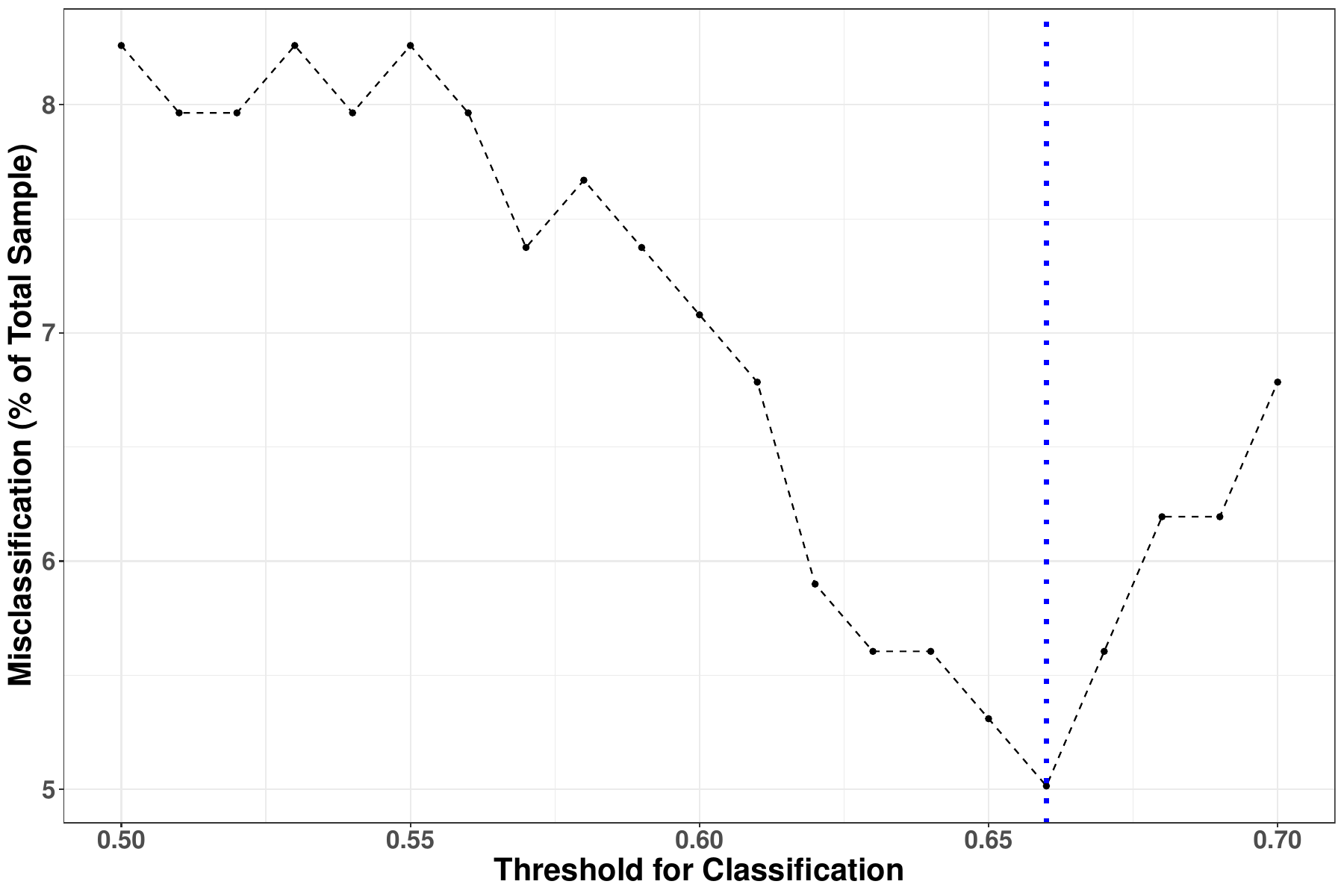}
    \caption*{b. Male Unit 2} 
        \end{center}

  \end{minipage}
    \begin{minipage}[b]{0.33\linewidth}
    
         \begin{center}

    \includegraphics[width=\linewidth]{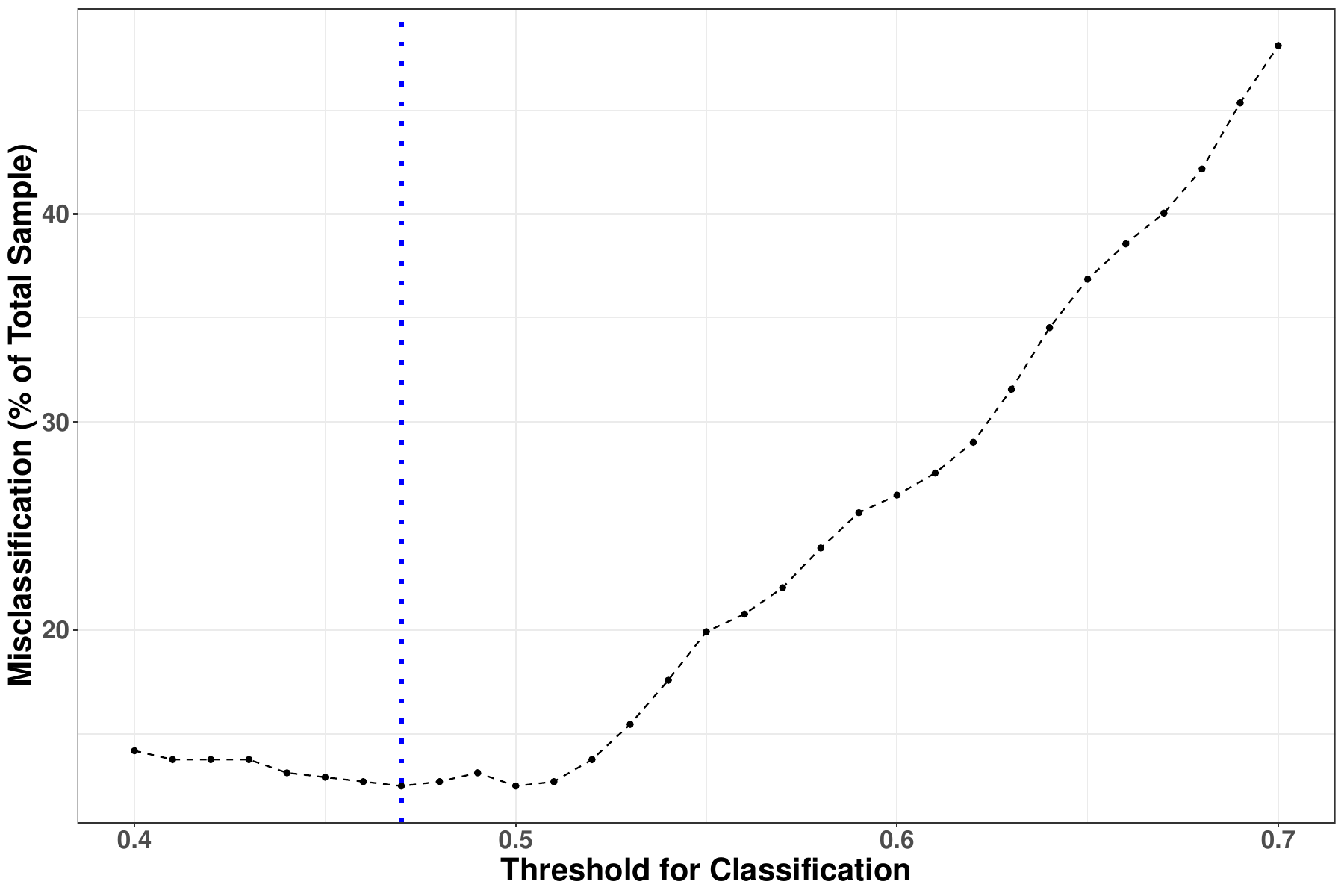}
    \caption*{c. Female Unit} 
          \end{center}

  \end{minipage}
  \begin{minipage}{15.5cm}
\footnotesize{
    {Notes: Figures (a,b, and c) illustrate the misclassification error as the threshold changes. We choose the threshold corresponding to the case with the lowest misclassification error. }}
    \end{minipage}
\end{figure}

\begin{figure}[h]
\caption{Updated DAG with Covariates for Network Formation Process}
\label{updateddag}
    \begin{center}
    \tikz{
    \node[hidden1] (a) {$Y_{j}^{i}$};
    \node[hidden1] (z) [right = of a] {$S_{i}$};
    \node[hidden1] (b) [above right = of z] {$T_{i}$};
    \node[hidden1] (x) [below left = of z] {$T_{j}$};
    \node[hidden1] (c) [below right = of x] {$A_{ij}$};
        \node[hidden] (d) [left = of x] {$U_{j}$};
                \node[hidden] (e) [right = of b] {$U_{i}$};
                \node[hidden1] (f) [right = of e] {$Z_{i}$};
                \node[hidden1] (g) [left = of d] {$Z_{j}$};
    \path (e) edge (c);
    \path (d) edge (a);
        \path (d) edge (x);
    \path (a) edge (z);
    \path (b) edge (z);
    \path (c) edge (z);
    \path (d) edge (c);
       \path (x) edge (a);
              \path (e) edge (b);
       \path (e) edge (z);
        \path (f) edge (z);
                \path (g) edge (a);
                \path (g) edge (c);
                \path (f) edge (c);

}
\end{center}
\end{figure}

\begin{figure}[!htbp] 
  \caption{Latent Space Model with Covariates}
    \label{latentspacewithcovariates} 

 \begin{minipage}[b]{0.8\linewidth}
       \begin{center}

    \includegraphics[width=\linewidth]{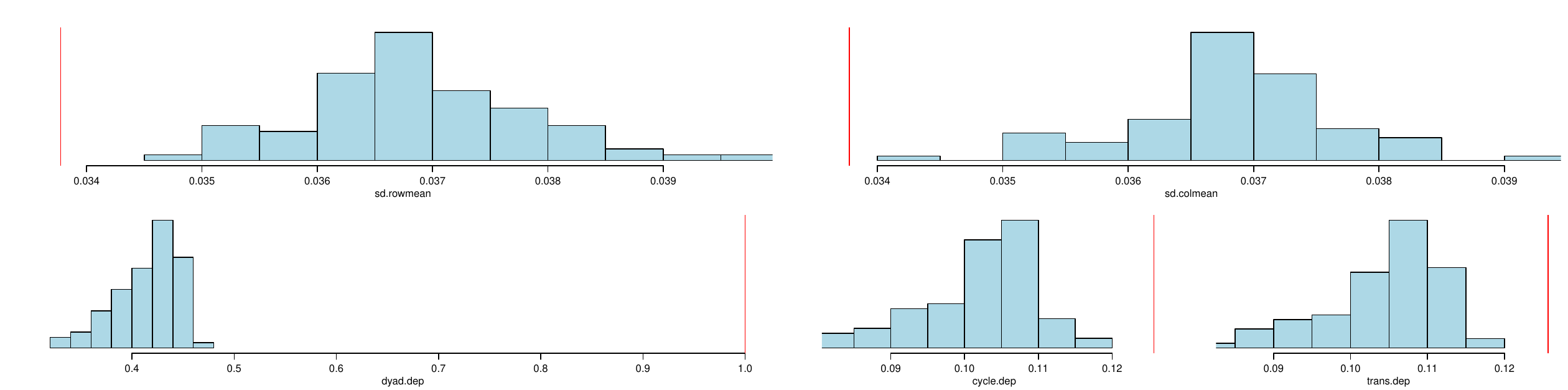}
    \caption*{a. Male Unit 1} 
        \end{center}

  \end{minipage}
  \begin{minipage}[b]{0.8\linewidth}
       \begin{center}

    \includegraphics[width=\linewidth]{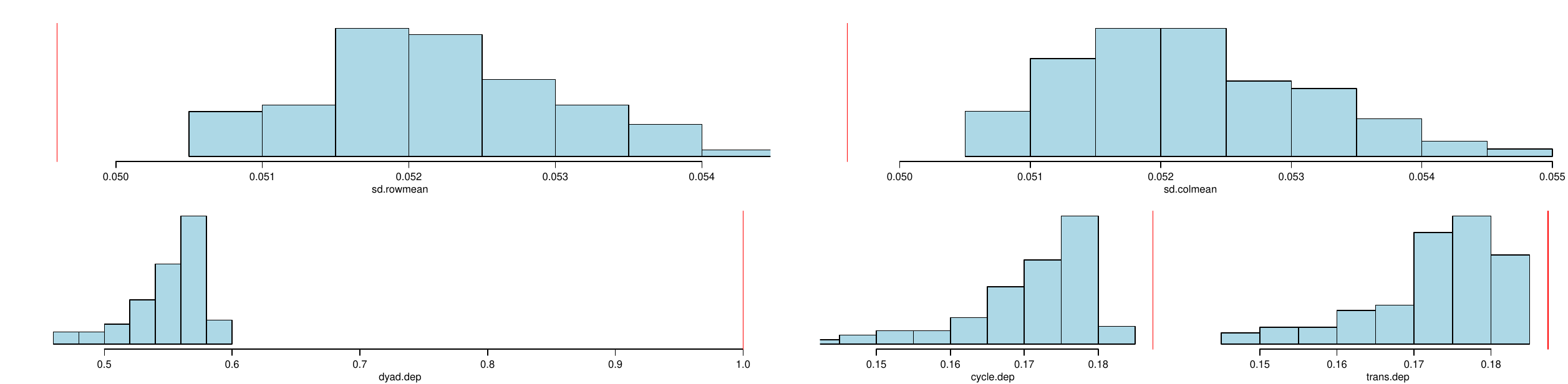}
    \caption*{b. Male Unit 2} 
        \end{center}

  \end{minipage}
    \begin{minipage}[b]{0.8\linewidth}
         \begin{center}
    \includegraphics[width=\linewidth]{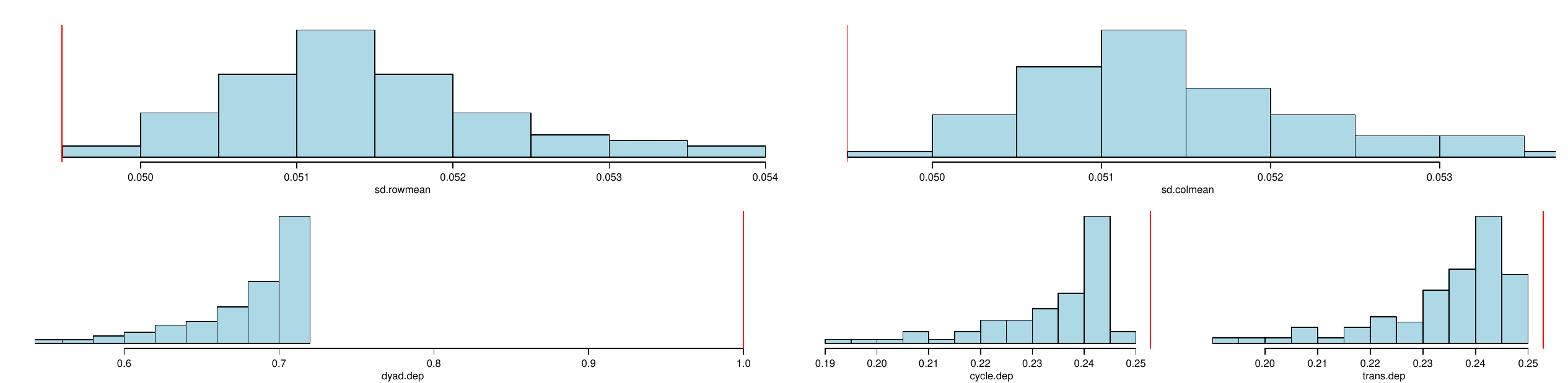}
    \caption*{c. Female Unit} 
          \end{center}
  \end{minipage}
  \begin{minipage}{15.5cm}
\footnotesize{
    {Notes: Figures (a,b, and c) illustrate a few relevant network properties from networks simulated from the fitted posterior distribution to the data. }}
    \end{minipage}
\end{figure}

\end{document}